\documentclass[12pt]{scrartcl} 
\usepackage{listings, xcolor}
\usepackage{amsmath}
\usepackage{amssymb}
\usepackage{graphicx}
\usepackage{caption}
\usepackage{subcaption}
\usepackage[T1]{fontenc}
\usepackage[latin1]{inputenc}
\usepackage[english]{babel}
\usepackage[latin1]{inputenc}
\usepackage{lmodern}
\usepackage{bbm}
\usepackage{float} 
\usepackage{epstopdf}
\usepackage{array}
\usepackage{ulem}

\usepackage{pifont}
\usepackage{threeparttable}
\usepackage{setspace}
\usepackage[round]{natbib}
\usepackage{babelbib}
\usepackage{url}
\usepackage{rotating}
\usepackage{dcolumn}
\newcolumntype{d}[1]{D{.}{.}{#1}}
\usepackage{graphicx}
\usepackage{subcaption}

\usepackage{booktabs}
\usepackage{tabularx}
\usepackage{indentfirst}
\usepackage{multirow}
\usepackage{framed}
\usepackage{enumitem}
\usepackage{lscape}
\usepackage{longtable}
\usepackage{tabulary}  
\usepackage{rotating}

\usepackage{chngcntr}
\counterwithin{figure}{section}
\counterwithin{table}{section}
\counterwithin{equation}{section}
\usepackage{setspace}

\usepackage{colortbl}

\usepackage[left=2.54cm,right=2.54cm,top=2.54cm,bottom=2.54cm,marginparwidth=1.75cm]{geometry}
\usepackage{scrlayer-scrpage}
\usepackage{ragged2e}
\usepackage[overload]{empheq}
\ofoot[\pagemark]{\pagemark}

\usepackage{tikz}
\usetikzlibrary{shapes,decorations,arrows,calc,arrows.meta,fit,positioning}
\tikzset{
	-Latex,auto,node distance =1 cm and 1 cm,semithick,
	state/.style ={ellipse, draw, minimum width = 0.7 cm},
	point/.style = {circle, draw, inner sep=0.04cm,fill,node contents={}},
	bidirected/.style={Latex-Latex,dashed},
	el/.style = {inner sep=2pt, align=left, sloped}
}
\tikzset{
	vertex/.style = {
		circle,
		fill            = black,
		outer sep = 2pt,
		inner sep = 1pt,
	}
}

\tikzstyle{line} = [draw, -latex']
\usetikzlibrary{arrows,decorations.markings,patterns,calc}
\usetikzlibrary{shadows}
\tikzset{shadow scale=1, shadow xshift=-.5ex, shadow yshift=-.5ex,
	opacity=.5, fill=black!50, every shadow}

\color{black}
\usepackage[titletoc,title]{appendix}

\usepackage[hidelinks,breaklinks=true]{hyperref}

\newcommand{\dt}{\frac{\partial}{\partial t}}

\newtheorem{assumption}{Assumption}[section]
\newtheorem{lemma}{Lemma}[section]
\newtheorem{theorem}{Theorem}[section]

\newtheorem{proposition}{Proposition}[section]

\newenvironment{proof}[1][Proof]{\begin{trivlist}
\item[\hskip \labelsep {\bfseries #1.}]}{\end{trivlist}}

\newcommand\IF{\mathbbm{IF}}

\usepackage{algorithm}
\usepackage{algpseudocode}

\usepackage{float}
\floatstyle{ruled}
\restylefloat{algorithm}

\usepackage{booktabs}
\newcommand{\diffcell}[3]{%
  \makebox[3.2em][r]{$#1$}.$#2$#3%
}

\usepackage{dcolumn}
\newcolumntype{d}[1]{D{.}{.}{#1}}
\newcolumntype{Y}{>{\centering\arraybackslash}X}
\newcolumntype{Z}{>{\flushleft\arraybackslash}X}
\newcommand{\titleinfo}{Sharp Bounds and Inference in Sample Selection Models with Treatment Endogeneity}
\title{\titleinfo}
\def\authora{Yingying Dong}
\def\authorb{Phillip Heiler}

\def\emaila{\href{mailto:pheiler@econ.au.dk}{pheiler@econ.au.dk}}
\def\emailb{\href{mailto:yyd@uci.edu}{yyd@uci.edu}}

\date{ \today }  

\begin{document}
	\begin{titlepage}
		\title{\titleinfo \thanks{ \scriptsize We would like to thank Otavio Bartalotti, Giovanni Mellace, Vitor Possebom, Vira Semenova, and the participants at the IAAE 2025 and USC-UCLA Mini Workshop for helpful comments and discussion. Phillip Heiler would also like to thank the Danish National Research Foundation (DNRF186) and the Independent Research Fund Denmark for support (DFF 3099-00077B).  All remaining errors are ours.}}
		\author{
   \authora\thanks{{\scriptsize University of California Irvine, Department of Economics, Irvine, CA, 92697, USA.
 email: \emailb}} \quad 
  \authorb\thanks{{\scriptsize Aarhus University. Department of Economics and Business Economics, Aarhus Center for Econometrics (ACE), TrygFonden's Centre for Child Research, Universitetsbyen 51, 8000 Aarhus C, Denmark. email: \emaila}} \quad 
      }
	
		\date{\underline{This version}: \today}
		\maketitle
		\thispagestyle{empty}
		
		\begin{abstract} \singlespacing	\small
    This paper provides partial identification and inference for treatment effects in nonparametric sample selection models with endogenous treatment and (weak) sample selection monotonicity. Outcomes are observed only for a non-randomly selected subsample and treatment is endogenous because of noncompliance with assignment. The proposed bounds for intensive margin treatment effects among compliers are sharp and tighter than those of \cite{chen2015bounds}. For inference, we develop semiparametrically efficient orthogonal moments and a debiased machine learning procedure that permits valid root-$n$ inference under high-dimensional covariates and/or flexible functional forms. Simulation results indicate good finite sample performance. Applications to Job Corps and the Oregon Health Insurance Experiment show that the method can deliver substantially tighter effect bounds and confidence intervals than existing alternatives.
		\end{abstract}
		\noindent \textbf{Keywords:} Debiased/double machine learning; Lee bounds;  Partial identification; Principal strata; Noncompliance; Sample selection \\
		\textbf{JEL classification}: C13, C14, C21
	\end{titlepage}
	
\setcounter{page}{1}
	\newpage

\section{Introduction}

This paper provides sharp bounds and inference for the causal effect of a binary treatment in settings with nonrandom sample selection and endogenous treatment take-up -- two prominent features of many empirical applications. Our paper can be viewed from two complementary perspectives: From the perspective of the sample selection literature, we extend 
intensive margin bounds from reduced-form effects of treatment assignment to complier treatment effects in the presence of treatment take-up endogeneity. From the perspective of the instrumental variable/local average treatment effect (IV/LATE) literature, we extend the standard LATE framework to allow for sample selection, or partial observability of the outcome. Thus, this paper targets the intensive margin analogue of LATE for always-selected compliers. The framework nests the widely used \cite{lee2009training} 
bounds 
as a special case. Identification is obtained under standard  IV/LATE assumptions together with weak sample selection monotonicity for treatment compliers. We also develop semiparametrically efficient orthogonal moments and debiased machine learning (DML) estimators that allow valid root-$n$ inference on the treatment effect and its sharp bounds under high-dimensional covariates and flexible functional forms.

Partial identification of causal effects under sample selection or partial observability has been extensively studied. 
Much of the existing work, including \cite{lee2009training}, derives bounds for the effect of a randomly assigned instrument without accounting for endogenous treatment take-up, and is therefore limited to intention-to-treat (ITT) effects when assignment and treatment receipt differ, see, e.g., \cite{horowitz2000nonparametric}, \cite{zhang2003estimation}, \cite{imai2008sharp}, \cite{huber2015sharp}, \cite{heiler2024heterogeneous}, \cite{heiler2024treatmentevaluationintensiveextensive}, \cite{sun2024partiallyidentifiedheterogeneoustreatment}, or \cite{lee2025leeboundscontinuoustreatment}.\footnote{\cite{horowitz2000nonparametric} propose nonparametric bounds for randomized experiments with missing covariate and outcome data. \cite{zhang2003estimation} develop bounds for the survivor average causal effect, targeting individuals who are always selected regardless of treatment status using principal stratification \citep{frangakis2002principal}. They provide both assumption-free bounds and bounds under sample selection monotonicity and stochastic dominance assumptions. Their monotonicity based bounds are now regularly referred to as ``Lee bounds'' or ``Zhang-Rubin-Lee bounds'' \citep{andersen2023guide}. \cite{imai2008sharp} proves the sharpness of the \cite{zhang2003estimation} bounds and extends them to quantile treatment effects. \cite{huber2015sharp} derive bounds for additional subpopulations, such as individuals selected only under treatment or only under control, and for the observed subpopulation. \cite{heiler2024heterogeneous} provides heterogeneous treatment effect bounds. \cite{heiler2024treatmentevaluationintensiveextensive} provide identification for principal strata and extensive and intensive margin bounds. \cite{sun2024partiallyidentifiedheterogeneoustreatment} consider bounds for type-specific potential outcome means, assuming exogenous treatment assignment and using an excluded variable that shifts selection but not potential outcomes. \cite{lee2025leeboundscontinuoustreatment} consider intensive margin bounds for continuous treatments.}

In practice, however, realized treatment take-up is often endogenous in both experimental and quasi-experimental settings as assignment, eligibility, or encouragement may shift participation without perfectly determining treatment. For instance, in our empirical applications a nontrivial fraction of individuals do not comply with the assignment. In the National Job Corps (JC) Study, 26.2\% of the individuals provided with access to training did not enroll in JC, while 4.4\% of individuals assigned to the control group eventually enrolled in JC after randomization \citep{schochet2008does}. In the Oregon Health Insurance Experiment (OHIE), only 30\% of those eligible to apply for Medicaid successfully enrolled, and a small share of controls also enrolled \citep{finkelstein2012oregon}. Noncompliance of these magnitudes can substantially affect treatment effect bounds. 

Without sample selection, ITT and complier treatment effect differ only by a scaling factor, the share of compliers \citep{angrist1996identification}. We show that this simple relationship breaks down under sample selection: Sharp treatment effect bounds cannot be obtained by merely scaling sharp bounds of ITT effects via primitive probabilities. Our analysis therefore fills an important gap by delivering identification and inference for sharp treatment effect bounds in the more realistic case where both endogenous sample selection and noncompliance (implying treatment endogeneity) are present.

Other papers have leveraged IV/LATE-type assumptions for partial identification of various parameters in sample selection models: 
\cite{lechner2010partial} derive bounds for mean and quantile treatment effects for observable subpopulations, such as the ``treated and selected''. 
\cite{christelis2019partial} provide bounds on potential outcome distributions under selected samples and a monotone IV assumption \citep{manski1997monotone}. 

Within this literature, the closest contribution to ours is Chen and Flores (2015; CF). 
To our  knowledge, CF is the only paper studying bounds in a setting with both sample selection and noncompliance that also provides statistical inference. Unlike the CF bounds, our bounds are sharp. Related identification contributions include \cite{imai2007identification}\footnote{The derivation of bounds in \cite{imai2007identification} is based on a principal-strata approach that is complementary to ours. However, the corresponding bound expressions imported from \cite{imai2008sharp} appear to contain errors in the expressions for $Q_0$ and $Q_1$ in RESULT 1 on page 4. Hence, the resulting bounds cannot be reconciled with ours. Moreover,
\cite{imai2007identification} does not develop estimation or inference, and
sharpness is not formally established for the noncompliance-and-selection
parameter. Sharpness is only indirectly suggested by the arguments in
\cite{imai2008sharp}, which is formulated for ITT bounds under different
assumptions. By contrast, we provide corrected expressions for the bounds, prove
their sharpness, compare them with CF bounds, and develop a full estimation and
inference framework that extends to settings with both discrete and continuous
covariates and weak sample selection monotonicity.} and \cite{bartalotti2023identifying}.\footnote{\cite{bartalotti2023identifying}
study treatment-effect bounds under sample selection within a latent-index marginal treatment effect (MTE)
framework. Their Proposition 5 derives sharp bounds for always-observed LATEs
with multi-valued discrete instruments. In the binary-instrument, no-covariate
case, their latent-index interval $p_0<V\le p_1$ coincides with our complier
stratum as defined in Section \ref{sec:bounds}. Under the same strong sample selection monotonicity, their Proposition 5 bounds are algebraically equivalent
to our basic sharp bounds without covariates in Section \ref{sec:bounds}. They informally discuss plug-in estimation for MTE bounds without covariates, but do not develop accompanying inference guarantees for the discrete-instrument always-observed LATE bounds in Proposition 5.}
Relative to all three papers, we
contribute along multiple dimensions: First, we allow for the inclusion of covariates, thereby accommodating unconfounded rather than independently assigned instruments, which is often more realistic in quasi-experimental settings. 
Even when the instrument is independently assigned, incorporating covariates can improve efficiency, much as in standard average treatment effect (ATE) estimation. Second, we relax their individual-level strong sample selection monotonicity to weak, i.e.,~covariate-dependent, monotonicity. This is empirically relevant: \cite{semenova2025generalized} shows that strong sample selection monotonicity can be rejected in JC. We also find substantial violations in the OHIE. Third, we provide a simple covariate-profiling procedure for the targeted complier population at the intensive margin. This helps to assess both the external validity of the estimates and the substantive relevance of the subpopulation to which they apply, paralleling complier profiling approaches from the LATE literature \citep{abadie2003semiparametric,angrist2004treatmenteffectheterogeneity,singh2023doublerobustnessforcomplier}. Fourth, we propose a semiparametric DML procedure that 
delivers valid root-$n$ inference using generic nonparametric or machine learning methods for the nuisance components, allowing for potentially high-dimensional covariates and/or unknown functional forms. In strong contrast to \cite{chen2015bounds}, our estimators are asymptotically linear, and thus confidence intervals have the standard ``estimate $\pm$ standard error $\times$ critical value'' form and do not rely on alternative concepts such as half-median-unbiasedness \citep{chernozhukov2013intersection}.

Recent research has generalized estimation and inference of Lee-type bounds to high-dimensional or otherwise flexible settings using semiparametric DML methodology. 
\cite{semenova2025generalized} extends \cite{lee2009training} intensive margin bounds to multiple outcomes and provides orthogonal moments for debiased estimation. \cite{heiler2024heterogeneous} develops a DML procedure to obtain corresponding heterogeneous treatment effect bounds and inference under local misspecification. \cite{heiler2024treatmentevaluationintensiveextensive} provide refined DML methods for outer identification regions for intensive and extensive margin effects that admit regular inference in a larger class of DGPs. These contributions all focus on inference for (conditional) ITT effects. 
We instead construct semiparametrically efficient orthogonal moment functions for the sharp bounds on the treatment effect for compliers at the intensive margin. They nest several leading moments from the literature, including for ITT effect bounds (Semenova 2025, under perfect compliance), LATE (Fr\"olich 2007\nocite{frolich2007nonparametric}, in the absence of sample selection), and ATE (Hahn 1998\nocite{hahn1998role}, under perfect compliance and no sample selection). 
%
%

A crucial difference to the literature on DML inference on Lee-type ITT bounds is that our sharp bounds require trimming based on a population whose quantiles are identified only \textit{indirectly} via inversion of a compliance weighted cumulative distribution in the sense of \cite{abadie2003semiparametric}. The latter by itself is a linear, but not necessarily convex, combination of two conditional CDFs. As a result, debiasing relies on an implicit function characterization. We leverage this representation to impose learning rate requirements only on primary, reduced-form type conditional CDFs and conditional means instead of the counterfactual conditional quantiles used for trimming the relevant observed strata distributions. This also has the advantage of making it straightforward in practice to guarantee non-crossing properties for the conditional quantiles involved, e.g., via inversion of isotonic (distributional) regression \citep{henzi2021isotonic}. Our implementation also follows this inversion strategy. Monte Carlo simulations suggest good coverage and power properties of the inference method in finite samples. 

We apply and compare our method to evaluate the effects of Job Corps on earnings and Medicaid on healthcare utilization. Under the strong sample selection monotonicity assumption, our sharp bounds tighten the existing \cite{chen2015bounds} bounds by 68.4\% (JC) and 31.1\% to 69.4\% (OHIE, depending on outcome). 
For example, the sharp bounds without covariates for the intensive margin complier effect of JC on hourly wages are now $[0.019,0.067]$ compared to $[-0.022,0.130]$ for CF. 
Similar reductions apply to the 95\%-confidence intervals for the effects. Their widths are reduced by 46.7\% (JC) and 25.8\% to 36.9\% (OHIE, depending on outcome). 
Under weak sample selection monotonicity, our sharp DML bounds still tend to be contained by the restrictive CF bounds for most outcomes and, despite relying on weaker assumptions, continue to yield shorter confidence intervals of around 7.9\% (JC) and 7.3\% to 42.3\% (OHIE, depending on outcome).

The rest of the paper is organized as follows: Section \ref{sec:bounds} derives the basic bounds without covariates and compares them with Lee and CF bounds. Section \ref{sec:bounds_covariates} extends these bounds to incorporate covariates. Section \ref{sec:inference} develops estimation and inference. 
Section \ref{sec:profiling} presents the profiling method for the targeted complier population at the intensive margin. Section \ref{sec:JC} provides the JC application. 
Section \ref{sec:conclusion} concludes. The OHIE application, Monte Carlo simulations as well as proofs and supplementary derivations are in the Supplementary Appendix.

\section{Basic Bounds Without Covariates} \label{sec:bounds}

\subsection{Model and Identification of Basic Bounds}

\label{subsec:baseline_bounds}

Let $D\in\{0,1\}$ denote a binary treatment and $Z\in\{0,1\}$ a binary instrument. Let $S\in\{0,1\}$ be a sample selection indicator for a continuous outcome $Y\in\mathcal{Y}\subset\mathbb{R}$ that is observed only if $S=1$. The observed data are independent draws of $(SY,S,D,Z)$.
For $d,z\in\{0,1\}$, let $Y_{d,z}$ and $S_{d,z}$ denote, respectively, the potential outcome and selection indicator that would be observed if treatment and the instrument were set exogenously to $(d,z)$. Let $D_{z}$ denote the potential treatment if $Z$ were set to $z$. We impose the following IV assumptions: 

\begin{assumption}[IV]
\textbf{\label{ass:IV}}
\end{assumption}

\ref{ass:IV}.1 \textit{(Exclusion) For $d=0,1$, $Y_{d,1}=Y_{d,0}=:Y_{d}$ and $S_{d,1}=S_{d,0}=:S_{d}$.}

\ref{ass:IV}.2 \textit{(Independence) $(Y_{1},Y_{0},S_{1},S_{0},D_{1},D_{0})\perp Z$.}

\ref{ass:IV}.3 \textit{(Strong Treatment Response Monotonicity) $P(D_{1}\geq D_{0})=1$.}

\ref{ass:IV}.4 \textit{(First Stage) $P(D_{1}\ne D_{0})>0$.}

\ref{ass:IV}.5 \textit{(Non-trivial Assignment) $P(Z=1)\in(0,1)$.}

\medskip

These assumptions are the standard IV/LATE assumptions \citep{imbens1994identification,angrist1996identification} with the added requirement that instrument $Z$ is excluded from directly causing selection (2.1.1) and is allocated independently of potential selection (2.1.2). This matches classic encouragement or eligibility designs with selected samples where it is credible that the instrument causes selection and outcome only indirectly through the treatment. 

 
Within this framework, types and causal effects can be decomposed into extensive and intensive margins. Point identification of the extensive margin effect is possible only for \textit{compliers} ${D_{1}>D_{0}}$. In particular, the extensive-margin effect for compliers is
\begin{equation}
\theta_{E}:=E[S_{1}-S_{0}| D_{1}>D_{0}],
\end{equation}
which under Assumption \ref{ass:IV} is point identified by the usual Wald ratio
\begin{equation}
\theta_{E}=\frac{E[S| Z=1]-E[S| Z=0]}{E[D| Z=1]-E[D| Z=0]}. \label{eq:thetaE}
\end{equation}
Our main target parameter is the average causal effect for compliers at the intensive margin, i.e.,~the compliers $D_1 > D_0$ who are selected regardless of treatment response $S_1 = S_0 = 1$. 
We refer to this parameter as the \textit{always-selected} or \textit{survivor local average treatment effect} (SLATE):
\begin{equation}
\theta_{SLATE}:=E[Y_{1}-Y_{0}| S_{1}=S_{0}=1,D_{1}>D_{0}]. \label{eq:SLATE_def}
\end{equation}

\medskip
\par{\textbf{Remark 1.}}
\textit{Without noncompliance, $\theta_{SLATE}$ reduces to the intensive margin/always-selected ATE in \cite{lee2009training}. Without sample selection, $\theta_{SLATE}$  reduces to the LATE \citep{imbens1994identification}. Without both, it collapses to the ATE. $\theta_{SLATE}$ is the natural intensive margin analogue of the LATE: It focuses on always-selected compliers, whose treatment status is changed by the instrument while their selection status is not. It is therefore a policy effect on the level of the outcome for a principal stratum, uncontaminated by selection into the observed sample or noncompliance. The policy relevance of $\theta_{SLATE}$ has to be discussed on a case-by-case basis. We note that, in what follows, our assumptions will be able to
identify the share of always-selected compliers as well as their observable characteristics. This can be used to compare their features with those of the overall population or other relevant groups, thereby assessing how substantive the parameter is in a given empirical setting. We provide all details in Section \ref{sec:profiling} and empirical examples in Section \ref{sec:JC} and Appendix \ref{sec:OHIE}.}
\medskip

Under selected samples $\theta_{SLATE}$ is only partially identified because the joint selection status $(S_{1},S_{0})$ is not observed. In particular, even for compliers who are selected only under treatment $(S_{1}=1,S_{0}=0)$ or only under control $(S_{1}=0,S_{0}=1)$, either $Y_{0}$ or $Y_{1}$ is never observed, so their treatment effects are not identified without additional (untestable) assumptions. 
To make progress, we first state some standard IV identities under Assumption \ref{ass:IV}:
\begin{lemma}
\label{lem:IV_identities}
Suppose Assumption \ref{ass:IV} holds. Then, for any integrable function $g: \mathcal{Y}\rightarrow \mathbb{R}$,
{\begin{align}
P(D_{1}>D_{0})&=E[D| Z=1]-E[D| Z=0], \label{eq:PrC}\\
P(S_{1}=1,D_{1}>D_{0})&=E[DS| Z=1]-E[DS| Z=0], \label{eq:PrS1C}\\
E[g(Y_{1})| S_{1}=1,D_{1}>D_{0}]&=\frac{E[DSg(Y)| Z=1]-E[DSg(Y)| Z=0]}{E[DS| Z=1]-E[DS| Z=0]}. \label{eq:EY1S1C}
\end{align}}
Moreover, replacing $D$ with $(D-1)$ in \eqref{eq:PrS1C} and \eqref{eq:EY1S1C} identifies $P(S_{0}=1,D_{1}>D_{0})$ and $E[g(Y_{0})| S_{0}=1,D_{1}>D_{0}]$, respectively.
\end{lemma}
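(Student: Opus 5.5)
The plan is to write every observed product in terms of potential quantities and then let independence and monotonicity cancel everything except the compliers. The key identity is
\[
DSg(Y)=D_ZS_{D_Z}g(Y_{D_Z}).
\]
This holds because, under Assumption \ref{ass:IV}.1, $D=D_Z$, $S=S_D$ and $Y=Y_D$ on $\{S=1\}$. The identity is well defined even though $Y$ is unobserved when $S=0$, since the factor $S$ kills those terms.

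On $\{D=1\}$ we have $D_Z=1$, so
\[
DSg(Y)=D_ZS_1g(Y_1).
\]
Conditioning on $Z=z$ and using the independence in Assumption \ref{ass:IV}.2 gives
\[
E[DSg(Y)\mid Z=z]=E[D_zS_1g(Y_1)].
\]

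Next I take the difference over $z$ and invoke monotonicity (Assumption \ref{ass:IV}.3). Since $D_1-D_0=\mathbbm{1}\{D_1>D_0\}$ almost surely,
\[
E[DSg(Y)\mid Z=1]-E[DSg(Y)\mid Z=0]=E\big[S_1g(Y_1)\mathbbm{1}\{D_1>D_0\}\big].
\]
Setting $g\equiv1$ yields \eqref{eq:PrS1C}. Setting $g\equiv1$ and dropping $S$ (equivalently, the same argument with $D$ alone) yields \eqref{eq:PrC}, which is just the standard first-stage identity.

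For \eqref{eq:EY1S1C}, I divide the general-$g$ expression by the $g\equiv1$ expression, using
\[
E[S_1g(Y_1)\mathbbm{1}\{C\}]=E[g(Y_1)\mid S_1=1,C]\,P(S_1=1,C),
\]
where $C=\{D_1>D_0\}$. Assumption \ref{ass:IV}.5 guarantees that the conditioning on $Z=z$ is well defined. I also need $P(S_1=1,C)>0$ for the ratio to make sense. Assumption \ref{ass:IV}.4 alone does not give this, so I will treat it as an implicit requirement of the statement; this positivity condition is the only delicate point. Integrability of $g(Y_1)$ on the relevant event ensures the expectations exist.

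For the control versions, replacing $D$ by $D-1$ means working with $(1-D)$ up to sign. On $\{D=0\}$ we have
\[
(1-D)Sg(Y)=(1-D_Z)S_0g(Y_0),
\]
and therefore
\[
E[(D-1)Sg(Y)\mid Z=1]-E[(D-1)Sg(Y)\mid Z=0]=E\big[(D_1-D_0)S_0g(Y_0)\big]=E\big[S_0g(Y_0)\mathbbm{1}\{C\}\big].
\]
The sign flip from $1-D$ to $D-1$ cancels the minus sign that comes from $(1-D_1)-(1-D_0)=-(D_1-D_0)$. The same ratio argument then gives $P(S_0=1,C)$ and $E[g(Y_0)\mid S_0=1,C]$.
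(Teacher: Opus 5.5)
Your proof is correct and is the standard Imbens--Angrist/Abadie argument that the paper relies on; the paper calls these ``standard IV identities'' and does not write out a proof. The argument runs as follows: write $DSg(Y)=D_ZS_1g(Y_1)$ and $(D-1)Sg(Y)=(D_Z-1)S_0g(Y_0)$, use independence to replace conditioning on $Z=z$ by $D_z$, and use monotonicity to turn $D_1-D_0$ into $\mathbbm{1}\{D_1>D_0\}$. Your positivity caveat is also right: where the paper uses the lemma it is guaranteed by $\pi_{ac}>0$, since $P(S_d=1,D_1>D_0)\ge\pi_{ac}$ for $d=0,1$.
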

In particular, taking $g(Y)=\mathbbm{1}(Y\leq y)$ yields conditional CDFs $F_{Y_{1}| S_{1}=1,D_{1}>D_{0}}(y)$ and $F_{Y_{0}| S_{0}=1,D_{1}>D_{0}}(y)$, while taking $g(Y)=Y$ yields the corresponding conditional means.

Next, we introduce the complier strata that underlie the SLATE parameter. Among compliers ($D_{1}>D_{0}$), define

\begin{alignat}{2}
ac&:=\{S_{0}=S_{1}=1,\;D_{1}>D_{0}\} &\quad\text{always-selected compliers} \\
dc&:=\{S_{1}<S_{0},\;D_{1}>D_{0}\} &\quad\text{treatment-de-selected compliers} \\
cc&:=\{S_{1}>S_{0},\;D_{1}>D_{0}\} &\quad\text{treatment-selected compliers}
\end{alignat}
Rewriting \eqref{eq:SLATE_def} in terms of these strata yields

\begin{equation}
\theta_{SLATE}=E[Y_{1}| ac]-E[Y_{0}| ac].
\end{equation}
The set $\{S_{0}=1,D_{1}>D_{0}\}$ combines $ac$ and $dc$, while $\{S_{1}=1,D_{1}>D_{0}\}$ combines $ac$ and $cc$. Lemma \ref{lem:IV_identities} allows us to identify $P(S_{0}=1,D_{1}>D_{0})$ and $P(S_{1}=1,D_{1}>D_{0})$, which yield two linear restrictions on the shares of three unknown types ($ac$, $dc$, $cc$). To fully identify these shares, we further impose a strong sample selection monotonicity condition for the compliers:

\begin{assumption}[\textbf{Strong Sample Selection Monotonicity for Compliers}]
Either $P(S_{1}\geq S_{0}| D_{1}>D_{0})=1$ or $P(S_{1}\leq S_{0}| D_{1}>D_{0})=1$. \label{ass:monotoneS}
\end{assumption}
Assumption \ref{ass:monotoneS} imposes a common direction of treatment-induced sample selection within the complier stratum $D_{1} > D_{0}$. In particular, the treatment is assumed to either weakly increase or weakly decrease selection for all compliers. It cannot move some compliers into the sample and others out of it, effectively ruling out either $dc$ or $cc$ types.\footnote{This restriction is implied, for example, by an additively separable selection equation, see, e.g., \cite{vytlacil2002independence} for treatment selection or \cite{heiler2024treatmentevaluationintensiveextensive} for sample selection. However, these models impose monotonicity for the full population and are thus stronger than Assumption \ref{ass:monotoneS}.} Note that no analogous restriction is needed for strata with $D_{1} = D_{0}$ (always-takers and never-takers) as their treatment status does not vary with the instrument. In particular, never-takers satisfy $D_0=D_1=0$, so their observed selection status is governed only by $S_0$; always-takers satisfy $D_0=D_1=1$, so their observed selection status is governed only by $S_1$. The counterfactual selection status under the unrealized treatment therefore does not enter the identification argument.

We now provide identification for the case where $P(S_{1}\geq S_{0}| D_{1}>D_{0})=1$. The case where $P(S_{1}\leq S_{0}| D_{1}>D_{0})=1$ can be handled analogously.
Under Assumption \ref{ass:monotoneS}, the stratum ${S_{0}=1,D_{1}>D_{0}}$ consists only of always-selected compliers ($ac$), while ${S_{1}=1,D_{1}>D_{0}}$ combines $ac$ and $cc$. Lemma \ref{lem:IV_identities} then implies that

\begin{equation}
\beta_{0}:=E[Y_{0}| ac]=E[Y_{0}| S_{0}=1,D_{1}>D_{0}]
\end{equation}
is point identified by taking $g(Y)=Y$ and replacing $D$ with $(D-1)$ in \eqref{eq:EY1S1C}.
In contrast, $E[Y_{1}| ac]$ is not point identified. However, Lemma \ref{lem:IV_identities} and Assumption \ref{ass:monotoneS} allow us to identify the mixture distribution of $Y_{1}$ for $ac$ and $cc$,

\begin{equation}
F_{Y_{1}| ac\cup cc}(y):=F_{Y_{1}| S_{1}=1,D_{1}>D_{0}}(y),
\end{equation}
as well as the frequencies of $ac$ and $cc$ within this mixture. Let

\begin{equation}
\pi_{ac}:=P(S_{0}=S_{1}=1,D_{1}>D_{0}),\quad \pi_{cc}:=P(S_{1}>S_{0},D_{1}>D_{0}).
\end{equation}
Then, under Assumption \ref{ass:monotoneS},

\begin{equation}
\pi_{ac}=P(S_{0}=1,D_{1}>D_{0}),\qquad \pi_{ac}+\pi_{cc}=P(S_{1}=1,D_{1}>D_{0}),
\end{equation}
and both probabilities are identified. Moreover, define

\begin{equation}
p:=\frac{\pi_{ac}}{\pi_{ac}+\pi_{cc}}.
\end{equation}
$p$ is equal to the fraction of always-selected compliers in  mixture ${S_{1}=1,D_{1}>D_{0}}$. Since we only observe this mixture, sharp bounds are obtained by assigning $ac$ to the ``best'' or ``worst'' location within the mixture distribution of $Y_{1}$. Let $F_{1}(y):=F_{Y_{1}| S_{1}=1,D_{1}>D_{0}}(y)$ and $Q_{1}(u):=\inf\{y\in\mathcal{Y}:F_{1}(y)\geq u\}$ denote the identified CDF and quantile function of this mixture. 

To simplify exposition for the remainder of this section, we maintain the following regularity condition: the relevant identified distributions entering the trimming
formulas have continuous, strictly increasing CDFs. This rules out mass points at
the relevant trimming cutoffs and allows the bounds to be written as ordinary
trimmed means. This condition is not needed for identification per se.\footnote{A weaker local regularity condition would suffice: for each result below, it is enough that the corresponding identified CDF be continuous and strictly increasing in a neighborhood of the relevant trimming threshold(s). Without this regularity, the same conclusions also hold once the bounds are written in exact fractional-trimming (quantile-integral)
form.}

The lower and upper bounds for $E[Y_{1}| ac]$ are
\begin{align}
\beta_{L,1} &:=E[Y_{1}| Y_{1}\leq Q_{1}(p),ac \cup cc]=E[Y_{1}| Y_{1}\leq Q_{1}(p),S_{1}=1,D_{1}>D_{0}], \label{eq:betaL1} \\
\beta_{U,1} &:=E[Y_{1}| Y_{1}\geq Q_{1}(1-p),ac \cup  cc]=E[Y_{1}| Y_{1}\geq Q_{1}(1-p),S_{1}=1,D_{1}>D_{0}], \label{eq:betaU1}
\end{align}
i.e., the trimmed means obtained by placing all $ac$ individuals in the bottom $p$ fraction (for the lower bound) or the top $p$ fraction (for the upper bound) of the mixture. We obtain the following proposition:

\begin{proposition}[Sharpness]
\label{prop:sharp_bounds}
Suppose Assumptions \ref{ass:IV}-- \ref{ass:monotoneS} hold and $\pi_{ac}>0$. Then,

\begin{equation*}
\beta_{L,1}\leq E[Y_{1}| ac]\leq \beta_{U,1}.
\end{equation*}
Moreover, $\beta_{L,1}$ and $\beta_{U,1}$ are sharp: They are, respectively, the largest lower bound and the smallest upper bound for $E[Y_{1}| ac]$ that are consistent with the observed data and Assumptions \ref{ass:IV}--\ref{ass:monotoneS}. Any other valid bounds under these assumptions contain $[\beta_{L,1},\beta_{U,1}]$.
\end{proposition}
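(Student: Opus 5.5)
The proof has two parts: validity, which is a standard trimming argument for mixtures, and sharpness, which requires an explicit construction. Throughout we take the case $P(S_1\ge S_0\mid D_1>D_0)=1$.

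\emph{Validity.} Under Assumption \ref{ass:monotoneS} the stratum $\{S_1=1,D_1>D_0\}$ equals $ac\cup cc$. Lemma \ref{lem:IV_identities} identifies its outcome distribution $F_1$ and the mixing weight $p=\pi_{ac}/(\pi_{ac}+\pi_{cc})\in(0,1]$, which is positive because $\pi_{ac}>0$. We can therefore write
\[
F_1 = p\,G_{ac} + (1-p)\,G_{cc},
\]
where $G_{ac}$ and $G_{cc}$ are the (unidentified) distributions of $Y_1$ given $ac$ and given $cc$. This forces $p\,G_{ac}\le F_1$ as measures, so $G_{ac}$ has a density $h$ with respect to $F_1$ satisfying $0\le h\le 1/p$ and $\int h\,dF_1=1$. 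Hence
\[
E[Y_1\mid ac]=\int y\,h(y)\,dF_1(y).
\]
Minimizing this over all admissible $h$ is a bathtub (Neyman--Pearson) problem. The minimizer puts the maximal weight $1/p$ on $\{y\le Q_1(p)\}$ and zero weight elsewhere; continuity of $F_1$ ensures this set has $F_1$-mass exactly $p$. Concretely, for any admissible $h$,
\[
\int (y-Q_1(p))\bigl(h(y)-p^{-1}\mathbbm{1}\{y\le Q_1(p)\}\bigr)\,dF_1(y)\ge 0,
\]
because the integrand is pointwise nonnegative and the two weight functions have equal total mass. The left-hand side equals $E[Y_1\mid ac]-\beta_{L,1}$, which gives $E[Y_1\mid ac]\ge\beta_{L,1}$. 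The upper bound follows symmetrically using $\{y\ge Q_1(1-p)\}$.

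\emph{Sharpness.} Here we must build a full joint distribution of $(Y_1,Y_0,S_1,S_0,D_1,D_0,Z)$ that satisfies Assumptions \ref{ass:IV}--\ref{ass:monotoneS}, reproduces the observed law of $(SY,S,D,Z)$, and attains $E[Y_1\mid ac]=\beta_{L,1}$; the upper bound is analogous.
\begin{enumerate}
\item Draw $Z$ independently with its observed marginal.
\item Set the type shares (always-takers, never-takers, compliers split into $ac$, $cc$, and compliers with $S_1=S_0=0$) from their identified values. The always-taker shares of $S_1$ and the never-taker shares of $S_0$ come from the cells $(D=1,Z=0)$ and $(D=0,Z=1)$.
\item Give always-takers their observed $Y_1$ distribution from $\{D=1,Z=0,S=1\}$, and never-takers their $Y_0$ distribution from $\{D=0,Z=1,S=1\}$. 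Their unobserved counterfactual $(S,Y)$ components can be set arbitrarily, and compliers with $S_1=S_0=0$ can receive any outcome law.
\item For $ac$, let $Y_1$ follow $F_1$ truncated to $\{y\le Q_1(p)\}$ and $Y_0$ follow the identified law of $Y_0\mid S_0=1,D_1>D_0$. For $cc$, let $Y_1$ follow $F_1$ truncated to $\{y>Q_1(p)\}$.
\end{enumerate}
We then check cell by cell that the implied observed distributions in each $(D,Z)$ cell match the data. The $(D=1,Z=1)$ cell mixes always-takers with $ac\cup cc$, and its $Y_1$ law is $F_1$ by construction. The $(D=0,Z=0)$ cell mixes never-takers with $ac$, whose $Y_0$ law is the identified one. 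This matching is exactly the reverse of the identities in Lemma \ref{lem:IV_identities}.

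\emph{Main obstacle.} The delicate part is the sharpness step: making sure the constructed type shares and the never-taker and always-taker outcome laws are genuine nonnegative probability measures. This requires the identified differences, such as
\[
E[DS\mathbbm{1}(Y\le y)\mid Z=1]-E[DS\mathbbm{1}(Y\le y)\mid Z=0],
\]
to be nondecreasing in $y$, that is, the observed data must be compatible with the model. I would state this as the maintained premise implicit in ``consistent with the observed data'' and simply invoke it.

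For the final claim, any valid bound must hold under every admissible data-generating process, including the constructed one. So any valid lower bound is at most $\beta_{L,1}$ and any valid upper bound is at least $\beta_{U,1}$. Since the admissible $h$ form a convex set, intermediate values are also attained by convex mixtures of the two extremal constructions, so the identified set is the full interval.
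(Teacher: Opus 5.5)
Your proof is correct, but it is organized differently from the paper's.

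\textbf{Validity.} The paper invokes Horowitz and Manski (1995, Corollary 4.1) on the mixture $F_{Y_1\mid ac\cup cc}=pF_{Y_1\mid ac}+(1-p)F_{Y_1\mid cc}$. Your density-ratio formulation ($0\le h\le 1/p$, $\int h\,dF_1=1$) together with the pointwise-nonnegativity (bathtub) inequality proves that corollary from scratch. It also absorbs the paper's separate treatment of $p=1$, where $h\equiv 1$ is forced.

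\textbf{Sharpness.} The paper fixes $h_l$, the bottom-$p$ truncation of $F_1$, and $h_u$, the top-$(1-p)$ truncation. It sets $F_{Y_1\mid ac}=\lambda h_l+(1-\lambda)h_u$ with $\lambda\in[p,1]$ to sweep out $[\beta_{L,1},E[Y_1\mid ac\cup cc]]$, and handles the upper half of the interval symmetrically. You instead attain the two endpoints and fill the interior by convexity of the set of admissible densities. This is shorter and covers the whole interval in one step.

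The paper also stops once the constructed $F_{Y_1\mid ac}$ and $F_{Y_1\mid cc}$ reproduce the identified mixture, and then asserts compatibility. Your full structural construction, with the cell-by-cell check over $(D,Z)$, makes the observational-equivalence step explicit.

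\textbf{Your flagged obstacle.} This is not a genuine gap. Sharpness is defined relative to an observed law generated by at least one structure satisfying Assumptions \ref{ass:IV}--\ref{ass:monotoneS}. The differences in Lemma \ref{lem:IV_identities} therefore equal true latent (sub)distributions and are automatically proper. A simpler route: start from that true structure and modify only the law of $Y_1$ within $ac\cup cc$. The cell-by-cell verification then reduces to noting that $F_{Y_1\mid ac\cup cc}$ and all type shares are unchanged.
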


Combining the point-identified $\beta_{0}$ with these bounds yields sharp bounds for $\theta_{SLATE}$:

\begin{equation}
\beta_{L}:=\beta_{L,1}-\beta_{0},\qquad \beta_{U}:=\beta_{U,1}-\beta_{0}. \label{eq:theta_bounds}
\end{equation}
Using Assumption \ref{ass:monotoneS} and Lemma \ref{lem:IV_identities} with $g(Y)=Y\mathbbm{1}(Y\leq Q_{1}(p))$ and $g(Y)=Y\mathbbm{1}(Y\geq Q_{1}(1-p))$, all components of $\beta_{L}$ and $\beta_{U}$, namely $\pi_{ac}$, $p$, $F_{1}$, $Q_{1}$, and the trimmed means are functionals of the observed distribution of $(SY,S,D,Z)$. In particular,

{\begin{align}
\beta_{L}=\frac{1}{\pi_{ac}}\left\{
\begin{array}{c}
E\left[DSY\mathbbm{1}(Y\leq Q_{1}(p))+(1-D)SY| Z=1\right]\\
-\,E\left[DSY\mathbbm{1}(Y\leq Q_{1}(p))+(1-D)SY| Z=0\right]
\end{array}
\right\}, \label{eq:betaL} \\
\beta_{U}=\frac{1}{\pi_{ac}}\left\{
\begin{array}{c}
E\left[DSY\mathbbm{1}(Y\geq Q_{1}(1-p))+(1-D)SY| Z=1\right]\\
 -\,E\left[DSY\mathbbm{1}(Y\geq Q_{1}(1-p))+(1-D)SY| Z=0\right]
\end{array}
\right\}, \label{eq:betaU}
\end{align}}
where

\begin{align}
    p &=\frac{E\left[ \left( D-1\right) S| Z=1\right] -E\left[ \left(
D-1\right) S| Z=0\right] }{E\left[ DS| Z=1\right] -E\left[ DS| Z=0 \right] }, \\
Q_{1}(u ) &= \inf \{y \in \mathcal{Y}:F_{1}(y)\geq u \} \text{ for any } u \in \left( 0,1\right), \\
F_{1}(y) &=%
\frac{E\left[ \mathbbm{1}\{Y\leq y\}DS| Z=1\right] -E\left[ \mathbbm{1}%
\{Y\leq y\}DS| Z=0\right] }{E\left[ DS| Z=1\right] -E\left[ DS| Z=0%
\right] }.
\end{align} 
We next compare our bounds with two benchmarks: the ITT bounds of \cite{lee2009training} and the CF bounds, which target $\theta_{SLATE}$ under essentially identical identification assumptions.



\subsection{Comparison with Existing Bounds}
\label{subsec:Bound_comparison}
\subsubsection{Comparison with \cite{lee2009training} Bounds}
\label{subsec:Lee_comparison}

Without sample selection, the ITT effect
\begin{equation}
\tau_{Y}:=E[Y| Z=1]-E[Y| Z=0]
\end{equation}
and the average treatment effect for compliers,
\begin{equation}
\theta_{LATE} := \frac{\tau_{Y}}{\tau_{D}},\qquad \tau_{D}:=E[D| Z=1]-E[D| Z=0], \label{eq:LATE1}
\end{equation}
differ only by the scaling factor $\tau_{D}$, which corresponds to the share of compliers. That is, in the standard IV point identification setting, ITT can be converted into the complier average treatment effect by dividing by $\tau_{D}$. This does not generalize to treatment effect bounds under sample selection. 

In particular, in the presence of sample selection, \cite{lee2009training} derives sharp bounds for the ITT effect among the always-selected assuming strong sample selection monotonicity. These ``Lee bounds'' can be viewed as a special case of our basic bounds under perfect compliance, i.e., when treatment equals instrument, $D=Z$. We only consider the lower bound. The analysis for the upper bound is analogous and omitted for brevity. Setting $D=Z$ in Equation \eqref{eq:betaL}, we obtain the bound

{\begin{align}
\beta_{L}^{Lee} &:=E[Y| Y\leq Q_{1}^{Lee}(p^{Lee}),S=1,Z=1]-E[Y| S=1,Z=0] \notag \\
&= \beta_{L,1}^{Lee}-\beta_{0}^{Lee}, \label{eq:Lee_lower_def}
\end{align}}
where $Q_{1}^{Lee}(u):=\inf \{y\in \mathcal{Y}: F_{Y|S=1,Z=1}(y)\ge u \}$  is the $u$-quantile of $\left(Y|S=1,Z=1\right)$, and $p^{Lee}:=E[S| Z=0]/E[S| Z=1]$ is the trimming fraction.

To relate \eqref{eq:Lee_lower_def} to our SLATE bounds, it is useful to decompose the selected population into latent types. In addition to the complier types introduced in Section~\ref{subsec:baseline_bounds}, let

\begin{align}
an &:=\{S_{0}=1,D_{0}=D_{1}=0\}\quad\text{always-selected never-takers}, \\
aa &:=\{S_{1}=1,D_{0}=D_{1}=1\}\quad\text{always-selected always-takers},
\end{align}
and denote their shares by $\pi_{an}$ and $\pi_{aa}$ respectively. Under Assumptions \ref{ass:IV}--\ref{ass:monotoneS}, the selected group in each arm of the experiment is a mixture of these types. We show in Appendix~\ref{app:Lee_decomposition} that

\begin{align}
F_{Y| S=1,Z=1}(y)&=\omega F_{Y_{0}| an}(y)+(1-\omega)F_{Y_{1}| ac\cup cc\cup aa}(y), \label{eq:FY_SZ1_mix}\\
F_{Y| S=1,Z=0}(y)&=\delta F_{Y_{0}| ac\cup an}(y)+(1-\delta)F_{Y_{1}| aa}(y), \label{eq:FY_SZ0_mix}
\end{align}
for suitable mixture weights $\omega,\delta\in(0,1)$ depending only on type shares. Let $q :=Q^{Lee}_1(p^{Lee})$ denote the Lee trimming threshold. The trimmed mean in the first term of \eqref{eq:Lee_lower_def} can be written as
\begin{align}
\beta^{Lee}_{L,1}
&:= E\!\left[Y \mid Y \le q,\, S=1,\, Z=1\right] \notag \\
&= w(q)\,E\!\left[Y_0 \mid Y_0 \le q,\, an\right]
+ \bigl(1-w(q)\bigr)\,E\!\left[Y_1 \mid Y_1 \le q,\, ac\cup cc\cup aa\right],    
\end{align}
where $w\left( q\right)=\tfrac{\omega F_{Y_0\mid an}(q)}{p^{Lee}}$ is the effective $an$ stratum weight from trimming applied after mixing. The second term of \eqref{eq:Lee_lower_def} can be written as 
{ \begin{align}
\beta _{0}^{Lee} &:=E\left[ Y|S=1,Z=0\right]  \notag \\
&=E\left[ Y_{0}|ac \cup an\right] \delta +E\left[ Y_{1}|aa\right] \left( 1-\delta \right) .
\end{align}}%
Thus, $\beta_{L}^{Lee}$ is a difference of two mixtures that involve not only treatment compliers ($ac$, $cc$), but also always-takers and never-takers ($aa$, $an$). In contrast, our lower bound for SLATE,

\begin{equation}
\beta_{L}=E[Y_{1}| Y_{1}\leq Q_{1}(p),ac \cup cc]-E[Y_{0}| ac],
\label{eq:sharp_basic_lower}
\end{equation}
depends only on the treatment complier types ($ac$ and $cc$), with $p$ equal to the fraction of always-selected compliers within that complier mixture.

Except in knife-edge cases where the shares of always-selected always-takers and never-takers vanish ($\pi_{aa}=\pi_{an}=0$), the $Y_{1}$ term in $\beta_{L}^{Lee}$ does not reduce to $E[Y_{1}| Y_{1}\leq Q_{1}(p),ac \cup cc]$, and the $Y_{0}$ term does not reduce to $E[Y_{0}| ac]$. Consequently, \cite{lee2009training} bounds cannot be transformed into sharp bounds for $\theta_{SLATE}$ by simple rescaling via primitive probabilities. We summarize this result formally in the following proposition. 
\begin{proposition}[Lee bounds do not rescale to SLATE bounds]
\label{prop:Lee_not_scaling}
Suppose Assumptions~\ref{ass:IV}--\ref{ass:monotoneS} hold and $\pi_{ac}>0$.
If $\pi_{an}+\pi_{aa}>0$, then, for generic joint distributions of $(Y_0,Y_1)$
across latent types,
\begin{align*}
\beta_L^{Lee}\neq \beta_L.
\end{align*}
Moreover, there does not exist a scalar function $c$, depending only on the
primitive probabilities (i.e.~the principal-strata shares and instrument probabilities), such that
\begin{align*}
\beta_L = c\,\beta_L^{Lee}
\end{align*}
uniformly over DGPs.
\end{proposition}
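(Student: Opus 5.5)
The plan is to prove both parts by explicit perturbation arguments. We exploit the fact that $\beta_L$ depends only on the distributions of $Y_1$ and $Y_0$ within the complier strata $ac\cup cc$ and $ac$. By contrast, $\beta_L^{Lee}$ also depends on $F_{Y_0\mid an}$ and $F_{Y_1\mid aa}$.

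\textbf{Step 1: the generic inequality.} Fix type shares with $\pi_{ac}>0$ and $\pi_{an}+\pi_{aa}>0$, together with the complier distributions $F_{Y_1\mid ac\cup cc}$ and $F_{Y_0\mid ac}$. These fix $\beta_L$. Suppose first that $\pi_{aa}>0$. The term $\beta_0^{Lee}$ contains $(1-\delta)E[Y_1\mid aa]$ with $1-\delta>0$. Shift the distribution of $Y_1$ among $aa$ by a location shift $t$. Then:
\begin{itemize}
\item $\beta_L$ is unchanged.
\item Assumptions \ref{ass:IV}--\ref{ass:monotoneS} are unaffected, since they restrict only types and the instrument, not outcome values of $aa$.
\item $\beta_0^{Lee}$ moves linearly in $t$ with slope $1-\delta\neq 0$.
\end{itemize}
The first term $\beta_{L,1}^{Lee}$ also depends on $t$, through the trimming threshold $q$ and the $aa$ component of the $Z=1$ mixture in \eqref{eq:FY_SZ1_mix}. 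To control this, I will take $t\to\infty$. Eventually the $aa$ mass sits above $q$, because $p^{Lee}<1$ trims exactly $1-p^{Lee}$ from the top, and the $aa$ share in the $Z=1$ mixture is bounded by that. I still need to check this share inequality; if it fails, I would instead shift the $aa$ mass far downward and use only the divergence argument below. For large $t$, $\beta_{L,1}^{Lee}$ then becomes either constant or bounded, while $\beta_0^{Lee}\to\infty$. So $\beta_L^{Lee}-\beta_L$ is a nonconstant function of $t$, and hence nonzero except on a null set of $t$.

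If instead $\pi_{aa}=0$ and $\pi_{an}>0$, I will perturb $F_{Y_0\mid an}$ analogously, for example by shifting it far upward by $t$. Then:
\begin{itemize}
\item $\beta_0^{Lee}$ changes with slope $\delta$ times the $an$ share within $ac\cup an$, which is positive.
\item For large $t$ the $an$ mass in the $Z=1$ arm lies above $q$ and is trimmed away, leaving $\beta_{L,1}^{Lee}$ constant.
\end{itemize}
Since the set of joint distributions where equality holds is thus a proper, lower-dimensional slice of a one-parameter family through every DGP, equality fails generically.

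\textbf{Step 2: no rescaling.} Suppose $c$ depends only on primitive probabilities and $\beta_L=c\,\beta_L^{Lee}$ for all DGPs. Hold all shares and instrument probabilities fixed, so $c$ is a fixed constant. Along the family from Step 1, $\beta_L$ is constant in $t$ while $\beta_L^{Lee}$ is unbounded in $t$.
\begin{itemize}
\item If $c\neq 0$, this is a contradiction.
\item If $c=0$, then $\beta_L=0$ is forced. But a common location shift of the complier outcome $Y_1\mid ac\cup cc$ changes $\beta_L$ while leaving the shares unchanged, giving a contradiction.
\end{itemize}

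\textbf{Main obstacle.} The delicate step is handling the effect of the perturbation on the trimmed term $\beta_{L,1}^{Lee}$ through the threshold $q$. One needs the shifted mass to fall cleanly on one side of $q$, which depends on comparing the $aa$ (or $an$) share in the $Z=1$ mixture with $1-p^{Lee}$. Should that comparison fail, I will use the downward shift instead. In that case the shifted mass enters the trimmed mean with weight bounded away from zero, so $\beta_{L,1}^{Lee}$ itself diverges linearly in $t$ with a slope different from that of $\beta_0^{Lee}$. That slope comparison is the computation I expect to need care.
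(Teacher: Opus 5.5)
Your overall strategy is the paper's: hold the type shares and complier laws fixed, so that $\beta_L$ is frozen, and move $F_{Y_1\mid aa}$ or $F_{Y_0\mid an}$ to move $\beta_L^{Lee}$. Making it explicit, however, exposes two steps that fail as written.

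\textbf{The downward-shift fallback cannot work.} Once the shifted $aa$ mass lies entirely in the retained part of the $Z=1$ arm, it enters $\beta_{L,1}^{Lee}$ with weight $\pi_{aa}/(\pi_{ac}+\pi_{aa}+\pi_{an})$. This is exactly $1-\delta$, its weight in $\beta_0^{Lee}$. For $an$, both weights equal $\pi_{an}/(\pi_{ac}+\pi_{aa}+\pi_{an})$. So the $t$-terms cancel and $\beta_L^{Lee}$ converges to a constant. The cancellation is structural. Lee trimming makes the retained mass in the $Z=1$ arm equal $E[S\mid Z=0]=\pi_{ac}+\pi_{an}+\pi_{aa}$, and the type decomposition then gives
\begin{align*}
(\pi_{ac}+\pi_{an}+\pi_{aa})\,\beta_L^{Lee}
&=(\pi_{ac}+\pi_{cc})\,E[Y_1\mathbbm{1}(Y_1\le q)\mid ac\cup cc]-\pi_{ac}E[Y_0\mid ac]\\
&\quad-\pi_{aa}E[Y_1\mathbbm{1}(Y_1>q)\mid aa]-\pi_{an}E[Y_0\mathbbm{1}(Y_0>q)\mid an].
\end{align*}
Hence the $aa$ and $an$ laws affect $\beta_L^{Lee}$ only through their mass above $q$.

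The fallback is also unnecessary: keep shifting upward. As $t\to\infty$, the mass of the shifted $aa$ component above $q$ tends to $\min(\pi_{aa},\pi_{cc})$, whether or not $\pi_{aa}\le\pi_{cc}$. So $\beta_L^{Lee}\to-\infty$ at rate $\min(\pi_{aa},\pi_{cc})\,t/(\pi_{ac}+\pi_{aa}+\pi_{an})$. The same holds for $an$ with $\pi_{an}$ in place of $\pi_{aa}$. Note that your $an$ case silently assumes $\pi_{an}\le\pi_{cc}$ as well.

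\textbf{The case $\pi_{cc}=0$ is not handled.} That rate is zero when $\pi_{cc}=0$, which the hypotheses allow. Then $p^{Lee}=p=1$, nothing is trimmed, and the identity collapses to $\beta_L^{Lee}=\frac{\pi_{ac}}{\pi_{ac}+\pi_{an}+\pi_{aa}}\,\beta_L$ for \emph{every} choice of outcome laws. Your family therefore leaves $\beta_L^{Lee}-\beta_L$ constant in $t$, contrary to Step 1. Moreover, rescaling by $c=(\pi_{ac}+\pi_{an}+\pi_{aa})/\pi_{ac}$ actually succeeds at such shares.

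The first claim still holds when $\pi_{cc}=0$, but for a different reason. The factor is $<1$ when $\pi_{an}+\pi_{aa}>0$, so equality forces $E[Y_1\mid ac]=E[Y_0\mid ac]$, which is non-generic. You need this separate argument.

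For the second claim, Step 2 must explicitly fix shares with $\pi_{cc}>0$ and $\pi_{an}+\pi_{aa}>0$. One such configuration suffices, because the claim concerns a single $c$ working uniformly over all DGPs. The paper's own proof states the perturbation effect informally and does not single out this case either. An explicit version like yours, however, has to.
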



Proposition \ref{prop:Lee_not_scaling} implies that \cite{lee2009training} ITT bounds cannot be converted into sharp bounds for SLATE by a simple rescaling analogous to $\tau_{Y}/\tau_{D}$ in the standard IV case without sample selection.

\subsubsection{Comparison with \cite{chen2015bounds} Bounds}\label{subsec:CF_comparison} 

CF study partial identification of $\theta_{SLATE}$ under essentially identical model assumptions. We show that our bounds are 
weakly tighter uniformly over DGPs, and strictly tighter on a nonempty subset of DGPs than the CF bounds. For brevity, we focus on the lower bounds. The upper bounds can be treated analogously.

CF propose two lower bounds for $E[Y_{1}| ac]$, based on trimming the distribution of $Y$ among selected treated units, $Y| DSZ=1$, equivalent to $Y_1|ac\cup cc\cup aa$ under Assumptions \ref{ass:IV}--\ref{ass:monotoneS}, and then taking the maximum of the two. In contrast, our sharp bounds trim within the smaller identified mixture $Y_1|ac\cup cc$. Let $Q_{1}^{CF}(u)$ denote the $u$-quantile of $Y| DSZ=1$. Their basic lower bound is

{\begin{align}
\beta_{L}^{mix}&:=E\left[Y| DSZ=1,Y\leq Q_{1}^{CF}(p^{CF}_1)\right]-\beta_{0} \nonumber\\
&=\beta_{L,1}^{mix}-\beta_{0},
\label{eq:CF_basic_lower}
\end{align}}

where $p^{CF}_1:=\pi_{ac}/(\pi_{ac}+\pi_{cc}+\pi_{aa})$ and $\beta_{0}:=E[Y_{0}| ac]$. Their alternative lower bound uses additional information from the always-taker stratum ($aa$),

{\begin{align}
\beta_{L}^{adj}&:=\left\{
\begin{array}{l}
\left(1+\dfrac{\pi_{aa}}{\pi_{ac}}\right)
E\left[Y\mid DSZ=1,Y\leq Q_{1}^{CF}(p^{CF}_2)\right]\\[6pt]
-\dfrac{\pi_{aa}}{\pi_{ac}}E\left[Y\mid DS(1-Z)=1\right]
\end{array}
\right\}-\beta_{0}\nonumber\\
&=\beta_{L,1}^{adj}-\beta_{0}, \label{eq:CF_alt_lower}
\end{align}}
where $p^{CF}_2:=(\pi_{ac}+\pi_{aa})/(\pi_{ac}+\pi_{cc}+\pi_{aa})$ and $E[Y| DS(1-Z)=1]=E[Y_{1}| aa]$ is point identified. The CF lower bound is then
\begin{equation}
\beta_{L}^{CF}:=\max\{\beta_{L}^{mix},\beta_{L}^{adj}\}.
\end{equation}

Under Assumptions \ref{ass:IV}--\ref{ass:monotoneS}, the conditional distribution $Y| DSZ=1$ coincides with $Y_{1}$ for the mixture of types $ac$, $cc$, and $aa$:

\begin{equation}
F_{Y| DSZ=1}(y)=F_{Y_{1}| ac\cup cc\cup aa}(y),
\end{equation}
so both $\beta_{L}^{mix}$ and $\beta_{L}^{adj}$ are obtained by trimming this mixture distribution and then reweighting to isolate $ac$ and $aa$. CF show that $\beta_{L}^{mix}$ corresponds to the worst-case lower bound when all $ac$ individuals are placed in the bottom $p^{CF}_1$-
fraction of $Y_{1}| ac\cup cc\cup aa$, while $\beta_{L}^{adj}$ corresponds to the worst case when $ac$ and $aa$ together occupy the bottom $p^{CF}_2$-
fraction of that distribution, using the fact that $E[Y_{1}| aa]$ is identified.

In contrast, our lower bound $\beta_{L}$ uses the smallest stratum that is point identified, namely $Y_{1}| ac\cup cc$. As shown in Lemma \ref{lem:F_Y1_ac_cc_identification} in Appendix \ref{app:CF_dominance}, the distribution $F_{Y_{1}| ac\cup cc}$ can be recovered from the two observable selected distributions

\begin{equation}
F_{Y| DSZ=1}(y)=F_{Y_{1}| ac\cup cc\cup aa}(y),\qquad F_{Y| DS(1-Z)=1}(y)=F_{Y_{1}| aa}(y).
\end{equation}
We then construct the sharp lower bound

\begin{equation}
\beta_{L}=E\left[Y_{1}| Y_{1}\leq Q_{1}(p),ac \cup cc\right]-\beta_{0},\qquad p:=\frac{\pi_{ac}}{\pi_{ac}+\pi_{cc}},
\end{equation}
by trimming $Y_{1}$ only within this complier mixture.

Intuitively, $\beta_{L}$ assumes, in a worst-case fashion, that all $ac$ individuals are located below all $cc$ individuals in the distribution of $Y_{1}$ among compliers, but it imposes no restrictions on the relative ranking of $ac$ versus $aa$. By contrast, the CF bounds are based on worst-case restrictions on the joint ranking of $ac$, $cc$, and $aa$ within $Y_{1}| ac\cup cc\cup aa$. This makes them conservative. In particular, we obtain the following proposition:



\begin{proposition}[Dominance over CF bounds]
\label{prop:CF_dominance_main}
Suppose Assumptions \ref{ass:IV}--\ref{ass:monotoneS} hold and $\pi_{ac}>0$. Then,
\begin{align*}
\beta_{L,1}\geq \beta_{L,1}^{CF}\geq \beta_{L,1}^{mix} \quad \text{ and } \quad
\beta_{U,1}\leq \beta_{U,1}^{CF}\leq \beta_{U,1}^{mix},
\end{align*}
where $\beta_{L,1}$ is defined in \eqref{eq:betaL1}, $\beta_{L,1}^{mix}$ in
\eqref{eq:CF_basic_lower}, and
$\beta_{L,1}^{CF}:=\max\{\beta_{L,1}^{mix},\beta_{L,1}^{adj}\}$,
with $\beta_{L,1}^{adj}$ defined in \eqref{eq:CF_alt_lower}. The upper-bound
counterparts $\beta_{U,1}$, $\beta_{U,1}^{mix}$, $\beta_{U,1}^{adj}$, and
$\beta_{U,1}^{CF}$ are defined analogously. Hence, imposing either of the two restrictions used by CF cannot tighten our sharp lower or upper bound
for $E[Y_1\mid ac]$.

\noindent
Moreover, these inequalities are strict on a nonempty class of DGPs:
\begin{align*}
\beta_{L,1}>\beta_{L,1}^{CF}
&\iff
\pi_{aa}>0 \quad\text{and}\quad 0<F_{Y_1\mid aa}\!\bigl(Q_1(p)\bigr)<1,  \\
\beta_{U,1}<\beta_{U,1}^{CF}
&\iff 
\pi_{aa}>0 \quad\text{and}\quad0<F_{Y_1\mid aa}\!\bigl(Q_1(1-p)\bigr)<1.
\end{align*} 
\end{proposition}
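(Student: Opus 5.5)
Both CF bounds are means of a sub-probability measure of total mass $\pi_{ac}$ carved out of the observable $Y_1$-distribution over $ac\cup cc\cup aa$. The plan is to use a rearrangement (bathtub) argument that compares this with the sharp bound, which is the minimal mean over the same kind of carving restricted to $ac\cup cc$. I treat the lower bound; the upper bound follows by applying the argument to $-Y$.

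\textbf{Notation.} Work with unnormalized measures on $\mathcal{Y}$:
\[
\mu_{c}:=\pi_{ac}F_{Y_1\mid ac}+\pi_{cc}F_{Y_1\mid cc},\qquad \mu_{aa}:=\pi_{aa}F_{Y_1\mid aa},\qquad M:=\mu_c+\mu_{aa}.
\]
By Lemma \ref{lem:IV_identities} and Lemma \ref{lem:F_Y1_ac_cc_identification}, all three are identified.

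\textbf{Step 1: the sharp bound as a minimization.} I will show that $\pi_{ac}\beta_{L,1}=\min\{\int y\,d\nu:\ 0\le\nu\le\mu_c,\ \nu(\mathcal{Y})=\pi_{ac}\}$, with the minimum attained by $\nu^{*}=\mu_c|_{(-\infty,Q_1(p)]}$. This is the standard bathtub principle, and it also gives the first claim $\beta_{L,1}\le E[Y_1\mid ac]$ of Proposition \ref{prop:sharp_bounds}.

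\textbf{Step 2: rewriting the CF bounds.}
\begin{itemize}
\item $\pi_{ac}\beta^{mix}_{L,1}=\int y\,d\nu^{mix}$ with $\nu^{mix}=M|_{(-\infty,Q^{CF}_1(p_1^{CF})]}$. This is the minimizer of $\int y\,d\nu$ over the larger set $\{0\le\nu\le M,\ \nu(\mathcal{Y})=\pi_{ac}\}$.
\item $\pi_{ac}\beta^{adj}_{L,1}=\int y\,d\lambda-\int y\,d\mu_{aa}$ with $\lambda=M|_{(-\infty,Q^{CF}_1(p_2^{CF})]}$. This is the minimum of $\int y\,d\lambda$ over $\{0\le\lambda\le M,\ \lambda(\mathcal{Y})=\pi_{ac}+\pi_{aa}\}$, minus $\int y\,d\mu_{aa}$.
\end{itemize}

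\textbf{Step 3: dominance.}
\begin{itemize}
\item Since $\nu^{*}$ is feasible for the mix problem, $\beta_{L,1}\ge\beta^{mix}_{L,1}$.
\item The measure $\nu^{*}+\mu_{aa}$ satisfies $0\le\nu^{*}+\mu_{aa}\le M$ and has mass $\pi_{ac}+\pi_{aa}$, so it is feasible for the adj problem. Hence $\pi_{ac}\beta^{adj}_{L,1}\le\int y\,d\nu^{*}=\pi_{ac}\beta_{L,1}$.
\end{itemize}
Taking the maximum gives $\beta_{L,1}\ge\beta^{CF}_{L,1}$. The inequality $\beta^{CF}_{L,1}\ge\beta^{mix}_{L,1}$ is trivial.

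\textbf{Step 4: the strictness characterization.} This is the main obstacle. Under the continuity/strict-monotonicity regularity, bathtub minimizers are unique, so each inequality is an equality iff the feasible candidate is itself the minimizer.
\begin{itemize}
\item \emph{Adj.} Equality iff $\nu^{*}+\mu_{aa}=M|_{(-\infty,t]}$ for some $t$. Writing $q=Q_1(p)$, this forces $t=q$ and $\mu_{aa}$ to be supported on $(-\infty,q]$. Thus adj is tight iff $\pi_{aa}=0$ or $F_{Y_1\mid aa}(q)=1$.
\item \emph{Mix.} Equality iff $\nu^{*}=M|_{(-\infty,t]}$, which forces $\mu_{aa}$ to give zero mass below $q$. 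Thus mix is tight iff $\pi_{aa}=0$ or $F_{Y_1\mid aa}(q)=0$.
\end{itemize}
Hence $\beta_{L,1}=\beta^{CF}_{L,1}$ iff $\pi_{aa}=0$ or $F_{Y_1\mid aa}(q)\in\{0,1\}$, which is the stated equivalence.

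The care needed is in the uniqueness claim. Strict increase of the relevant CDFs near $q$ rules out alternative minimizers that differ only by mass rearranged at equal $y$-values. I would state this via the strict inequality $\int y\,d\nu>\int y\,d\nu^{\mathrm{opt}}$ whenever $\nu\neq\nu^{\mathrm{opt}}$ on a set of positive measure, which follows because $\nu^{\mathrm{opt}}$ moves mass from above $t$ to below $t$. The upper bound is handled symmetrically with $Q_1(1-p)$.
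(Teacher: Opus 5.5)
Your proof is correct, but it takes a different route from the paper's.

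\textbf{The paper's route.} The paper works directly with the CDFs of $Y_1\mid ac$ and of $Y_1\mid ac\cup aa$ implied by each bound. It proceeds in four steps.
\begin{enumerate}
\item It establishes the quantile orderings $Q_1^{CF}(p_1^{CF})\le Q_1(p)\le Q_1^{CF}(p_2^{CF})$, using strict monotonicity of $F_{Y\mid DSZ=1}$.
\item It computes $F^{mix}_{Y_1\mid ac}-F^{dh}_{Y_1\mid ac}$ and $F^{adj}_{Y_1\mid ac\cup aa}-F^{dh}_{Y_1\mid ac\cup aa}$ branch by branch.
\item It concludes weak dominance by first-order stochastic dominance.
\item It reads off strictness from where these differences are nonzero.
\end{enumerate}

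\textbf{Your route.} You cast each bound as the value of a bathtub problem over sub-measures of the identified measures.
\begin{itemize}
\item Weak dominance becomes pure feasibility: $\nu^*\le\mu_c\le M$ for the mix problem, and $\nu^*+\mu_{aa}\le M$ for the adj problem. The measure $\nu^*+\mu_{aa}$ is exactly the paper's $F^{dh}_{Y_1\mid ac\cup aa}$ in unnormalized form.
\item The equality cases follow from uniqueness of the bathtub minimizer.
\item The quantile orderings that the paper needs as an input come out as a by-product.
\end{itemize}

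\textbf{What each buys.} Your argument is shorter. It makes the dominance transparent as nested feasible sets, which is a rigorous version of the informal ``placement'' intuition the paper gives after its proof. It also gives a cleaner account of the equality cases than the paper's appeal to ``the mean respects stochastic dominance.'' The paper's explicit CDF differences, in turn, show exactly where and by how much the CF-implied distributions depart from the sharp worst case, which quantifies the gap.

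\textbf{One precision.} Uniqueness of the bathtub minimizer needs only that $M$ has no atom at the cutoff. Strict increase of $F_1$ near $q$ is a separate requirement: it is what forces the CF cutoff to equal $q$ in your equality analysis. Your last paragraph runs these two together. Both are covered by the maintained regularity, so this is not a gap.
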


\section{Bounds with Covariates}
\label{sec:bounds_covariates}

We now extend the unconditional baseline bounds in Section~\ref{subsec:baseline_bounds} to incorporate predetermined covariates.
Incorporating such covariates serves three purposes: First, it allows for an unconfounded instead of a fully randomly assigned instrument. Second, even under fully randomized assignment, it can improve efficiency by conditioning on relevant covariates when constructing trimmed means. Third, it allows us to relax strong sample selection monotonicity. In particular, we permit the direction of sample selection for treatment compliers to vary with covariates. For simplicity, we maintain the strong treatment response monotonicity assumption \ref{ass:IV}.3. 
This no-defiers restriction is natural in eligibility and encouragement
designs such as JC and OHIE: departures from perfect compliance are largely one-sided, arising mainly from incomplete take-up among those assigned to treatment rather than from substantial treatment receipt among controls. Relaxing this assumption is possible via further covariate partitioning analogous to the sample selection case in what follows. 

Let $X\in\mathcal{X}\subset\mathbb{R}^{d_{x}}$ denote a vector of predetermined covariates. The observed data are $(SY,S,D,Z,X)$. We impose the following conditional IV assumptions: 

\begin{assumption}[Conditional IV]
\label{ass:conditional_IV}
For $d=0,1$:

\ref{ass:conditional_IV}.1 (Exclusion) $Y_{d,1}=Y_{d,0}=:Y_{d}$, $S_{d,1}=S_{d,0}=:S_{d}$.

\ref{ass:conditional_IV}.2 (Independence) $(Y_{1},Y_{0},S_{1},S_{0},D_{1},D_{0})\perp Z| X$.

\ref{ass:conditional_IV}.3 (Strong Treatment Response Monotonicity) $P(D_{1}\geq D_{0})=1$. 

\ref{ass:conditional_IV}.4  (First Stage) $P(D_1 \ne D_0 \mid X=x)>0$ for $P_X$-almost every $x \in \mathcal{X}$

\ref{ass:conditional_IV}.5 
(Non-trivial Assignment)
$P(Z=1\mid X=x)\in(0,1)$ for $P_X$-almost every $x\in\mathcal{X}$.

\end{assumption}

These assumptions are standard extensions of the conditional IV/LATE assumptions \citep{frolich2007nonparametric} generalizing Assumption \ref{ass:IV} with instrument $Z$ again being excluded from directly causing selection and being independently allocated of potential selection given covariates. 

It is important to note that Assumptions \ref{ass:conditional_IV}.4 and \ref{ass:conditional_IV}.5 are not
intended to impose substantive restrictions beyond their unconditional counterparts
Assumptions \ref{ass:IV}.4 and \ref{ass:IV}.5. Rather, they should be interpreted
as support restrictions for the conditional analysis: if the conditional first-stage
or non-trivial assignment condition fails on a set of positive $P_X$-probability,
$\mathcal{X}$ should be understood as restricted to the support on which these
conditions hold.\footnote{Failure of the first stage and failure of non-trivial assignment have slightly different interpretations. If $P(D_1\ne D_0\mid X=x)=0$, there are no compliers at that $x$. so complier-based conditional objects are not meaningful there. If $P(Z=1\mid X=x)\in\{0,1\}$, there is no within-$x$ variation in the instrument, so the conditional IV comparisons are not identified there.}

Under Assumption~\ref{ass:conditional_IV}, the identities in Lemma~\ref{lem:IV_identities} hold pointwise in $x$.
Next, we relax strong sample selection monotonicity to a weaker conditional version that allows the direction of monotonicity to vary with covariates.


\begin{assumption}[Weak Sample Selection Monotonicity for Compliers]\label{ass:conditional_monotoneS} 
Either \\ \noindent $P( S_1 \geq S_0 | D_{1}>D_{0},X=x)=1$ or 
    $P( S_1 \leq S_0 | D_{1}>D_{0}, X=x)=1.$
\end{assumption}
Assumption \ref{ass:conditional_monotoneS} yields subsets of the covariate space where treatment compliers can have either a weakly positive or negative selection responses to treatment. At their intersection, treatment does not affect selection. Without loss of generality, we now define a distinct partitioning where this intersection is combined with the weakly positive responders. Formally, let \begin{align*}
    \mathcal{X}^0 &=\left\{ x \in \mathcal{X}: P(S_1=S_0|D_{1}>D_{0},X=x)=1 \right\}
\end{align*}
and partitions
\begin{align}
    \mathcal{X}^+ &= \{ x \in \mathcal{X}: P(S_1 \geq S_0|D_{1}>D_{0},X=x)=1,  \notag \\ &\qquad  P(S_1>S_0|D_{1}>D_{0},X=x)>0 \} \cup  \mathcal{X}^0,  \\
    \mathcal{X}^- &=\{  x \in \mathcal{X}:   P(S_1\leq S_0|D_{1}>D_{0},X=x)=1,  \notag \\ &\qquad P(S_1<S_0|D_{1}>D_{0},X=x)>0 \}.
\end{align}




We note that the presence of $\mathcal{X}^0$ is harmless for identification. For regular semiparametric inference, however, it will be necessary that $P(\mathcal{X}^0) = 0$ \citep{heiler2024treatmentevaluationintensiveextensive}. We return to this in Section \ref{sec:inference}.

\medskip
\par{\textbf{Remark 2.}}
\textit{Assumption \ref{ass:conditional_monotoneS} weakens strong sample selection monotonicity by allowing the sign of the treatment effect on selection to vary with observed covariates. Thus, treatment may increase selection for some values in $\mathcal{X}$ and decrease it for others. The restriction is nevertheless substantive: conditional on $X=x$ and complier status, residual heterogeneity in the selection response must be one-sided. In particular, the assumption rules out the coexistence of treatment-induced entry and exit among compliers with the same covariate values.
Its credibility therefore depends on whether $\mathcal{X}$ is rich enough to absorb the economically relevant heterogeneity in the direction of selection. It is more plausible when the main determinants of the sign of the selection response are observed, such as baseline characteristics governing participation, employment, survival, etc. It is less plausible when unobserved factors can generate opposing responses within the same covariate cell, see \cite{heiler2024treatmentevaluationintensiveextensive} and \cite{semenova2025generalized} for additional discussion. 
}
\medskip

Let $\pi_{ac}(x):=P(ac| X=x)$ denote the conditional share of always-selected compliers, and let $\pi_{ac} := E[\pi_{ac}(X)]$ denote their unconditional share. Further let $\mathcal X_{ac}:=\{x\in\mathcal X:\pi_{ac}(x)>0\}$.\footnote{For identification of the unconditional bounds, the substantive requirement is $\pi_{ac}>0$. $\pi_{ac}(x)$ may be zero on subsets of the covariate support as they receive zero weight in numerator and denominator of the unconditional bounds. 
As a convention for display, we treat $\theta_{SLATE}(x)=0$ for $x\in\mathcal X\setminus\mathcal X_{ac}$.} 
For any $x \in \mathcal X_{ac}$, define the conditional SLATE parameter

\begin{equation}
\theta_{SLATE}(x):=E[Y_{1}-Y_{0}| ac,X=x].
\end{equation}
Let
\begin{equation}
\lambda_{d}(x):=P(S_{d}=1,D_{1}>D_{0}| X=x),\quad d=0,1,
\end{equation}
and write $\pi_{cc}(x):=P(cc| X=x)$ and $\pi_{dc}(x):=P(dc| X=x)$ for the conditional shares of $cc$ and $dc$, respectively. Under Assumption~\ref{ass:conditional_monotoneS},
\begin{align}
\lambda_{0}(x) &=\pi_{ac}(x),\quad \lambda_{1}(x)=\pi_{ac}(x)+\pi_{cc}(x)\quad \text{for }x\in\mathcal{X}^{+}, \\
\lambda_{1}(x) &=\pi_{ac}(x),\quad \lambda_{0}(x)=\pi_{ac}(x)+\pi_{dc}(x)\quad \text{for }x\in\mathcal{X}^{-}.
\end{align}
This yields the ratio
\begin{equation}
p(x):=\frac{\lambda_{0}(x)}{\lambda_{1}(x)}.
\end{equation}
Thus $0 \le p(x) \leq 1$ on $\mathcal{X}^+_{ac}:=\mathcal{X}^+\cap\mathcal{X}_{ac}$ and $p(x) > 1$ on $\mathcal{X}^-_{ac}:=\mathcal{X}^-\cap\mathcal{X}_{ac}$. For $d=0,1$ and $x\in \mathcal{X}_{ac}$, define the conditional quantile
\begin{equation}
Q_{d}(u,x):=\inf\left\{y\in\mathcal{Y}:F_{Y_{d}| S_{d}=1,D_{1}>D_{0},X=x}(y)\geq u\right\}.
\end{equation}

To simplify exposition, for the remainder of this section we again maintain that the
relevant identified conditional CDF entering each trimming formula is continuous
and strictly increasing in a neighborhood of the corresponding trimming
threshold(s). This rules out mass points at the trimming thresholds and allows the conditional sharp bounds below to be written as
ordinary trimmed means.\footnote{For $x\in \mathcal{X}^+_{ac}$, the relevant CDF
is $F_{Y_{1}| S_{1}=1,D_{1}>D_{0},X=x}(\cdot)$ and the relevant thresholds are $Q_1(p(x),x)$ and
$Q_1(1-p(x),x)$. For $x\in \mathcal{X}^-_{ac}$, the relevant CDF is $F_{Y_{0}| S_{0}=1,D_{1}>D_{0},X=x}(\cdot)$ and the
relevant thresholds are $Q_0(1-1/p(x),x)$ and $Q_0(1/p(x),x)$. Without this
regularity, the same bounds can be written in exact fractional-trimming form.}
Then Proposition~\ref{prop:sharp_bounds} can be applied pointwise in $x$. This yields the following sharp lower and upper bounds for $\theta_{SLATE}(x)$:
{\begin{align}
\beta_{L}^{+}(x)&:=E\left[Y_{1}| S_{1}=1,D_{1}>D_{0},Y_{1}\leq Q_{1}(p(x),x),X=x\right] \notag \\
&\quad - E\left[Y_{0}| S_{0}=1,D_{1}>D_{0},X=x\right], \label{eq:betaL_cond_plus}\\
\beta_{U}^{+}(x)&:=E\left[Y_{1}| S_{1}=1,D_{1}>D_{0},Y_{1}\geq Q_{1}(1-p(x),x),X=x\right] \notag \\
&\quad -E\left[Y_{0}| S_{0}=1,D_{1}>D_{0},X=x\right], \label{eq:betaU_cond_plus}
\end{align}} for $x\in\mathcal{X}^{+}_{ac}$, and

{\begin{align}
\beta_{L}^{-}(x)&:=E\left[Y_{1}| S_{1}=1,D_{1}>D_{0},X=x\right] \notag \\
&\quad -E\left[Y_{0}| S_{0}=1,D_{1}>D_{0},Y_{0}\geq Q_{0}(1-1/p(x),x),X=x\right], \label{eq:betaL_cond_minus}\\
\beta_{U}^{-}(x)&:=E\left[Y_{1}| S_{1}=1,D_{1}>D_{0},X=x\right] \notag \\
&\quad -E\left[Y_{0}| S_{0}=1,D_{1}>D_{0},Y_{0}\leq Q_{0}(1/p(x),x),X=x\right], \label{eq:betaU_cond_minus}
\end{align}}
for $x\in\mathcal{X}^{-}_{ac}$. As a notational convention, set $\beta_B^\pm(x)=0$ for $x\in \mathcal{X}\setminus \mathcal X_{ac}$, $B\in\{L,U\}$. These values are used only in the reduced-forms $\beta_B^\pm(x)\pi_{ac}(x)$ entering the unconditional bounds. We note that when treatment does not cause selection, conditional effects bounds collapse to a point, i.e.~for any $x \in  (\mathcal{X}^0 \cap \mathcal{X}_{ac}) \subseteq \mathcal{X}^+_{ac}$ we have that \begin{align}
\beta_{L}^{+}(x)& = \beta_{U}^{+}(x) =E\left[Y_{1}| S_{1}=1,D_{1}>D_{0},X=x\right] - E\left[Y_{0}| S_{0}=1,D_{1}>D_{0},X=x\right].  \label{eq:betaU_cond_plus}
\end{align}
Now let
\begin{equation}
\mathbbm{1}^{+}(x):=\mathbbm{1}(x\in\mathcal{X}^{+}),\qquad \mathbbm{1}^{-}(x):=\mathbbm{1}(x\in\mathcal{X}^{-}), \label{eq:indicators1-def}
\end{equation}
and define for any $B \in \{L,U\}$
\begin{equation}
\beta_{B}(x):=\mathbbm{1}^{+}(x)\beta_{B}^{+}(x)+\mathbbm{1}^{-}(x)\beta_{B}^{-}(x).
\end{equation}
The unconditional SLATE parameter can be written as
\begin{equation}
\theta_{SLATE}=\frac{E[\theta_{SLATE}(X)\pi_{ac}(X)]}{\pi_{ac}},
\end{equation}
The corresponding sharp lower and upper bounds are
\begin{equation}
\beta_{L}=\frac{E[\beta_{L}(X)\pi_{ac}(X)]}{\pi_{ac}},\qquad
\beta_{U}=\frac{E[\beta_{U}(X)\pi_{ac}(X)]}{\pi_{ac}}.
\end{equation}

To make the link with observed data explicit, we next provide the estimands of the denominator and numerator moment functions in $\beta_{L}$ and $\beta_{U}$. Define the instrument propensity score $e(x):= P(Z=1| X=x)$, and the inverse probability weight 

{\begin{equation}
W(Z,X):=\frac{Z}{e(X)}-\frac{1-Z}{1-e(X)}. \label{eq:ipw}
\end{equation}}
Extending Lemma~\ref{lem:IV_identities} to hold conditional on $X$, one can show that

{\begin{align}
\pi_{ac}&=E\left[\mathbbm{1}^{+}(X)\lambda_{0}(X)+\mathbbm{1}^{-}(X)\lambda_{1}(X)\right] \notag \\
&=E\left[\min\{\lambda_{0}(X),\lambda_{1}(X)\}\right]. \label{eq:Pr_ac_cov_uncond}
\end{align}}
where now

{\begin{align}
\lambda_{0}(x)&= E\left[W(Z,X)(D-1)S| X=x\right],\\
\lambda_{1}(x)&= E\left[W(Z,X)DS| X=x\right],\\
\mathbbm{1}^{+}(x)&=\mathbbm{1} (\lambda_{0}(x)\le \lambda_{1}(x)),\\
\mathbbm{1}^{-}(x)&=\mathbbm{1} (\lambda_{0}(x) > \lambda_{1}(x)).
\end{align}}
Thus $\pi_{ac}$ is identified. In addition, for any $y$ and $x \in \mathcal{X}_{ac}$,

{\begin{align}
F_{1}(y| x)&:=F_{Y_{1}| S_{1}=1,D_{1}>D_{0},X=x}(y)
=\frac{E\left[W(Z,X)DS\mathbbm{1}(Y\leq y)| X=x\right]}{E\left[W(Z,X)DS| X=x\right]}, \label{eq:F1_cond_main}\\
F_{0}(y| x)&:=F_{Y_{0}| S_{0}=1,D_{1}>D_{0},X=x}(y)
=\frac{E\left[W(Z,X)(D-1)S\mathbbm{1}(Y\leq y)| X=x\right]}{E\left[W(Z,X)(D-1)S| X=x\right]}. \label{eq:F0_cond_main}
\end{align}}
The conditional quantiles $Q_{d}(u,x)$ used for trimming in the conditional bounds in Equations  \eqref{eq:betaL_cond_plus} -- \eqref{eq:betaU_cond_minus} are defined as the generalized inverses of these CDFs and are thus identified. Using these identities, the following proposition collects the explicit estimands for the unconditional bounds $\beta_{L}$ and $\beta_{U}$. 

\begin{proposition}
\label{prop:betaL_betaU_covariates}
Suppose Assumptions \ref{ass:conditional_IV}--\ref{ass:conditional_monotoneS} hold and $\pi_{ac}>0$. Then the sharp lower and upper bounds for $\theta_{SLATE}$ are identified as
{\begin{equation*}
\beta_{L}=\frac{E[\beta_{L}(X)\pi_{ac}(X)]}{\pi_{ac}},\qquad
\beta_{U}=\frac{E[\beta_{U}(X)\pi_{ac}(X)]}{\pi_{ac}}, 
\end{equation*}}
where $\pi_{ac}$ is given by Equation \eqref{eq:Pr_ac_cov_uncond} and
\begin{align*}
\beta_{L}(x) &:=\mathbbm{1}^{+}(x)\beta_{L}^{+}(x)+\mathbbm{1}^{-}(x)\beta_{L}^{-}(x), \\ 
\beta_{U}(x) &:=\mathbbm{1}^{+}(x)\beta_{U}^{+}(x)+\mathbbm{1}^{-}(x)\beta_{U}^{-}(x), 
\end{align*}
with
{\small
\begin{align}
\beta_{L}^{+}(x)\pi_{ac}(x)&:=E\left[W(Z,X)\left(DSY\mathbbm{1}(Y\leq Q_{1}(p(x),x))+(1-D)SY\right)| X=x\right], \label{eq:num-plus-lower}\\
\beta_{L}^{-}(x)\pi_{ac}(x)&:=E\left[W(Z,X)\left(DSY+(1-D)SY\mathbbm{1}(Y\geq Q_{0}(1-1/p(x),x))\right)| X=x\right], \label{eq:num-minus-lower}\\
\beta_{U}^{+}(x)\pi_{ac}(x)&:=E\left[W(Z,X)\left(DSY\mathbbm{1}(Y\geq Q_{1}(1-p(x),x))+(1-D)SY\right)| X=x\right], \label{eq:num-plus-upper}\\
\beta_{U}^{-}(x)\pi_{ac}(x)&:=E\left[W(Z,X)\left(DSY+(1-D)SY\mathbbm{1}(Y\leq Q_{0}(1/p(x),x))\right)| X=x\right],\label{eq:num-minus-upper}
\end{align}}
on $\mathcal{X}_{ac}$, and $\beta_B^\pm(x)\pi_{ac}(x)=0$ by convention on $\mathcal X\setminus\mathcal X_{ac}$. $\lambda_d(x)$ and $W(z,x)$ are point identified on $\mathcal{X}$ and conditional quantiles $Q_d(\cdot,x)$ are point identified on $\mathcal X_{ac}$.
\end{proposition}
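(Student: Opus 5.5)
The plan is to establish the result pointwise in $x$ and then aggregate. First, I will extend Lemma \ref{lem:IV_identities} to hold conditionally on $X=x$ under Assumption \ref{ass:conditional_IV}. For any integrable $h(D,S,Y)$, conditional independence and Assumption \ref{ass:conditional_IV}.5 give
\begin{equation*}
E[W(Z,X)h(D,S,Y)\mid X=x]=E[h\mid Z=1,X=x]-E[h\mid Z=0,X=x].
\end{equation*}
With this in hand, the usual type decomposition into always-takers, never-takers and compliers applies. No defiers, exclusion and independence together yield
\begin{equation*}
E[W DSg(Y)\mid X=x]=E[S_1g(Y_1)\mathbbm{1}(D_1>D_0)\mid X=x]
\end{equation*}
and
\begin{equation*}
E[W(D-1)Sg(Y)\mid X=x]=E[S_0g(Y_0)\mathbbm{1}(D_1>D_0)\mid X=x].
\end{equation*}
Both hold because the always- and never-taker contributions cancel across $Z$. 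Taking $g\equiv1$ identifies $\lambda_0(x)$ and $\lambda_1(x)$. Taking $g=\mathbbm{1}(Y\le y)$ gives \eqref{eq:F1_cond_main}--\eqref{eq:F0_cond_main} on $\mathcal X_{ac}$, because $\lambda_d(x)\ge\pi_{ac}(x)>0$ there. Hence $Q_d(\cdot,x)$ is identified as a generalized inverse.

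Second, I will identify the partition and $\pi_{ac}(x)$. Under Assumption \ref{ass:conditional_monotoneS}, on $\mathcal X^+$ we have $\lambda_1-\lambda_0=\pi_{cc}(x)\ge0$, and on $\mathcal X^-$ we have $\lambda_0-\lambda_1=\pi_{dc}(x)>0$. So $\mathbbm{1}^{+}(x)=\mathbbm{1}(\lambda_0(x)\le\lambda_1(x))$, $\mathbbm{1}^-$ is its complement, and $\pi_{ac}(x)=\min\{\lambda_0(x),\lambda_1(x)\}$. Integrating over $X$ gives \eqref{eq:Pr_ac_cov_uncond}.

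Third, I will derive the conditional bounds. On $\mathcal X^+_{ac}$, the conditional model is exactly the setting of Proposition \ref{prop:sharp_bounds}, so $\beta^+_L(x)$ and $\beta^+_U(x)$ are the sharp conditional bounds. I then verify the moment representation \eqref{eq:num-plus-lower}. Applying the conditional identity with $g(Y)=Y\mathbbm{1}(Y\le Q_1(p(x),x))$ to the $DS$ term gives $\lambda_1(x)$ times the trimmed mean. By continuity at the threshold, that trimmed mean has mass $p(x)$, so the term equals $\pi_{ac}(x)$ times the trimmed conditional mean. Because $(1-D)=-(D-1)$, the $(1-D)SY$ term yields $-\pi_{ac}(x)E[Y_0\mid ac,X=x]$. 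The case $\mathcal X^-_{ac}$ is symmetric: there $ac\cup dc$ is the mixture for $Y_0$, the ac fraction is $1/p(x)$, and the lower bound for $\theta(x)$ trims the top $1/p(x)$ of $Y_0$, i.e.\ $Y_0\ge Q_0(1-1/p(x),x)$. The upper bound is analogous. The same calculation gives \eqref{eq:num-minus-lower} and \eqref{eq:num-minus-upper}. On $\mathcal X\setminus\mathcal X_{ac}$, both sides vanish by convention.

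Finally, I aggregate. Since $\theta_{SLATE}=E[\theta_{SLATE}(X)\pi_{ac}(X)]/\pi_{ac}$ with nonnegative weights, validity of $\beta_L$ and $\beta_U$ follows by integrating the pointwise inequalities. Sharpness is the main obstacle. For each $x$, Proposition \ref{prop:sharp_bounds} gives a latent distribution consistent with the data that attains $\beta_L(x)$, namely placing ac at the bottom of the mixture. I would glue these conditional constructions together measurably in $x$, for instance using the explicit quantile-based assignment $ac=\{U\le p(x)\}$ with a uniform rank $U$, so that the measurability is automatic. The glued joint law reproduces the observed distribution of $(SY,S,D,Z,X)$ and satisfies Assumptions \ref{ass:conditional_IV}--\ref{ass:conditional_monotoneS}, and it attains $\beta_L$. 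Convex combinations of the two extremal constructions then fill the interval between $\beta_L$ and $\beta_U$.
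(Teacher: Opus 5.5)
Your proposal is correct and takes essentially the paper's route: the IPW-weighted conditional version of Lemma~\ref{lem:IV_identities} identifies $\lambda_d(x)$, $F_d(\cdot\mid x)$ and $Q_d(\cdot,x)$; the sign of $\lambda_1(x)-\lambda_0(x)$ identifies the partition and gives $\pi_{ac}(x)=\min\{\lambda_0(x),\lambda_1(x)\}$; Proposition~\ref{prop:sharp_bounds} is applied pointwise in $x$; and the numerator moments are verified exactly as in Appendix~\ref{app:covariate_moments}. You go beyond the paper only on sharpness of the aggregated bounds, which the paper infers directly from pointwise sharpness, and your measurable gluing of the extremal conditional constructions plus latent-level mixing is a sound way to make that step explicit, since mixing preserves the observed law, Assumptions~\ref{ass:conditional_IV}--\ref{ass:conditional_monotoneS}, and the identified $\pi_{ac}$, so $\theta_{SLATE}$ is linear in the mixing weight.
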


In summary, $(\beta_{L},\beta_{U})$ are functionals of the observed law of $(SY,S,D,Z,X)$ and a collection of nuisance functions (instrument propensity score, conditional means, and (inverted) conditional distribution functions). In the next section, we derive orthogonal moment conditions and semiparametric influence functions for these functionals, which will allow us to construct semiparametrically efficient debiased machine learning estimators for $(\beta_{L},\beta_{U})$ and confidence intervals for $\theta_{SLATE}$.

\section{Estimation and Inference}
\label{sec:inference}
\subsection{Nuisance Functions and Target Parameter}

The previous section shows that for any $B\in\{L,U\}$, the unconditional bound can be written as
\begin{equation}
\beta_{B}=\frac{E[\beta_{B}(X)\pi_{ac}(X)]}{\pi_{ac}}, \label{eq:betaB_cov_repeat}
\end{equation}
with $\beta_{B}(X)$ defined in Proposition \ref{prop:betaL_betaU_covariates} and $\pi_{ac}$ in \eqref{eq:Pr_ac_cov_uncond}. For notational convenience, we decompose the numerator of \eqref{eq:betaB_cov_repeat} as
\begin{equation}
E[\beta_{B}(X)\pi_{ac}(X)]=N_{0}^{+}+N_{0}^{-}+N_{B,1}^{+}+N_{B,1}^{-}, \label{eq:N0-NB-all1}
\end{equation}
where, for $B\in\{L,U\}$ and $\ast\in\{+,-\}$, we define
\begin{equation}
N_{0}^{\ast}:=E[\beta_{0}^{\ast}(X)\pi_{ac}(X)], \qquad
N_{B,1}^{\ast}:=E[\beta_{B,1}^{\ast}(X)\pi_{ac}(X)].
\end{equation}

Table \ref{tab:nuisance} contains all primary nuisance functions as well as all derived quantities that are used for the remainder of this section. We collect the required primary nuisance functions in vector
\begin{align}
\eta(x):=(e(x),\{r(z,x),m(z,x),\mu(z,x),\nu(z,x),F(\cdot|  d,z, x),G_{L}(\cdot| d,z, x),G_{U}(\cdot| d,z, x)\}_{d,z}).
\end{align}

\begin{table}[!h] 
\caption{Primary and Derived Nuisance Functions for Bounds $(\beta_{L},\beta_{U})$}
\label{tab:nuisance}
\centering{\footnotesize 
\begin{tabular}{c|c} \hline \\[-0.5ex]
Primary Nuisance Parameter in $\eta$ & Definition \\[1ex]\hline
&\\
$e(x)$ & $P(Z=1| X=x)$\\
$r(z,x)$ & $E[DS| Z=z,X=x]$\\
$m(z,x)$ & $E[(1-D)S| Z=z,X=x]$\\
$\mu(z,x)$ & $E[(1-D)SY| Z=z,X=x]$\\
$\nu(z,x)$ & $E[DSY| Z=z,X=x]$\\
$F(y| d,z,x)$ & $E[\mathbbm{1}(Y\leq y)| D=d,S=1,Z=z,X=x]$\\
$G_{L}(y| d,z,x)$ & $E[Y\mathbbm{1}(Y \leq y)| D=d,S=1,Z=z,X=x]$\\
$G_{U}(y| d,z,x)$ & $E[Y\mathbbm{1}(Y > y)| D=d,S=1,Z=z,X=x]$\\\hline
&\\[-0.5ex]
Derived Nuisance Parameter or Variable & Definition \\[1ex]\hline
&\\[-0.5ex]
$\lambda_{0}(x)$ & $-(m(1,x)-m(0,x))$\\
$\lambda_{1}(x)$ & $r(1,x)-r(0,x)$\\
$\pi_{ac}(x)$ & $\min\{\lambda_{0}(x),\lambda_{1}(x)\}$\\
$p(x)$ & $\dfrac{\lambda_{0}(x)}{\lambda_{1}(x)}$\\
$\mathbbm{1}^+(x)$ & $\mathbbm{1}(p(x) \leq 1)$ \\
$\mathbbm{1}^-(x)$ & $\mathbbm{1}(p(x) > 1)$ \\
$W(z,x)$ & $\dfrac{z}{e(x)}-\dfrac{1-z}{1-e(x)}$\\
$F_{1}(y| x)$ 
& $\dfrac{F(y| 1,1,x)r(1,x)-F(y| 1,0,x)r(0,x)}{r(1,x)-r(0,x)}$\\
$F_{0}(y| x)$ 
& $\dfrac{F(y| 0,1,x)m(1,x)-F(y| 0,0,x)m(0,x)}{m(1,x)-m(0,x)}$\\
$Q_{1}(u,x)$ & $\inf\{y\in\mathcal{Y}:F_{1}(y| x)\geq u\}$\\
$Q_{0}(u,x)$ & $\inf\{y\in\mathcal{Y}:F_{0}(y| x)\geq u\}$\\
$\psi_{L}^{+}(z,x)$ & $G_{L}(Q_{1}(p(X),X)| 1,z,x)$\\
$\psi_{U}^{+}(z,x)$ & $G_{U}(Q_{1}(1-p(X),X)| 1,z,x)$\\
$\psi_{L}^{-}(z,x)$ & $G_{L}(Q_{0}(1-1/p(X),X)| 0,z,x)$\\
$\psi_{U}^{-}(z,x)$ & $G_{U}(Q_{0}(1/p(X),X)| 0,z,x)$\\
&\\ \hline
\end{tabular}
}
\end{table}

\subsection{Semiparametric Influence Functions and Nuisance Functions} \label{sec:moment-functions1}
For any $B \in \{L,U\}$, we treat $\pi_{ac}$ in \eqref{eq:Pr_ac_cov_uncond} as well as the four components in \eqref{eq:N0-NB-all1} as target functionals and derive their semiparametric influence functions/efficient orthogonal moments. They are then combined to yield the efficient influence function and estimator for the respective $\beta_B$.
Denote the influence function operator $\IF: \Theta \rightarrow L^2(P,\mathbb{R}^q)$ as the operator that, for a parameter $\theta:\mathcal{P}\rightarrow \mathbb{R}^q$, returns its influence function, i.e.,~the standard Riesz-representer of the pathwise derivative \citep{bickel1993efficient, kennedy2024semiparametric}.\footnote{The influence function is defined via the functional derivative of the target parameter with respect to perturbations along the tangent space evaluated at the true probability distribution. For example, if we observe iid data $X\sim \mathcal{P}$ and the target functional is $E[X]$, then for $\IF(E[X]) = X - E[X]$. For a regular parameter under the nonparametric model $\mathcal{P}$, this is the unique semiparametric efficient influence function, see, e.g., \cite{hines2022demystifying}, \cite{kennedy2024semiparametric} or \cite{heiler2026heterogeneity} for additional examples.}

We now present the influence function for the target parameters $\beta_B$ for $B \in \{L,U\}$. We suppress dependence on nuisances and data in what follows whenever it does not cause confusion. By linearity of the $\IF$ operator, we have that  \begin{align}
     \IF(\beta_B)
     &= \frac{1}{\pi_{ac}}\left[ \IF(N_0^+)  + \IF(N_0^-) + \IF(N_{B,1}^+) + \IF(N_{B,1}^-) - \beta_B \IF(\pi_{ac})\right] \label{eq:IF_betaB_general1}
\end{align}

\begin{table}[!h]
\caption{Influence Functions of Components} \label{tab:IF-components}
\centering
{\footnotesize
\renewcommand{\arraystretch}{1.1}
\begin{tabular}{>{\raggedright\arraybackslash}p{0.12\textwidth}|
                >{\centering\arraybackslash}p{0.82\textwidth}}
\hline \\[-1.5ex]
Component & Influence Function $\IF$ \\ \hline \\[-0.5ex]

$\pi_{ac}$ &
$\begin{aligned}[t]
& -\mathbbm{1}^{+}(X)\Big[ W(Z,X)((1-D)S-m(Z,X)) + m(1,X)-m(0,X) \Big] \\
& + \mathbbm{1}^{-}(X)\Big[ W(Z,X)(DS-r(Z,X)) + r(1,X)-r(0,X) \Big] - \pi_{ac}
\end{aligned}$ \\[8ex]

$N_{0}^{+}$ &
$\mathbbm{1}^{+}(X)\Big\{ W(Z,X)((1-D)SY-\mu(Z,X)) + \mu(1,X)-\mu(0,X) \Big\}
- N_{0}^{+}$ \\[4ex]

$N_{L,1}^{+}$ &
$\begin{aligned}[t]
& \mathbbm{1}^{+}(X)\Big\{ 
- Q_{1}(p(X),X) W(Z,X)\Big[ (1-D)S-m(Z,X) \\
&\quad + (DS-r(Z,X))F(Q_{1}(p(X),X)|1,Z,X) \\
&\quad + DS(\mathbbm{1}(Y\le Q_{1}(p(X),X)) - F(Q_{1}(p(X),X)|1,Z,X)) \Big] \\
&\quad + W(Z,X)\Big[ DSY\mathbbm{1}(Y\le Q_{1}(p(X),X)) - \psi_L^{+}(Z,X)r(Z,X) \Big] \\
&\quad + \psi_L^{+}(1,X)r(1,X) - \psi_L^{+}(0,X)r(0,X)
\Big\}
- N_{L,1}^{+}
\end{aligned}$ \\[24ex]

$N_{U,1}^{+}$ &
$\begin{aligned}[t]
& \mathbbm{1}^{+}(X)\Big\{ 
- Q_{1}(1-p(X),X) W(Z,X)\Big[ (1-D)S-m(Z,X) \\
&\quad + (DS-r(Z,X))(1-F(Q_{1}(1-p(X),X)|1,Z,X)) \\
&\quad + DS(\mathbbm{1}(Y>Q_{1}(1-p(X),X)) - (1-F(Q_{1}(1-p(X),X)|1,Z,X))) \Big] \\
&\quad + W(Z,X)\Big[ DSY\mathbbm{1}(Y>Q_{1}(1-p(X),X)) - \psi_U^{+}(Z,X)r(Z,X) \Big] \\
&\quad + \psi_U^{+}(1,X)r(1,X) - \psi_U^{+}(0,X)r(0,X)
\Big\}
- N_{U,1}^{+}
\end{aligned}$ \\[24ex]

$N_{0}^{-}$ &
$\mathbbm{1}^{-}(X)\Big\{ W(Z,X)(DSY-\nu(Z,X)) + \nu(1,X)-\nu(0,X) \Big\}
- N_{0}^{-}$ \\[4ex]

$N_{L,1}^{-}$ &
$\begin{aligned}[t]
& \mathbbm{1}^{-}(X)\Big\{ 
- Q_{0}(1-\tfrac{1}{p(X)},X) W(Z,X)\Big[ DS-r(Z,X) \\
&\quad + ((1-D)S-m(Z,X))(1-F(Q_{0}(1-\tfrac{1}{p(X)},X)|0,Z,X)) \\
&\quad + (1-D)S(\mathbbm{1}(Y\ge Q_{0}(1-\tfrac{1}{p(X)},X)) - (1-F(Q_{0}(1-\tfrac{1}{p(X)},X)|0,Z,X))) \Big] \\
&\quad + W(Z,X)\Big[ (1-D)SY\mathbbm{1}(Y\ge Q_{0}(1-\tfrac{1}{p(X)},X)) - \psi_L^{-}(Z,X)m(Z,X) \Big] \\
&\quad + \psi_L^{-}(1,X)m(1,X) - \psi_L^{-}(0,X)m(0,X)
\Big\}
- N_{L,1}^{-}
\end{aligned}$ \\[24ex]

$N_{U,1}^{-}$ &
$\begin{aligned}[t]
& \mathbbm{1}^{-}(X)\Big\{ 
- Q_{0}(\tfrac{1}{p(X)},X) W(Z,X)\Big[ DS-r(Z,X) \\
&\quad + ((1-D)S-m(Z,X))F(Q_{0}(\tfrac{1}{p(X)},X)|0,Z,X) \\
&\quad + (1-D)S(\mathbbm{1}(Y\le Q_{0}(\tfrac{1}{p(X)},X)) - F(Q_{0}(\tfrac{1}{p(X)},X)|0,Z,X)) \Big] \\
&\quad + W(Z,X)\Big[ (1-D)SY\mathbbm{1}(Y\le Q_{0}(\tfrac{1}{p(X)},X)) - \psi_U^{-}(Z,X)m(Z,X) \Big] \\
&\quad + \psi_U^{-}(1,X)m(1,X) - \psi_U^{-}(0,X)m(0,X)
\Big\}
- N_{U,1}^{-}
\end{aligned}$ \\[24ex]

\hline
\end{tabular}
}
\end{table}

The influence functions of the components are in Table \ref{tab:IF-components}.
Putting upper and lower bounds together then yields \begin{align}
    \IF(\beta) = \begin{pmatrix}
        \IF(\beta_L) \\
        \IF(\beta_U)
    \end{pmatrix}
\end{align}
The influence functions $\IF(\beta_B)$ for $B \in \{L,U\}$ nest the existing literature on LATE and sample selection bounds. In particular, we obtain the following proposition: \begin{proposition} \label{prop:nesting-of-IF}
 For any $B \in \{L,U\}$, the efficient influence function in \eqref{eq:IF_betaB_general1} under (i) perfect compliance, (ii) no sample selection or (iii) both collapse to their respective efficient influence functions for Lee bounds, LATE, or ATE:  
\begin{enumerate}[label=(\roman*), leftmargin=*] 
    \item If $P(D = Z) = 1$, then $\IF(\beta_B) = \IF(\beta_B^{Lee})$ a.s. 
    \item If $P(S = 1) = 1$, then $\IF(\beta_B) = \IF(\theta_{LATE})$ a.s.
    \item If $P(S=1, D=Z) = 1$, then $\IF(\beta_B) = \IF(\theta_{ATE})$ a.s. 
\end{enumerate}
    
\end{proposition}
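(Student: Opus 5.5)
The plan is to substitute each degenerate data-generating restriction directly into the component influence functions of Table \ref{tab:IF-components}. We then simplify the nuisance functions of Table \ref{tab:nuisance} almost surely and recombine them through \eqref{eq:IF_betaB_general1}. The resulting expressions are compared term by term with the known efficient influence functions: the orthogonal moments of \cite{semenova2025generalized} for Lee bounds with covariates, the \cite{frolich2007nonparametric} efficient influence function for conditional LATE, and the \cite{hahn1998role} efficient influence function for the ATE. Since efficient influence functions in the nonparametric model are unique, almost sure equality of these expressions is all we need.

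\emph{Case (ii), $P(S=1)=1$.} Here $S\equiv 1$, so $\lambda_0(x)=-(m(1,x)-m(0,x))=E[D|Z=1,X=x]-E[D|Z=0,X=x]=\lambda_1(x)$. Hence $p(x)=1$, $\mathbbm{1}^+\equiv 1$ and $\mathbbm{1}^-\equiv 0$, and every $N^{-}$ component drops out.

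With $p=1$, the lower trimming point $Q_1(1,X)$ is the upper end of the support of $F_1$, so $\mathbbm{1}(Y\le Q_1(1,X))=1$ almost surely on $\{D=1\}$. The upper trimming point $Q_1(0,X)$ is the lower end, so the corresponding indicator equals one except on a null set. In both cases $F(Q_1\mid 1,Z,X)$ equals $1$ or $0$ as appropriate. The $Q_1$-multiplied bracket then reduces to $W[(1-D)-m+D-r]=W[1-E[1|Z,X]]=0$, and the $DS(\cdot)$ term vanishes as well.

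This leaves $\psi^+_B(z,x)r(z,x)=\nu(z,x)$. The numerator influence function becomes $W(Z,X)(Y-E[Y|Z,X])+E[Y|Z=1,X]-E[Y|Z=0,X]$ minus its mean. The denominator influence function becomes the first-stage analogue in $D$, entering with sign $-(-D)$, that is, through $W(D-E[D|Z,X])+\ldots$. Plugging these into \eqref{eq:IF_betaB_general1} gives the Fr\"olich form
\[
\frac{1}{\tau_D}\bigl\{W(Y-\mu_Y)+\Delta\mu_Y-\theta_{LATE}(W(D-\mu_D)+\Delta\mu_D)\bigr\}.
\]

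\emph{Case (i), $P(D=Z)=1$.} Here $r(0,x)=0$ and $m(1,x)=0$. Then $\lambda_1=r(1,x)=P(S=1|Z=1,X)$ and $\lambda_0=m(0,x)=P(S=1|Z=0,X)$, which are the Lee selection probabilities. Also $F_1(\cdot|x)=F(\cdot|1,1,x)$ and $F_0=F(\cdot|0,0,x)$, so the quantiles coincide with the conditional Lee trimming thresholds.

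Every term containing $DS$ or $r(Z,X)$ is multiplied by $Z$, and every term containing $(1-D)S$ or $m$ by $1-Z$. The products $W\cdot D$ therefore collapse to $Z/e$, and $W\cdot(1-D)$ to $-(1-Z)/(1-e)$. After this substitution, each row of Table \ref{tab:IF-components} should match the corresponding piece of the \cite{semenova2025generalized} moment for $\mathcal{X}^{\pm}$. That piece consists of the quantile correction $-Q\cdot[\text{selection residuals}+\text{CDF residual}]$ plus the trimmed-mean residual. I would verify this row by row, including the $\mathcal{X}^-$ rows where the trimming is on the control arm.

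\emph{Case (iii), $P(S=1,D=Z)=1$.} This follows by applying the case (i) simplifications to the case (ii) result. Equivalently, setting $D=Z$ in the LATE influence function gives $\tau_D=1$ and $D-E[D|Z,X]=0$, which leaves $W(Y-E[Y|Z,X])+\Delta\mu_Y-\theta$. That is Hahn's efficient influence function.

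The main obstacle is case (ii): it requires handling the boundary quantiles $Q_1(1,X)$ and $Q_1(0,X)$ when $p(x)=1$, so that the $Q_1$-weighted correction terms vanish exactly. These quantiles may be infinite if the support is unbounded. In that situation I would argue that the bracket multiplying $Q_1$ is identically zero, so the product is interpreted as zero, rather than evaluating $Q_1$ itself. A secondary concern is that $P(\mathcal{X}^0)=1$ in case (ii), which lies outside the regular inference regime. The claim, however, is only an almost sure algebraic identity of the displayed formulas, so no regularity argument is needed.
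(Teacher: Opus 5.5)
\textbf{Case (ii) has a genuine gap, and it sits in the step you flagged as the main obstacle.} Your overall route matches the paper's. You substitute the restriction into Table~\ref{tab:IF-components}. In (ii) you use $\lambda_0=\lambda_1$, hence $p\equiv 1$ and $\mathbbm{1}^-\equiv 0$, plus the cancellation $1-m-r=0$. In (i) you use $r(0,x)=m(1,x)=0$, and in (iii) you reduce to AIPW.

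The failing step is in (ii), lower bound. From ``$Q_1(1,X)$ is the upper end of the support of $F_1$'' you conclude that $\mathbbm{1}(Y\le Q_1(1,X))=1$ a.s.\ on $\{D=1\}$ and that $F(Q_1(1,X)\mid 1,Z,X)=1$. But $F_1(\cdot\mid x)$ is the law of $Y_1$ among compliers only. The event $\{D=1\}$ also contains always-takers in both instrument arms, whose outcome law is $F(\cdot\mid 1,0,X)$. Suppose that law puts mass above $Q_1(1,X)<\infty$. Then, with $S\equiv1$, the bracket multiplying $-Q_1W$ equals $E[D\mathbbm{1}(Y>Q_1)\mid Z,X]-D\mathbbm{1}(Y>Q_1)$, which is not zero. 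Consequently $\IF(N_0^+)+\IF(N_{L,1}^+)$ differs from the Fr\"olich numerator by
\[
-W(Z,X)\bigl\{D\,(Y-Q_1(1,X))\,\mathbbm{1}(Y>Q_1(1,X))-E\bigl[D\,(Y-Q_1(1,X))\,\mathbbm{1}(Y>Q_1(1,X))\mid Z,X\bigr]\bigr\}.
\]
This term is not a.s.\ zero. For example, take no covariates, $Y_1\mid at\sim U[0,2]$ and $Y_1\mid c\sim U[0,1]$, so that $Q_1(1)=1$.

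So (ii) needs one of the following:
\begin{itemize}
\item $Q_1(1,X)=+\infty$, in which case your $0\cdot\infty:=0$ convention suffices;
\item $F(Q_1(1,X)\mid 1,0,X)=1$ whenever $r(0,X)>0$, i.e., always-takers' outcomes do not exceed the complier upper endpoint;
\item a restatement that sets the $p=1$ lower trimming point to $\sup\mathcal{Y}$. This leaves the value of $N_{L,1}^+$ unchanged, because the always-taker contributions cancel in $\psi_L^+(1,X)r(1,X)-\psi_L^+(0,X)r(0,X)$.
\end{itemize}

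Do not expect the paper's proof to close this. Its auxiliary step asserts $Q_1=\max(b_0,b_1)$ on the grounds that ``$F_1(y)<1$ for $y<\max(b_0,b_1)$''. That claim fails whenever $Q_1<b_0$, which is exactly the situation above.

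The rest of your plan is sound. The upper bound in (ii) is unaffected, because $Q_1(0,X)=\inf\mathcal{Y}$ under the generalized-inverse convention, so the indicator and $1-F$ are both one. In (i) and (iii), $D=Z$ rules out always-takers, so $F(\cdot\mid 1,1,X)=F_1(\cdot\mid X)$ and the trimming behaves as you claim. For (iii), argue directly rather than by composing with (ii).

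Case (i) is still only sketched. The paper carries out the substitution explicitly and matches the summed numerator $\IF(N_{L,1})+N_{L,1}+\IF(N_0)+N_0$ and the denominator $\IF(\pi_{ac})+\pi_{ac}$ to the Lee-bound moments. You still need to do that computation, but the simplifications you list are the right ones.
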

Practically, this means that, as $P(D = Z) \rightarrow 1$ or $P(S = 1) \rightarrow 1$, our influence functions will get closer in probability to the  semiparametrically efficient influence functions for Lee-bounds on the intensive margin ATE \citep{heiler2024treatmentevaluationintensiveextensive,semenova2025generalized} or the LATE \citep{frolich2007nonparametric} respectively. If both apply, the functions approach the efficient influence of the ATE \citep{hahn1998role}. 

\subsection{Finite Sample Implementation} \label{sec:estimation1}
We now outline the steps to estimate bounds and effect confidence intervals and provide more details regarding nuisance function estimation. For a generic random variable $X$ we denote $E_n[X] = \frac{1}{n}\sum_i^n X_i$ in what follows.  Assume we have independent data $O_i = (S_iY_i, S_i, D_i, Z_i, X_i)'$ for $i=1,\dots,n$. Algorithm \ref{alg:estimation} contains a step-by-step explanation of how to obtain bounds and inference. Estimation of bounds is fairly standard within the DML framework for composite ratio parameters. In particular, we estimate all their components separately with cross-fitted nuisances to obtain an estimate for the eventual bounds and their respective influence functions. The latter then yield the full variance-covariance matrix estimates that can be used for standard-normal-based inference on bounds or, more importantly, confidence intervals for the effect using refined critical values \citep{imbens2004confidence,stoye2020simple}.
All of these have at least $(1-\alpha)$ asymptotic coverage under some regularity conditions on the conditional distribution and suitable rate conditions on the nuisances, see Section \ref{sec:largesample1} for more details. 

\begin{algorithm}[!h]
  \caption{Estimation and Inference} \label{alg:estimation}
  \begin{algorithmic}[1]
   \Require Data $\{O_i\}_{i=1}^n$, number of folds $K$.
   \State Randomly partition data into $K$ disjoint folds $I_1,\dots,I_K$ of approximately equal size.
    \For{each $f \in \{1,\dots,K\}$}
    \State estimate primary nuisance parameters using data in $I_f^c$.
    \State evaluate the nuisances on $I_f$.
    \EndFor
    
    \For{each $N \in \{N_0^+,N_0^-,N_{L,1}^+,N_{L,1}^-,N_{U,1}^+,N_{U,1}^-,\pi_{ac}\}$}
    \State for all $i=1,\dots,n$ plug in all nuisances $\hat{\eta} $ into the orthogonal moments $\IF(\cdot)$.
      \State obtain $\hat{N}$ by setting the sample mean of the orthogonal moments equal to zero $$E_n[\IF(\hat{N},\hat{\eta})] = 0$$
    \EndFor
    \State
    \Return the bound estimators as $$\hat{\beta}_B = \frac{\hat{N}_0^+ + \hat{N}_0^- + \hat{N}_{B,1}^+ + \hat{N}_{B,1}^-}{\hat{\pi}_{ac}}$$
    \State   \Return standard errors via estimated composite moment: $$\hat{\sigma}_{B,n} = \sqrt{\frac{E_n[\IF(\hat{\beta}_B,\hat{\eta})^2]}{n}}$$
    where the composite moment is given by $$\IF(\hat{\beta}_B,\hat{\eta}) = \frac{1}{\hat{\pi}_{ac}}\left[ \IF(\hat{N}_0^+,\hat{\eta})  + \IF(\hat N_0^-,\hat{\eta}) + \IF(\hat N_{B,1}^+,\hat{\eta}) + \IF(\hat N_{B,1}^-,\hat{\eta}) - \hat\beta_B \IF(\hat\pi_{ac},\hat{\eta})\right]$$
    \State \Return $(1-\alpha)$ effect confidence intervals as 
    \begin{align*}
        CI_{1-\alpha}(\theta_{SLATE}) = \left[\hat\beta_L - c_{L,\alpha}\hat\sigma_{L,n},\  \hat\beta_U + c_{U,\alpha}\hat\sigma_{U,n}\right]
    \end{align*}
    where refined critical values $c_{L,\alpha},c_{U,\alpha} \leq z_{1-\alpha/2}$ can be chosen according to \cite{imbens2004confidence} or \cite{stoye2020simple}. Using $z_{1-\alpha/2}$ is valid for inference on the bounds.   
  \end{algorithmic}
\end{algorithm}

We now discuss primary nuisance function estimation and how to obtain derived nuisances as defined in Table \ref{tab:nuisance}. Our high-level assumptions in Section \ref{sec:largesample1} match these primitive objects that are conditional means/probabilities, conditional CDFs, and trimmed conditional means.
For primary nuisance functions that are simple conditional means or probabilities, $e,r,m,\mu, \nu$, a plethora of off-the-shelf nonparametric and machine learning methods such as neural networks, forests or high-dimensional sparse parametric models are available. 
The derived nuisances $\lambda_0,\lambda_1, \pi_{ac}, p, \mathbbm{1}^+, \mathbbm{1}^-, W$ only depend on the primary and can be obtained via simple plug-in versions.

The functional parameters $F$ and $G_B$ require special attention. In particular, we suggest to estimate the primary conditional CDF $F$ via machine learning analogues of distributional regression \citep{foresi1995conditional,klein2024distributional}. We also recommend direct imposition or post-processing, e.g., via isotonic regression \citep{henzi2021isotonic} that further refine these estimates by enforcing nondecreasing estimated conditional CDFs. Together with $r$ and $m$, these yield plug-in versions of $F_0$ and $F_1$. The $G_B$ components can be similarly obtained and refined as a sequence of regressions with outcome variable equal to trimming indicator times actual outcome.  

To obtain the derived conditional quantiles $Q_0$ and $Q_1$, inversion of the previously obtained conditional CDFs, $F_0$ and $F_1$, can be used. These yield the trimming indicators that are required to evaluate the conditional expectation models $G_B$ at their respective trimming quantiles to obtain derived nuisances $\psi_B$. 
The inversion-based approach ensures algebraic compatibility between the different nuisance quantities. Thus, we use this approach in all of our simulations and applications in Section \ref{sec:JC} as well as Appendix \ref{sec:simulation} and \ref{sec:OHIE}.

\subsection{Large Sample Properties} \label{sec:largesample1}

We now present additional technical assumptions and the resulting large sample properties of the efficient influence function based estimators of the causal effect bounds.
Denote the true nuisances as $\eta(x) =: \eta \in \mathcal{H}$ where $\mathcal{H}$ is a convex subset of a suitably normed vector space. Denote $\mathcal{H}_n \subset \mathcal{H}$ the realization set of the estimated nuisance quantities $\hat{\eta}(x) =: \hat{\eta}$, i.e.,~the set containing estimated nuisances with probability $1-u_n$ where $u_n = o(1)$. All nuisances are cross-fitted according to Definition 3.2 in \cite{chernozhukov2018double}, see Algorithm \ref{alg:estimation}.

For the remainder, write for generic nuisance $h=h(x)$ its estimation error $\Delta h=\hat h-h$. Denote $\|\cdot\|_2$ as $L^2(P)$ norm and $\|\cdot\|_\infty$ as the uniform norm. 
By abuse of notation, if the object depends on $Z=z$ and/or $D=d$, we suppress dependence and take all norms to be uniform over these finite dimension as well, e.g., \begin{align}
    ||r||_p = \sup_{z,d \in \{0,1\}}||r(z,x)||_p  =\sup_{z \in \{0,1\}}||r(z,x)||_p 
    = \sup_{z \in \{0,1\}} \left(\int r^p(x,z)dP(x)\right)^{1/p}
\end{align} 
and equivalently for other nuisances. For a given $x$, we also denote 
$\|\cdot\|_{\infty,\mathcal{N}_x}$ as the uniform norm over a neighborhood $\mathcal{N}_x$. In particular, for a generic object $A(\cdot|x)$,
\begin{align}
||\hat{A}(y|x) - A(y|x)||_{\infty,\mathcal{N}_x} &= \sup_{y \in \mathcal{N}_x}|\hat{A}(y|x) - A(y|x)|. 
\end{align}
We denote the supremum over these neighborhoods as \begin{align}
||\hat{A}(y|x) - A(y|x)||_{\infty,\mathcal{N}} &= \sup_x\sup_{y \in \mathcal{N}_x}|\hat{A}(y|x) - A(y|x)|.
\end{align} Moreover, we write shorthand \begin{align}
 ||\Delta G ||_p &= ||\Delta G_L||_{p,\mathcal{N}} + ||\Delta G_U||_{p,\mathcal{N}}  
\end{align} and $a_n \lesssim b_n$ and $a_n \lesssim_P b_n$, whenever $a_n = O(b_n)$ or $a_n = O_p(b_n)$ respectively. If not stated differently, the following assumptions are all uniformly over $n$.

\noindent {\textbf{Assumption A (Regularity, Overlap, and Learning Rates)}} 
\textit{
\begin{enumerate}[itemsep=0pt] \singlespacing
\item[A.1] (Moments) The conditional potential outcome moments are bounded, i.e.,~for some $m > 0$, \begin{align*}\sup_{x\in\mathcal{X},d\in\{0,1\}}E[|Y(d)|^{2+m}|X=x] \lesssim 1.\end{align*}
\item[A.2] (Eigenvalues) The variance-covariance matrix of the influence function $E[\IF(\beta)\IF(\beta)']$ has finite eigenvalues bounded away from zero.
\item[A.3] (No Point Mass at Trimming Points) The conditional distributions are continuous at the trimming points. For $z \in \{0,1\}$, and $x$ on the relevant support,\footnote{Here and in A.4, the relevant support is $\mathcal X^+_{ac}$ for conditions involving $(Q_1(p(x),x)$ or $Q_1(1-p(x),x)$, and $\mathcal X^-_{ac}$ for
conditions involving $Q_0(1-1/p(x),x)$ or $Q_0(1/p(x),x)$.}\begin{align*}
    &P(Y=Q_1(p(x),x)| DS=1,Z=z,X=x) = 0,\\
    &P(Y=Q_1(1-p(x),x)| DS=1,Z=z,X=x) = 0,\\ 
&P(Y=Q_0(1-1/p(x),x)| (1-D)S=1,Z=z,X=x) = 0, \\ 
&P(Y=Q_0(1/p(x),x)| (1-D)S=1,Z=z,X=x) = 0.
\end{align*}
\item[A.4] (Bounded Mixture Outcome Density and Local Lipschitz Trimmed Means) Let $f_1(\cdot|x)$ and $f_0(\cdot|x)$ denote the respective densities of $F_1(\cdot|x)$ and $F_0(\cdot|x)$. Assume they are bounded at the trimming thresholds and the trimmed conditional means are locally Lipschitz, i.e.,~there exist a $C>0$ and constants $0<f_{\min}\le f_{\max}<\infty$, $L_G<\infty$ such that (i) for all $x$ on the relevant support:
\begin{align*}
&f_1(Q_1(p(x),x)\mid x) \in [f_{\min},f_{\max}], \\
&f_1(Q_1(1-p(x),x)\mid x) \in [f_{\min},f_{\max}], \\
&f_0(Q_0(1-1/p(x),x)\mid x) \in [f_{\min},f_{\max}], \\
&f_0(Q_0(1/p(x),x)\mid x) \in [f_{\min},f_{\max}],
\end{align*}
and (ii) uniformly in $(z,x)$, for $|u|\le C$,
\begin{align*}
&|G_L(Q_1(p(x),x) + u|1,z,x) - G_L(Q_1(p(x),x)|1,z,x)| \leq L_G |u|, \\
&|G_U(Q_1(1-p(x),x) + u|1,z,x) - G_U(Q_1(1-p(x),x)|1,z,x)| \leq L_G |u|, \\
&|G_L(Q_0(1-1/p(x),x) + u|0,z,x) - G_L(Q_0(1-1/p(x),x)|0,z,x)| \leq L_G |u|, \\
&|G_U(Q_0(1/p(x),x) + u|0,z,x) - G_U(Q_0(1/p(x),x)|0,z,x)| \leq L_G |u|. 
\end{align*}
\item[A.5] (Margin Condition) The distribution of the positive and negative monotonicity type is well-behaved around the margin of indifference, i.e.,~there exist $C_M<\infty$ and $\kappa>0$ such that
\[
P\!\left(\big|-[m(1,X)-m(0,X)]-[r(1,X)-r(0,X)]\big|\le t\right)\le C_M\,t^\kappa\quad\text{for all }t>0.
\]
\item[A.6] (Strong Overlap) There are comparable units across instrument levels and within the differently treated and selected populations. Moreover, always-selected complier probabilities are bounded away from
zero on their relevant supports, i.e.,~there exists some $\underline{c} \in (0,1)$ such that \begin{align*}
    \underline{c} < \inf_{z,x}\{r(z,x),m(z,x),e(x)\} \leq \sup_{z,x}\{r(z,x),m(z,x),e(x)\} < 1-\underline{c}.
\end{align*}
and
 \begin{align*}
\inf_{x\in\mathcal X_{ac}^+}-[m(1,x) - m(0,x)]> \underline{c},\qquad
\inf_{x\in\mathcal X_{ac}^-}[r(1,x) - r(0,x)]> \underline{c}.
\end{align*}
\item[A.7] (Machine Learning Bias) Let $u_n = o(1)$. For all folds, the nuisance parameters obtained via cross-fitting belong to a shrinking neighborhood $\mathcal{H}_n$ around $\eta$ with probability of at least $1-u_n$, such that, uniformly over the neighborhood, the nuisance functions are consistent
\begin{align*}
    \|\Delta\mu\|_2\ + \|\Delta\nu\|_2\ + \|\Delta e\|_2\ + \|\Delta m\|_{\infty}\ + \|\Delta r\|_{\infty} + ||\Delta G||_{\infty,\mathcal{N}} + ||\Delta F ||_{\infty,\mathcal{N}}  &= o(1), 
\end{align*}
and obey convergence rates \begin{align*}
 &(\|\Delta\mu\|_2\ + \|\Delta\nu\|_2) \|\Delta e\|_2\ + (||\Delta m||_{\infty} + ||\Delta r||_{\infty})^{\kappa + 1}  +(||\Delta e ||_2 + ||\Delta m ||_2 + ||\Delta r ||_2) \times \\ &\bigg[||\Delta e ||_{\infty} + ||\Delta m ||_{\infty} + ||\Delta r ||_{\infty} + ||\Delta G||_{\infty,\mathcal{N}} + ||\Delta F ||_{\infty,\mathcal{N}} \bigg] = o(n^{-1/2}).
\end{align*}
\end{enumerate}}

Assumption A.1 and A.2 are simple regularity conditions that rule out heavy tails and degenerate DGPs where bounds are close or equal to a point or otherwise degenerate.

Assumption A.3 rules out point masses at the trimming thresholds in the observed selected outcome distributions used to construct the reduced-form nuisance functions. This ensures that the trimming indicators are unambiguous at the cutoff and that weak and strict inequality conventions coincide at the relevant thresholds.\footnote{The theory can be extended to mass points as discussed in Section \ref{sec:bounds}. However, in contrast to identification where mass points are harmless once one uses fractional trimming \citep{huber2015sharp,kitagawa2021theidentificationregion}, inference can be affected as the functional may be nonregular. This problem arises only in boundary cases where the trimming probability coincides exactly with the edge of a mass point. For DGPs in which the cutoff lies in the interior of a mass point, fractional trimming yields a regular functional, so standard root-$n$ semiparametric inference can proceed using the corresponding influence function.}

Assumption A.4(i) is complementary to A.3 and provides primitive density regularity that guarantees local invertibility and differentiability. It controls error propagation through the quantile mapping (Bahadur-type representation). A.4(ii) adds local smoothness to the trimmed mean functions, ensuring uniform control over the nuisance functions evaluated close to the trimming points. 

Assumption A.5 is a margin condition that controls the mass of observations near the boundary between positive and negative selection monotonicity types among compliers. In particular, it rules out excessive concentration of probability that would make correct classification difficult. In the case of a bounded density, A.5 holds with $\kappa = 1$, see, e.g.,~\cite{audibert2007fast} or \cite{heiler2024treatmentevaluationintensiveextensive} for related assumptions and discussion. The larger $\kappa$, the less demanding the convergence requirements in A.7 for learning the conditional joint probability of being selected and in treatment/control status. It implies the necessary regularity condition $P(\mathcal{X}^0) = 0$.

Assumption A.6 assures that there are comparable units for instrument and the treatment within the selected group and a relevant share of always-selected compliers. Strong overlap is imposed to obtain a finite variance bound and avoid irregular identification \citep{khan2010irregular,HEILER2021valid}.

Assumption A.7 imposes the standard DML requirement that all nuisances are consistent and converge to their truth sufficiently fast for the second-order remainder of the orthogonal von Mises expansion to be \(o(n^{-1/2})\), but it does so in a relatively weak form tailored to our target parameter and estimation procedure and its primitives.  In particular, in contrast to much of the Lee bounds-type literature, we do not assume global uniform convergence rates for the estimated quantiles or trimmed mean functionals. Instead, it only imposes local sup-norm control of the estimated CDFs 
\(F\) and trimmed means \(G_B\) on neighborhoods relevant for trimming, with the regularity of quantiles and trimmed means derived from these local conditions. The first product term in A.7 matches the usual \(L^2\)-rate requirement in \cite{chernozhukov2018double}, while the additional terms capture the non-smooth features of monotonicity type classification and trimming indicators. This is related to rate conditions in \cite{heiler2024heterogeneous} and \cite{semenova2025generalized} for generalized Lee bounds as well as \cite{heiler2024treatmentevaluationintensiveextensive} for more general intensive and extensive margin treatment effects, but expressed directly in terms of CDF errors 
rather than through separate uniform convergence assumptions on the associated quantile-based 
functionals. Examples for global uniform rates of nonparametric and machine learning estimation of conditional CDFs can be found in, e.g., \cite{xie2023uniform} or \cite{cattaneo2025uniformestimationinferencenonparametric} respectively. 

We obtain the following Theorem: \begin{theorem} \label{thm_asyN1}
    Under Assumptions \ref{ass:conditional_IV}, \ref{ass:conditional_monotoneS}, and A.1--A.7, the estimated bounds are jointly asymptotically normal and semiparametrically efficient, i.e.\begin{align*}
    \sqrt{n}(\hat{\beta} - \beta) \overset{d}{\rightarrow} \mathcal{N}\big(0,E[\IF(\beta)\IF(\beta)']\big).
\end{align*}
\end{theorem}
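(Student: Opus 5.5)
The argument follows the standard DML template for ratio functionals built from cross-fitted orthogonal moments (Chernozhukov et al., 2018). Because $\beta_B$ is a smooth ratio of the components $N_0^{\pm}$, $N_{B,1}^{\pm}$ and $\pi_{ac}$, and $\pi_{ac}$ is bounded away from zero by A.6, the delta method reduces the claim to asymptotic linearity of each component estimator. Writing a generic component as $\theta$ with moment $\phi(O;\eta)$, so that $\IF(\theta)=\phi(O;\eta)-\theta$, it suffices to show that, uniformly over folds,
\[
\hat\theta-\theta=E_n[\phi(O;\eta)-\theta]+o_P(n^{-1/2}).
\]
Once this holds, stacking lower and upper bounds, applying the Lindeberg CLT (using the $2+m$ moments from A.1 and the nondegenerate covariance from A.2), and linearizing the ratio give joint normality with covariance $E[\IF(\beta)\IF(\beta)']$, where $\IF(\beta_B)$ is as in \eqref{eq:IF_betaB_general1}. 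Efficiency is then automatic: the model is nonparametric, so the influence function is unique and equals the efficient one.

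\textbf{Step 1 (decomposition).} For each fold $I_f$, split
\[
E_{n,f}[\phi(\hat\eta)]-E[\phi(\eta)]=(E_{n,f}-E)[\phi(\eta)]+(E_{n,f}-E)[\phi(\hat\eta)-\phi(\eta)]+\{E[\phi(\hat\eta)]-E[\phi(\eta)]\}.
\]
The second term is an empirical-process term. Conditional on the complementary fold it is a centered iid average, so Chebyshev's inequality gives $O_P\bigl(n^{-1/2}\|\phi(\hat\eta)-\phi(\eta)\|_2\bigr)$. This $L^2$ norm is $o(1)$ by the consistency part of A.7, together with A.6 (bounded weights, including $1/\hat e$), A.1, and A.3--A.4, which ensure that indicators $\mathbbm{1}(Y\le \hat Q)$ differ from $\mathbbm{1}(Y\le Q)$ only on a set of vanishing probability. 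The indicator $\hat{\mathbbm{1}}^{\pm}$ is handled the same way, since misclassification has probability $o(1)$ by A.5.

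\textbf{Step 2 (second-order bias).} The main work is showing that the third term is $o(n^{-1/2})$, and I would do this component by component. For $N_0^{\pm}$ the term is the usual AIPW product bias, bounded by $\|\Delta e\|_2(\|\Delta\mu\|_2+\|\Delta\nu\|_2)$. For $\pi_{ac}$ and the classification indicators, I would separate the bias on the misclassification set $\{\hat{\mathbbm{1}}^+\neq\mathbbm{1}^+\}$. On this set $|\lambda_0-\lambda_1|\lesssim\|\Delta m\|_\infty+\|\Delta r\|_\infty$, so its probability is at most $C(\|\Delta m\|_\infty+\|\Delta r\|_\infty)^\kappa$ by A.5. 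The integrand there is itself of order $|\lambda_0-\lambda_1|$, which gives the $(\cdot)^{\kappa+1}$ term in A.7. For the trimmed components $N_{B,1}^{\pm}$ I would use the implicit-function characterization of the quantile. The map $(F,r,m)\mapsto F_1\mapsto Q_1(p(x),x)$ is differentiable by A.4(i), which gives a Bahadur-type expansion
\[
\hat Q_1-Q_1=-\frac{\hat F_1(Q_1)-p(x)-\{F_1(Q_1)-p(x)\}}{f_1(Q_1)}+O\bigl(\|\Delta F\|_{\infty,\mathcal N}^2+\|\Delta r\|_\infty^2+\|\Delta m\|_\infty^2\bigr).
\]
The leading term in the moment of $N_{L,1}^+$, namely $-Q_1 W[\cdots]$, is designed to cancel the first-order effect of $\hat Q_1$. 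This cancellation holds because $\partial_y E[DSY\mathbbm{1}(Y\le y)\mid\cdot]=y\,\partial_y E[DS\mathbbm{1}(Y\le y)\mid\cdot]$, so differentiating the trimmed mean at $y=Q_1$ produces exactly $Q_1$ times the derivative of the constraint $F_1(y)\lambda_1=\lambda_0$. I would verify Neyman orthogonality by computing the Gateaux derivatives in every nuisance direction. Taylor remainders then come out as products of an error in $(e,m,r)$ in $L^2$ times a sup-norm error in $(e,m,r,G,F)$; A.4(ii) controls $\psi$ evaluated at $\hat Q$, and A.7 makes these products $o(n^{-1/2})$.

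\textbf{Main obstacle.} I expect Step 2 for the trimmed components to be the hardest part. The quantile is defined only implicitly through a signed, non-convex mixture CDF, so local invertibility and the Bahadur expansion must be established uniformly in $x$ on a neighborhood $\mathcal N_x$, using only the local sup-norm rates for $F$ and $G$. The first thing I would show is that $\hat Q_1(\hat p(x),x)\in\mathcal N_x$ with probability approaching one, by a monotonicity and sandwich argument using $f_{\min}$. Once that is in place, the expansion above goes through and the cross terms are products of the form $\|\Delta(e,m,r)\|_2\,\|\Delta(F,G,\dots)\|_\infty$, which A.7 controls.
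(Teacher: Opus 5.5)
Your outline follows the paper's proof essentially step for step. It uses the same split into an oracle average, a cross-fitted empirical-process term and a second-order bias; the same AIPW product bound for $N_0^{\pm}$; the same margin bound $(\|\Delta m\|_\infty+\|\Delta r\|_\infty)^{\kappa+1}$ for misclassification; and the same control of $\hat Q_1$ through $f_{\min}$ and monotonicity of $\hat F_1$.

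The gap is the concluding claim of Step 2, that every remainder is a product of an $(e,m,r)$ error in $L^2$ with a sup-norm error. Orthogonality in the quantile direction removes only the first-order term. To see what remains, set $\hat e=e$, assume correct classification, and compute the conditional mean given $X=x$ of the $N_{L,1}^{+}$ moment at $\hat\eta$. Because $\hat Q_1$ inverts the estimated mixture CDF, $\hat r(1,x)\hat F(\hat Q_1|1,1,x)-\hat r(0,x)\hat F(\hat Q_1|1,0,x)=\hat\lambda_0(x)$. Hence every $\hat r,\hat m,\hat F,\hat G$ term cancels exactly, and what is left is $H_x(\hat Q_1(x))$, where
\[
H_x(y):=r(1,x)G_L(y|1,1,x)-r(0,x)G_L(y|1,0,x)-y\{\lambda_1(x)F_1(y|x)-\lambda_0(x)\}
\]
is evaluated at the true nuisances. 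Your cancellation is $H_x'(Q_1)=0$, but
\[
H_x(\hat Q_1)-H_x(Q_1)=-\lambda_1(x)\int_{Q_1}^{\hat Q_1}\{F_1(t|x)-p(x)\}\,dt\le 0 .
\]
With $f_1\in[f_{\min},f_{\max}]$ near $Q_1$ and $\lambda_1\ge\lambda_0>\underline c$ on $\mathcal X^+_{ac}$, this is of exact order $(\hat Q_1-Q_1)^2$.

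So the bias contains a one-signed term of order $\|\hat Q_1-Q_1\|_2^2$, generically of order $\|\Delta F\|_{\infty,\mathcal N}^2$. It is present even when $e,m,r$ are known exactly, in which case A.7's rate condition holds trivially while only $\|\Delta F\|_{\infty,\mathcal N}=o(1)$ is imposed. Your own Bahadur display already carries this $\|\Delta F\|_{\infty,\mathcal N}^2$ term, and you then drop it. You need an additional condition, such as $\|\Delta F\|_{\infty,\mathcal N}=o(n^{-1/4})$, or directly $\|\hat Q_d-Q_d\|_2=o(n^{-1/4})$ at all four trimming points, as in the DML literature on Lee bounds. The paper's bound on $R_n(N_{B,1}^{\pm})$ asserts the same product form and also omits this term, so the missing step is not supplied there either.

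Two smaller points. First, the quadratic-remainder Bahadur expansion is neither justified under A.4 nor needed. Its remainder includes $(\hat F_1-F_1)(\hat Q_1)-(\hat F_1-F_1)(Q_1)$, which sup-norm rates bound only by $2\|\Delta F\|_{\infty,\mathcal N}$, not by a square; your display also drops the $\hat p-p$ term. The identity for $H_x$, together with the paper's Lipschitz inversion bound $|\hat Q_1-Q_1|\le f_{\min}^{-1}(\sup_{\mathcal N_x}|\hat F_1-F_1|+|\hat p-p|)$, is all you need. That bound requires $f_1\ge f_{\min}$ on all of $\mathcal N_x$, not only at $Q_1$.

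Second, the $(\cdot)^{\kappa+1}$ classification rate relies on the integrand being of order $|\lambda_0-\lambda_1|$ on the misclassified set. This holds for $\pi_{ac}=E[\min\{\lambda_0(X),\lambda_1(X)\}]$. It can only hold for the full numerator $N_0^++N_0^-+N_{B,1}^++N_{B,1}^-$, whose $+$ and $-$ branches agree at $\lambda_0=\lambda_1$, and it fails piece by piece. For example, misclassification in $N_0^+$ alone costs $E[(\hat{\mathbbm{1}}^{+}-\mathbbm{1}^{+})(\mu(1,X)-\mu(0,X))]$, which is only $O(\epsilon^\kappa)$ with $\epsilon=\|\Delta m\|_\infty+\|\Delta r\|_\infty$. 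So apply the delta method to the pair (numerator, $\pi_{ac}$) rather than to five separately linearized components.
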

Theorem \ref{thm_asyN1} can be used directly for inference on the bounds using the usual standard normal critical values. Importantly, the assumptions imply that the identified set always has a non-empty interior in the population. Thus, Theorem \ref{thm_asyN1} is sufficient to construct tighter confidence intervals for the effect $\theta_{SLATE}$ directly using \cite{imbens2004confidence} or \cite{stoye2020simple} critical values. We use the latter in our empirical applications.

\section{Always-selected Complier Profiling} \label{sec:profiling}
 $\theta_{SLATE}$ is the average treatment effect for the target population $ac$, the always-selected compliers. While individual members of this group are not identified and the group may differ from observed subpopulations or the overall population in both observed and unobserved characteristics, its observable features can nevertheless be characterized, much like complier profiling for the LATE.\footnote{See, for example, \cite{abadie2003semiparametric}, \cite{angrist2004treatmenteffectheterogeneity}, and \cite{singh2023doublerobustnessforcomplier} for various approaches to complier profiling.} This makes it possible to compare always-selected compliers with other populations of interest and thereby better assess the external validity and substantive importance of the effect bounds.

Observable $ac$ characteristics can be identified using the same ingredient, $\pi_{ac}(x)$, as the SLATE bounds in Section \ref{sec:bounds_covariates}. Debiased estimation and inference follow analogously from our influence functions. In particular, for any integrable $g:\mathcal{X}\rightarrow \mathbb{R}$, consider target parameter \begin{align}
    E[g(X)|ac] = \frac{E[g(X)\pi_{ac}(X)]}{E[\pi_{ac}(X)]}.
\end{align}
This is the average $g(x)$ in the population of always-selected compliers. Its influence function is given by 
\begin{align}
    \IF(E[g(X)|ac]) &= \frac{1}{\pi_{ac}}\big(\left(\IF(\pi_{ac}) + \pi_{ac}\right)\left(g(X) - E[g(X)|ac]\right)\big), \label{eq:IFg(X)}
\end{align}
where nuisances and influence functions of the components can be found in Section \ref{sec:moment-functions1}. The corresponding estimator can be obtained by solving the empirical analogue as in Algorithm \ref{alg:estimation}.  
It is root-$n$ consistent and asymptotically normal analogously to Theorem \ref{thm_asyN1}. Moreover, influence function \eqref{eq:IFg(X)} can directly be used for statistically valid comparisons between mean covariates $g(x)$ of $ac$ and other populations such as unconditional, treated or assigned units. We provide some specific examples in Section \ref{sec:JC} and Appendix \ref{sec:OHIE}.

\section{Empirical Study I: Job Corps Revisited}\label{sec:JC}
\subsection{Data and Methods}
In this section, we re-evaluate the earnings effect of participating in JC, a large US federally funded training program providing free academic education, vocational training and employment assistance to disadvantaged youth. We make use of the National Job Corps Study by Mathematica Policy Research. This experiment implemented stratified randomized assignment of applicants, incorporating over 15,400 individuals between ages 16 and 24. Multiple outcomes such as earnings and job status were gathered at various points after assignment.

Our data and main variables are identical to \cite{chen2015bounds}. In particular, we use a subset of 9,090 units (3,599 control and 5,491 treated) with non-missing work hours, earnings, and participation information. The outcome is log hourly wages which is only observed for the employed. Assignment is given by the original randomization. Treatment is defined as eventual JC participation within the 208 weeks of the evaluation period. We additionally make use of 
socio-economic pre-assignment covariates including job and earnings history, education, parental background and more, matching the variables used in \cite{lee2009training} for ITT bounds. The list of covariates along with sample summary statistics are provided in Table \ref{tab:jc_balance}.

The data are suitable for our method: Assignment was based on stratified randomization justifying conditional independence. Exclusion is credible as any earnings effects likely require actual training and not just assignment. Importantly, there was a significant amount of noncompliance with the randomized treatment assignment. In particular, only 73.8\% of individuals assigned to the treatment group ever participated in JC. Additionally, a small fraction (4.4\%) of individuals assigned to the control group also ended up participating.\footnote{Among the 4.4\%, 1.2\% of controls enrolled in JC before the end of the
embargo, while 3.2\% enrolled afterward.} 

We evaluate the always-selected complier effect $\theta_{SLATE}$ at week $208$ after assignment using (i) \cite{chen2015bounds} bounds and (ii) sharp-basic bounds -- both assuming strong sample selection monotonicity -- as well as (iii) sharp DML bounds with covariates under weak monotonicity. Implementation of (i) follows \cite{chen2015bounds}. (ii) uses simple sample analogues of \eqref{eq:betaL} and \eqref{eq:betaU}. (iii) is obtained via the procedure in Section \ref{sec:estimation1}, see Appendix \ref{app:supp-empirical} for additional details. We also provide a profiling analysis of always-selected compliers using the method from Section \ref{sec:profiling}.

\subsection{Results: Bounds and Shares}
\begin{table}[!htbp]
\caption{Bounds for the Effect of Job Corps on Log Wages}
\label{tab:jc_bounds2}\centering
{\footnotesize
\begin{tabular}{@{}lccc}
\toprule & CF & Sharp-basic & Sharp-DML \\
\midrule Estimates & $[-0.022,\ 0.130]$ & $[0.019,\ 0.067]$ & $[-0.013,\ 0.124]$
\\
Standard Errors & --- & $(0.023,\ 0.022)$ & $(0.023,\ 0.022)$ \\
95\% Confidence Interval & $[-0.061,\ 0.168]$ & $[-0.018,\ 0.104]$ & $[-0.051,\ 0.160]$ \\[1ex]
Share of positive sample selection $^{a}$ & \multicolumn{2}{c}{$1 \text{ (assumed)}$} & \multicolumn{1}{l}{$0.933\ (0.004)$}  \\
Share of $ac$ $^{b}$ & \multicolumn{2}{c}{$0.391\ (0.010)$} & \multicolumn{1}{l}{$0.393\ (0.010)$} \\
\bottomrule &  &  &
\end{tabular}
}
{\footnotesize 
\begin{minipage}{0.88\linewidth}
\footnotesize
\textit{Notes:} $N=9,090$. All calculations use design weights.
\textbf{CF} refers to the \cite{chen2015bounds} bounds under strong sample selection monotonicity without covariates using half-median-unbiased estimates with 95\% confidence intervals following \cite{chernozhukov2013intersection}.
CF reports no standard errors as inference relies on the CLR projection.
\textbf{Sharp-basic} refers to the sharp bounds without covariates under strong sample selection monotonicity.
\textbf{Sharp-DML} refers to the sharp bounds estimated by DML under weak sample selection monotonicity.
Standard errors for the two bounds are in parentheses as
$(\widehat{SE}_{\mathrm{lower}},\,\widehat{SE}_{\mathrm{upper}})$,
and the 95\% CI for $\theta_{SLATE} = E[Y_{1}-Y_{0}\mid ac]$ is calculated using \cite{stoye2020simple}.
\\
$^{a}$ Estimated share of positive sample selection~$= E[\mathbb{I}^{+}(X)]$, the fraction of observations
in the positive sample selection class (Sharp-DML only, CF and Sharp-basic assume this fraction
equals~1).\\
$^{b}$ Estimated share of always-selected compliers $\pi_{ac}$. 
\end{minipage}
}
\end{table}

Table \ref{tab:jc_bounds2} reports the evaluation results. It delivers four main messages. First, the target population is empirically
relevant: the estimated share of always-selected compliers is about 39\% across specifications, so $\theta_{SLATE}$ pertains to a sizable subpopulation. Second, the data do not support imposing strong sample selection monotonicity: the estimated share in the positive sample selection region is 93.3\% and
strong sample selection monotonicity is rejected ($p<0.01$).\footnote{Negative employment effects at week 208 are economically plausible because
employment responses to Job Corps are heterogeneous across applicants. Consistent with this, \cite{semenova2025generalized} shows that positive conditional employment effects do not arise for all applicants
even four years after random assignment and documents subgroups with significantly
negative employment effects at later horizons.} Thus,
Sharp-DML appears to be the most credible specification.
Third, under the strong monotonicity benchmark without covariates, the sharp bounds and effect confidence interval are substantially tighter than CF, by 68.4\% and 46.7\%, respectively. Moreover, the estimated basic bounds no longer include zero and the corresponding confidence interval rules out negative effects below  -1.8\%. 
Fourth, estimates using Sharp-DML are broadly comparable to CF despite relying only on weak sample selection monotonicity. Both rule out negative effects beyond -5.1\% and -6.1\% respectively. However, despite imposing less restrictive assumptions, Sharp-DML delivers tighter bounds and confidence intervals than CF by 9.8\% and 7.9\% respectively, highlighting the relevance of both sharpness and efficient inclusion of covariates.


The results also speak to the importance of the no-scaling result in Proposition \ref{prop:Lee_not_scaling}. In particular, naively scaling basic ITT bounds $[-0.019,0.093]$ \citep{lee2009training} with the JC compliance probability of 69.4\% would suggest SLATE bounds of $[-0.027,  0.134]$ which vastly exceed our Sharp-basic bounds of $[0.019, 0.067]$. 


\subsection{Always-selected Complier Profiling}
\begin{table}[!h]
\centering
\caption{Job Corps Covariates: Always-Selected Compliers vs.\ Full Sample}
\label{tab:jc_profiling}
\footnotesize
\begin{tabular}{@{}l l l l@{}}
\toprule
Covariate & $ac$ & Full sample & ~\quad ~Difference \\
\midrule
Female                        & $0.400$~$(0.014)$ & $0.443$~$(0.005)$ & \diffcell{-0}{044}{~$(0.013)^{***}$} \\
Age (in yrs.) at baseline     & $18.45$~$(0.058)$ & $18.44$~$(0.023)$ & \diffcell{0}{013}{~$(0.053)$} \\
Black, non-Hispanic           & $0.433$~$(0.014)$ & $0.500$~$(0.005)$ & \diffcell{-0}{067}{~$(0.013)^{***}$} \\
Hispanic                      & $0.194$~$(0.010)$ & $0.172$~$(0.004)$ & \diffcell{0}{022}{~$(0.009)^{**}$} \\
Other race/ethnicity          & $0.067$~$(0.007)$ & $0.071$~$(0.003)$ & \diffcell{-0}{004}{~$(0.007)$} \\
Married                       & $0.015$~$(0.004)$ & $0.022$~$(0.002)$ & \diffcell{-0}{007}{~$(0.004)^{*}$} \\
Living together               & $0.033$~$(0.006)$ & $0.041$~$(0.002)$ & \diffcell{-0}{008}{~$(0.005)$} \\
Separated                     & $0.019$~$(0.004)$ & $0.023$~$(0.002)$ & \diffcell{-0}{004}{~$(0.004)$} \\
Has children                  & $0.162$~$(0.011)$ & $0.204$~$(0.004)$ & \diffcell{-0}{042}{~$(0.010)^{***}$} \\
Number of children            & $0.214$~$(0.017)$ & $0.291$~$(0.007)$ & \diffcell{-0}{077}{~$(0.017)^{***}$} \\
Education (in yrs.)           & $10.23$~$(0.042)$ & $10.12$~$(0.016)$ & \diffcell{0}{111}{~$(0.039)^{***}$} \\
Mother's education            & $11.54$~$(0.064)$ & $11.48$~$(0.024)$ & \diffcell{0}{061}{~$(0.059)$} \\
Father's education            & $11.54$~$(0.060)$ & $11.45$~$(0.023)$ & \diffcell{0}{096}{~$(0.054)^{*}$} \\
Ever arrested                 & $0.235$~$(0.011)$ & $0.248$~$(0.005)$ & \diffcell{-0}{013}{~$(0.011)$} \\

Household income: & & & \\
\quad\quad\quad\quad$[\$ 3,000, \$ 6,000)$  & $0.189$~$(0.009)$ & $0.208$~$(0.003)$ & \diffcell{-0}{019}{~$(0.008)^{**}$} \\
\quad\quad\quad\quad$[\$ 6,000, \$ 9,000)$  & $0.122$~$(0.007)$ & $0.116$~$(0.003)$ & \diffcell{0}{006}{~$(0.006)$} \\
\quad\quad\quad\quad$[\$ 9,000, \$ 18,000)$ & $0.258$~$(0.010)$ & $0.244$~$(0.004)$ & \diffcell{0}{014}{~$(0.009)$} \\
\quad\quad\quad\quad$\geq \$ 18,000$        & $0.198$~$(0.009)$ & $0.179$~$(0.003)$ & \diffcell{0}{019}{~$(0.008)^{**}$} \\

Personal income: & & & \\
\quad\quad\quad\quad$[\$ 3,000, \$ 6,000)$  & $0.130$~$(0.009)$ & $0.131$~$(0.003)$ & \diffcell{-0}{001}{~$(0.008)$} \\
\quad\quad\quad\quad$[\$ 6,000, \$ 9,000)$  & $0.066$~$(0.006)$ & $0.051$~$(0.002)$ & \diffcell{0}{015}{~$(0.005)^{***}$} \\
\quad\quad\quad\quad$\geq \$ 9,000$         & $0.038$~$(0.005)$ & $0.033$~$(0.002)$ & \diffcell{0}{004}{~$(0.005)$} \\

At baseline: & & & \\
\quad\quad\quad\quad Has job                          & $0.242$~$(0.011)$ & $0.195$~$(0.004)$ & \diffcell{0}{047}{~$(0.010)^{***}$} \\
\quad\quad\quad\quad Months worked, previous year     & $4.377$~$(0.119)$ & $3.566$~$(0.045)$ & \diffcell{0}{810}{~$(0.109)^{***}$} \\
\quad\quad\quad\quad Had a job, previous year         & $0.716$~$(0.013)$ & $0.629$~$(0.005)$ & \diffcell{0}{087}{~$(0.012)^{***}$} \\
\quad\quad\quad\quad Earnings, previous year          & $3470$~$(137.0)$  & $2870$~$(58.00)$  & \multicolumn{1}{r}{$600.0~(111.0)^{***}$}  \\ 
\quad\quad\quad\quad Weekly hours, most recent job    & $24.35$~$(0.560)$ & $21.42$~$(0.220)$ & \diffcell{2}{930}{~$(0.520)^{***}$} \\
\quad\quad\quad\quad Weekly earnings, most recent job & $121.8$~$(5.800)$ & $107.6$~$(2.900)$ & \multicolumn{1}{r}{$14.10~(3.800)^{***}$} \\ 
\bottomrule
\end{tabular}
\par\medskip
\begin{minipage}{0.95\linewidth}
\footnotesize
\textit{Notes:} $N=9,090$. The table reports estimated means for the always-selected complier ($ac$) subpopulation and for the full sample, along with their difference. Standard errors are in parentheses. All estimates use design weights and are based on the weak-monotonicity DML specification with GRF learners. Joint $\chi^{2}$ tests for the null that all differences within a category are jointly zero:
Race/ethnicity ($p < 0.001$);
Marital status ($p = 0.084$);
Household income ($p = 0.011$);
Personal income ($p = 0.027$).
$^{*}$~$p<0.1$; $^{**}$~$p<0.05$; $^{***}$~$p<0.01$.
\end{minipage}
\end{table}

We now conduct the $ac$ profiling analysis as discussed in Section \ref{sec:profiling}.
Table \ref{tab:jc_profiling} contains the baseline characteristics of always-selected compliers, $ac$, and those of the full sample. Demographic differences are modest but systematic: On average, $ac$ individuals are less likely to be female (-4.4 pp) and Black (-6.7 pp), slightly more likely to be Hispanic (+2.2 pp) and have fewer children (-4.2 pp in incidence; -0.08 in number). Own education is marginally higher (+0.11 years), with parental education and age being similar across groups.  $ac$ individuals are slightly less concentrated in the lowest household income bracket and more represented in the highest (1.9 pp each). The most pronounced differences arise in pre-assignment labor market outcomes: $ac$ individuals are significantly more likely to be employed at baseline (+4.7 pp), more likely to have worked in the previous year (+8.7 pp), and exhibit higher labor supply and earnings, including +0.81 months worked, +\$600 annual earnings, +2.9 weekly hours, and +\$14 weekly earnings.

Taken together, these differences indicate that $ac$ subpopulation is positively selected, in particular on baseline labor-market attachment. The stronger pre-program employment and earnings profiles suggest that $ac$ may be better positioned to translate JC participation into earnings gains, but also imply more favorable counterfactual trajectories in the absence of treatment. At the same time, prior evidence on JC shows that impacts operate through channels such as increased GED, increased vocational credential attainment and reduced criminal involvement, which are not confined to the most labor-market-ready participants \citep{schochet2008does,flores2012estimating}. Thus, while the observable composition of the $ac$ group points to potential attenuation when extrapolating our estimates for $\theta_{SLATE}$ to the full sample, baseline differences alone do not fully pin down the direction or magnitude of the extrapolation.

\section{Concluding Remarks}\label{sec:conclusion}


For both identification and semiparametric inference, this paper provides a synthesis of treatment evaluation under sample selection as well as noncompliance building on Lee-type bounds and the LATE framework. 
Given the analytic form of our population bounds, it is 
relatively simple to incorporate additional restrictions, such as stochastic dominance of $ac$ potential outcome distribution over that of $cc$ to further tighten bounds \citep{zhang2003estimation,huber2015sharp,heiler2024treatmentevaluationintensiveextensive}. Moreover, given the simple ratio-of-linear-moment structure, our influence function components can be directly leveraged for estimation and inference on heterogeneous group-specific $ac$ effect bounds by combining them with nonparametric projection methods as in \cite{heiler2024heterogeneous}. 

\newpage

\addcontentsline{toc}{section}{References}	
	 {\setstretch{1}
	\bibliography{DH1}
	}

\subsection*{Declaration of generative AI and AI-assisted technologies in the manuscript preparation process}	

During the preparation of this work, the authors used OpenAI's ChatGPT to assist with mathematical proofs, coding, formatting of tables and outputs, and spelling. The authors reviewed and edited the content as needed and take full responsibility for the content of the article. 

\newpage

\appendix
{\huge \centering \textbf{Supplementary Appendix}}
\section{Supplementary Material for Section \ref{sec:bounds}}

\subsection{\textbf{Proof of Proposition \ref{prop:sharp_bounds}}}

The proof has two steps. Let $\beta_1:=E\left[ Y_1|ac\right]$.

Step 1, we\ show $\beta _{L,1}\leq \beta _{1}\leq \beta _{U,1}$, i.e., $\beta
_{L,1}$ is the smallest feasible value and $\beta _{U,1}$ is the largest
feasible value for $\beta _{11}$ that is consistent with the observed data and maintained assumptions.
It suffices to show $\beta _{1}\geq \beta _{L,1}$, and the other part $\beta
_{1}\leq \beta _{U,1}$ can be proved similarly.
We have shown that under Assumptions \ref{ass:IV}--\ref{ass:monotoneS} and assuming $\pi_{ac}>0$, $F_{Y_{1}|ac\cup cc}\left( y\right) $ is
uniquely identified. In particular, 

{\footnotesize\begin{align*}
F_{Y_{1}|ac \cup cc}\left( y\right) 
&=\frac{E\left[ DS\cdot \mathbbm{1}\left\{ Y\leq y\right\} |Z=1\right] -E\left[
DS\cdot \mathbbm{1}\left\{ Y\leq y\right\} |Z=0\right] }{E\left[ DS|Z=1\right] -E\left[
DS|Z=0\right] }.
\end{align*}}
Further, 

{\footnotesize\begin{align*}
F_{Y_{1}|ac\cup cc}\left( y\right) =pF_{Y_{1}|ac}\left( y\right) +\left(
1-p\right) F_{Y_{1}|cc}\left( y\right)
\end{align*}}%
where $p$ is uniquely identified:%

{\footnotesize\begin{align*}
p=\frac{E\left[ \left( D-1\right) S|Z=1\right] -E\left[ \left( D-1\right)
S|Z=0\right] }{E\left[ DS|Z=1\right] -E\left[ DS|Z=0\right] }.
\end{align*}}

If $\pi_{cc}=0$, then $p=1$ and $ac\cup cc = ac$, so $E[Y_1\mid ac]$ is point-identified and $\beta_{L,1}=\beta_{U,1}=E[Y_1\mid ac]$. Hence, in what follows, suppose $0<p<1$.

$\beta _{L,1}$ assumes 

{\footnotesize\begin{align*}
F_{Y_{1}|ac}\left( y\right) =F_{Y_{1}|ac}^{dh}\left( y\right) :=%
\begin{cases}
\frac{1}{p}F_{Y_{1}|ac \cup cc}\left( y\right) &\text{ if }y<Q_{1}\left( p\right)
\\ 
1 &\text{ if }y\geq Q_{1}\left( p\right)%
\end{cases}
\end{align*}}%
and hence%

{\footnotesize\begin{align*}
\beta _{L,1}=\int_{-\infty }^{\infty }yF_{Y_{1}|ac}^{dh}\left( dy\right) =%
\frac{1}{p}\int_{-\infty }^{Q_{1}\left( p\right) }yF_{Y_{1}|ac \cup cc}\left(
dy\right) .
\end{align*}}

By \cite{horowitz1995identification}, Corollary 4.1, $\beta _{1}\geq \beta _{L,1}$.
Further, $\beta _{L,1}$ is smallest feasible value that is consistent with
the observed data because both $F_{Y_{1}|ac\cup cc}\left( y\right) $ and $p$ are
uniquely determined by the observed data under the maintained assumptions.

Step 2, we show that every point in $\left[ \beta _{L,1},\beta _{U,1}\right] 
$ is compatible with our assumptions and the observed data and hence cannot
be ruled out, which implies that $\left[ \beta _{L,1},\beta _{U,1}\right] $
is contained in any other valid bounds that impose the same assumptions.

Note that $\beta _{L,1}\leq E\left[ Y_{1}|ac \cup cc\right] \leq E\left[
Y_{1}|ac \cup cc,Y_{1}>Q_{1}\left( p\right) \right] $, where both conditional
mean functions are uniquely identified, as the distribution
function $F_{Y_{1}|ac \cup cc} $ and the quantile level $p$ are
uniquely identified.

For any point $\delta $ between $\beta _{L,1}$ and $E\left[ Y_{1}|ac \cup cc\right] $, there exists a value $\lambda \in \left[ 0,1\right] $ such that 

{\footnotesize\begin{align*}
    \delta =\lambda \beta _{L,1}+\left( 1-\lambda \right) E\left[ Y_{1}|ac \cup cc,Y_{1}>Q_{1}\left( p\right) %
\right] 
\end{align*}}
Let 

{\footnotesize\begin{align*}
h_l\left( y\right) =
\begin{cases}
\frac{1}{p}F_{Y_{1}|ac \cup cc}\left( y\right) &\text{ if }y<Q_{1}\left( p\right)
\\ 
1&\text{ if }y\geq Q_{1}\left( p\right)%
\end{cases}%
\end{align*}}%
and 

{\footnotesize\begin{align*}
h_u\left( y\right) =
\begin{cases}
0&\text{ if }y<Q_{1}\left( p\right) \\ 
\frac{1}{1-p}\left( F_{Y_{1}|ac \cup cc}\left( y\right) -p\right) &\text{ if }%
y\geq Q_{1}\left( p\right)%
\end{cases}%
\end{align*}}%
We can construct distribution functions for $Y_{1}|ac$ and $Y_{1}|cc$,
respectively, as 

{\footnotesize\begin{align*}
F_{Y_{1}|ac}\left( y\right) &= \lambda h_l\left( y\right) +\left( 1-\lambda
\right) h_u\left( y\right) , \\
F_{Y_{1}|cc}\left( y\right) &= \left( \frac{p}{1-p}-\frac{p}{1-p}\lambda
\right) h_l\left( y\right) +\left( 1-\frac{p}{1-p}\left( 1-\lambda \right)
\right) h_u\left( y\right) .
\end{align*}}%
Note

{\footnotesize
\[
E[Y_1\mid ac\cup cc]
=
p\,\beta_{L,1}
+
(1-p)\,E[Y_1\mid ac\cup cc,\;Y_1>Q_1(p)],
\]
}
and \(\delta\in[\beta_{L,1},\,E[Y_1\mid ac\cup cc]]\), so the corresponding
\(\lambda\) must satisfy \(\lambda\in[p,1]\). Hence,

{\footnotesize
\[
0\le \frac{p}{1-p}(1-\lambda)\le 1.
\]
}
The coefficients used below in the definition of \(F_{Y_1\mid cc}\) are
then nonnegative and sum to one, meaning that the constructed \(F_{Y_1\mid ac}\) and
\(F_{Y_1\mid cc}\) are valid distribution functions. 
By construction, 

{\footnotesize\begin{align*}
pF_{Y_{1}|ac}\left( y\right) +\left( 1-p\right) F_{Y_{1}|cc}\left( y\right)
&= ph_l\left( y\right) +\left( 1-p\right) h_u\left( y\right) \\
&= F_{Y_{1}|ac \cup cc}\left( y\right) .
\end{align*}}%
i.e., the mixture of these two distribution functions replicates the
uniquely identified distribution function $F_{Y_{1}|ac \cup cc}\left( y\right) $%
and hence they are compatible with our assumptions and the observed data.
Further by construction, 

{\footnotesize \begin{align*}
E\left[ Y_{1}|ac\right] =\int_{-\infty }^{+\infty }yF_{Y_{1}|ac}\left(
dy\right) 
&=\int_{-\infty }^{+\infty }y\left\{ \lambda h_l\left( dy\right) +\left(
1-\lambda \right) h_u\left( dy\right) \right\} \\
&=\lambda \int_{-\infty }^{+\infty }yh_l\left( dy\right) +\left( 1-\lambda
\right) \int_{-\infty }^{+\infty }yh_u\left( dy\right) \\
&=\lambda \beta _{L,1}+\left( 1-\lambda \right) E\left[
Y_{1}|ac \cup cc,Y_{1}>Q_{1}\left( p\right) \right] \\
&=\delta .
\end{align*}}%
That is, $\delta $ is a feasible point that is compatible with our
assumptions and the observed data.

A symmetric argument can be made about any point $\delta $ in between $E%
\left[ Y_{1}|ac \cup cc\right] $ and $\beta _{U,1}$. Therefore, any point within
the interval $\left[ \beta _{L,1},\beta _{U,1}\right] $ is feasible and
compatible with our assumptions and observed data.

Together, the above two results, 1) $\beta _{L,1}\leq \beta_1
\leq \beta _{U,1}$, and 2) any point within the interval $\left[ \beta
_{L,1},\beta _{U,1}\right] $ is feasible and compatible with our assumptions
and observed data, suggest $\beta _{L,1}$ ($\beta _{U,1}$)\ is sharp, i.e., $%
\beta _{L,1}$ ($\beta _{U,1}$)\ is the largest (smallest)\ lower\ (upper)\
bound for $\beta_1\text{}:=E\left[ Y_{1}|ac\right] $ that is consistent with our assumptions and observed data.

\subsection{Supplementary Material for Section \protect\ref%
{subsec:Lee_comparison}}
\subsubsection{\cite{lee2009training} ITT
bounds and SLATE bounds}
\label{app:Lee_decomposition}

Below we formalize the relationship between the \cite{lee2009training}
bounds and our SLATE bounds. Recall the complier types defined in Section~\ref{subsec:baseline_bounds},
and define in addition 

{\footnotesize\begin{align*}
an:=\{S_{0}=1,\,D_{0}=D_{1}=0\}\quad \text{(always-selected
never-takers)},
\end{align*}}%

{\footnotesize\begin{align*}
aa:=\{S_{1}=1,\,D_{0}=D_{1}=1\}\quad \text{(always-selected
always-takers)}.
\end{align*}}%
We also allow for types that are never selected, 

{\footnotesize \begin{align*}
nc&{:=}\{S_{0}=S_{1}=0,\,D_{0}=0,D_{1}=1\},\\ na&{:=}%
\{S_{1}=0,\,D_{0}=D_{1}=1\},\\ nn&{:=}\{S_{0}=0,%
\,D_{0}=D_{1}=0\},
\end{align*}}%
which will play no role in the Lee-type selected distributions because $S=1$
never occurs for them. Let $\pi _{t}$:=$P (\text{type }t)$ denote the
population share of type $t$.

\begin{lemma}[Decomposition of selected distributions]
\label{lem:Lee_decomposition} Suppose Assumptions~\ref{ass:IV}--\ref%
{ass:monotoneS} hold and the relevant latent types exist. Then the distribution of $Y$ among the selected units
satisfies 

{\footnotesize \begin{align}
F_{Y| S=1,Z=1}(y)& =\omega \,F_{Y_{0}| an}(y)+(1-\omega
)\,F_{Y_{1}| ac\cup cc\cup aa}(y),  \\ \label{eq:FY_SZ1_mix_appendix} \\
F_{Y| S=1,Z=0}(y)& =\delta \,F_{Y_{0}| ac\cup an}(y)+(1-\delta
)\,F_{Y_{1}| aa}(y),  \label{eq:FY_SZ0_mix_appendix}
\end{align}}%
where 

{\footnotesize\begin{align*}
\omega :=\frac{\pi _{an}}{\pi _{ac}+\pi _{cc}+\pi _{aa}+\pi _{an}}%
,\qquad \delta :=\frac{\pi _{ac}+\pi _{an}}{\pi _{ac}+\pi _{an}+\pi
_{aa}}.
\end{align*}}
\end{lemma}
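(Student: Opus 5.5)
The plan is to decompose the events $\{S=1,Z=z\}$ into latent principal strata and then apply the independence assumption to translate conditioning on $Z=z$ into unconditional type shares and type-specific potential outcome distributions.

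Under Assumption \ref{ass:IV}.3 there are no defiers, so every unit is an always-taker ($D_0=D_1=1$), a never-taker ($D_0=D_1=0$) or a complier. Under Assumption \ref{ass:monotoneS}, taking the case $P(S_1\ge S_0\mid D_1>D_0)=1$ as in the text, compliers split into $ac$, $cc$, $nc$. Always-takers split into $aa$ and $na$ according to $S_1$, and never-takers into $an$ and $nn$ according to $S_0$.

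\emph{Step 1 (the $Z=1$ arm).} By exclusion (\ref{ass:IV}.1), the observed variables are $D=D_1$, $S=S_{D_1}$ and $Y=Y_{D_1}$. Thus $\{S=1,Z=1\}$ is the disjoint union of $\{Z=1\}$ intersected with each of $ac$, $cc$, $aa$ (all with $D=1$, $S=S_1=1$, $Y=Y_1$) and with $an$ (where $D=0$, $S=S_0=1$, $Y=Y_0$). Types $nc$, $na$, $nn$ have $S=0$.

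By independence (\ref{ass:IV}.2), $P(t,\,Y_{d}\le y\mid Z=1)=P(t,\,Y_d\le y)$ for each type $t$, since type membership is a function of $(S_1,S_0,D_1,D_0)$. Hence
\[
P(Y\le y, S=1\mid Z=1)=\pi_{an}F_{Y_0\mid an}(y)+(\pi_{ac}+\pi_{cc}+\pi_{aa})F_{Y_1\mid ac\cup cc\cup aa}(y).
\]
Also $P(S=1\mid Z=1)=\pi_{ac}+\pi_{cc}+\pi_{aa}+\pi_{an}$. Dividing gives the first display with weight $\omega$.

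\emph{Step 2 (the $Z=0$ arm).} Here $D=D_0$. The event $\{S=1,Z=0\}$ collects $ac$ (with $D=0$, $S_0=1$, $Y=Y_0$), $an$ (with $D=0$, $Y=Y_0$) and $aa$ (with $D=1$, $S_1=1$, $Y=Y_1$). It excludes $cc$, because $S_0=0$ for that type, and the never-selected types. The same independence argument gives the second display with weight $\delta$.

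The only mildly delicate point is making sure that $cc$ drops out of the $Z=0$ arm. This relies on monotonicity ruling out $dc$; in the opposite monotonicity direction the roles of $cc$ and $dc$ swap analogously. I also need to note that $\omega,\delta\in(0,1)$ holds whenever the relevant types exist, which is the stated proviso. No further obstacle is expected; the proof is bookkeeping.
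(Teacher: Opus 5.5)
Your proposal is correct and follows essentially the same route as the paper, which first splits each selected arm by observed $D$ and then uses exclusion, independence and no defiers to identify four cells: $\{S_0=1,D_1=0\}$ with $an$, $\{S_1=1,D_1=1\}$ with $ac\cup cc\cup aa$, $\{S_0=1,D_0=0\}$ with $ac\cup an$, and $\{S_1=1,D_0=1\}$ with $aa$; this is exactly your type-by-type bookkeeping. One small correction: $cc$ drops out of the $Z=0$ arm automatically because $S_0=0$ for that type, and what monotonicity actually does is make the complier part of $\{S_0=1,D_0=0\}$ equal to $ac$ alone rather than $ac\cup dc$.
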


\begin{proof}
{\footnotesize \begin{align*}
F_{Y|S=1,Z=1}\left( y\right)  &=F_{Y|(1-D)SZ=1}\left( y\right)
E\left[(1-D)SZ|SZ=1\right]  \\
&+F_{Y|DSZ=1}\left( y\right) E\left[ DSZ|SZ=1\right]  \\
&=F_{Y_{0}|S_{0}=1,D_{1}=0}\left( y\right) \frac{E\left[ S_{0}\left(
1-D_{1}\right) \right] }{E\left[ S_{1}D_{1}\right] +E\left[ S_{0}\left(
1-D_{1}\right) \right] } \\
&+F_{Y_{1}|S_{1}=1,D_{1}=1}\left( y\right) \frac{E\left[ S_{1}D_{1}\right] 
}{E\left[ S_{1}D_{1}\right] +E\left[ S_{0}\left( 1-D_{1}\right) \right] } \\
&=F_{Y_{0}|an}\left( y\right) \omega +F_{Y_{1}|ac\cup cc\cup aa}\left(
y\right) \left( 1-\omega \right) ,
\end{align*}}%
where $\omega $:=$\frac{\pi _{an}}{\pi _{ac}+\pi _{cc}+\pi _{aa}+\pi _{an}}$.

Similarly,

{\footnotesize \begin{align*}
F_{Y|S=1,Z=0}\left( y\right)  &=F_{Y|S=1,D=0,Z=0}\left( y\right) E\left[
(1-D)S \left( 1-Z\right) |S\left( 1-Z\right)=1\right]  \\
&+F_{Y|S=1,D=1,Z=0}\left( y\right) E\left[ DS\left( 1-Z\right) |S\left( 1-Z\right)=1\right] 
\\
&=F_{Y_{0}|S_{0}=1,D_{0}=0}\left( y\right) \frac{E\left[ S_{0}\left(
1-D_{0}\right) \right] }{E\left[ S_{1}D_{0}\right] +E\left[ S_{0}\left(
1-D_{0}\right) \right] } \\
&+F_{Y_{1}|S_{1}=1,D_{0}=1}\left( y\right) \frac{E\left[ S_{1}D_{0}\right] 
}{E\left[ S_{1}D_{0}\right] +E\left[ S_{0}\left( 1-D_{0}\right) \right] } \\
&=F_{Y_{0}|ac\cup an}\left( y\right) \delta +F_{Y_{1}|aa}\left( y\right)
\left( 1-\delta \right) ,
\end{align*}}%
where $\delta $:=$\frac{\pi _{ac}+\pi _{an}}{\pi _{ac}+\pi
_{aa}+\pi _{an}}$
\end{proof}

We can now make precise the relationship between \cite{lee2009training} lower bound and our
lower bound for SLATE.
Let $\beta _{L}$ denote our lower bound for SLATE, 

{\footnotesize\begin{align*}
\beta _{L}=E\left[ Y_{1}| Y_{1}\leq Q_{1}(p),\,ac \cup cc\right] -E[Y_{0}|
ac],
\end{align*}}%
with $p:=\frac{\pi _{ac}}{\pi _{ac}+\pi _{cc}}$ is the fraction of
always-selected compliers within the complier mixture and $Q_{1}\left( u
\right) :=\inf\left\{ y\in \mathcal{Y}:F_{Y_{1}|ac \cup cc}\left( y\right) \geq u
\right\} $

Let $\beta _{L}^{Lee}$ denote the \cite{lee2009training} lower bound for the
ITT effect of $Z$ on $Y$ among the selected, as in %
\eqref{eq:Lee_lower_def}. Recall $p^{Lee}:=\frac{E\left[S|Z=0\right]}{E\left[S|Z=1\right]}=\frac{\pi _{ac}+\pi _{aa}+\pi _{an}}
{\pi _{ac}+\pi _{cc}+\pi _{aa}+\pi _{an}}$, and $Q_{1}^{Lee}\left( u \right) :=\inf\left\{ y\in \mathcal{Y}%
:F_{Y_{0}|an}\left( y\right) \omega +F_{Y_{1}|ac \cup cc \cup aa}\left( y\right)
\left( 1-\omega \right) \geq u \right\} $. For notational convenience, let the trimming threshold be $q:=Q^{Lee}_1 \left(p^{Lee} \right)$. Note that evaluating \eqref{eq:FY_SZ1_mix_appendix} at $y=q$ yields 

{\footnotesize
\begin{align*}
p^{Lee}=\omega F_{Y_0|an}\left(q\right)+\left(1-\omega \right)F_{Y_1|ac \cup cc \cup aa}\left(q\right). 
\end{align*}}
Combining Lemma~\ref{lem:Lee_decomposition} with
the definition of $\beta _{L}^{Lee}$, one can write the first term in $\beta
_{L}^{Lee}$ as%

{\footnotesize 
\begin{align*}
\beta^{Lee}_{L,1}
&:= E\!\left[Y \mid Y \le Q^{Lee}_1(p^{Lee}),\, S=1,\, Z=1\right]\\
&= w(q)\,E\!\left[Y_0 \mid Y_0 \le q,\, an\right]
+ \bigl(1-w(q)\bigr)\,E\!\left[Y_1 \mid Y_1 \le q,\, ac\cup cc\cup aa\right],    
\end{align*}}
where

{\footnotesize \begin{align*}
w(q)=\frac{\omega F_{Y_0\mid an}(q)}{p^{Lee}}
\end{align*}}
and the second term in $\beta_{L}^{Lee}$ as 

{\footnotesize \begin{align*}
\beta _{0}^{Lee} &:=E\left[ Y|S=1,Z=0\right]  \\
&=E\left[ Y_{0}|ac \cup an\right] \delta +E\left[ Y_{1}|aa\right] \left( 1-\delta \right) .
\end{align*}}%
So together, 

{\footnotesize\begin{align*}
\beta^{Lee}_L
&=
\{\text{weighted average of } Y_0 \text{ for } an \text{ and } Y_1 \text{ for } ac,cc,aa\}\\
&-
\{\text{weighted average of } Y_0 \text{ for } ac,an \text{ and } Y_1 \text{ for } aa\},
\end{align*}}%
where the exact weights depend on $(\pi_{ac},\pi_{cc},\pi_{an},\pi_{aa})$, the trimming
fraction $p^{Lee}$, and the component CDFs entering $F_{Y\mid S=1,Z=1}$ evaluated at
the trimming threshold $q$.

\subsubsection{Proof of Proposition \ref{prop:Lee_not_scaling}}
\begin{proof}
By Lemma~\ref{lem:Lee_decomposition}, the first term in $\beta _{L}^{Lee}$
is a trimmed mean of $Y$ over the mixture $\{ac \cup cc\cup aa \cup an\}$ in the selected
treated group, and the second term is an untrimmed mean over a different
mixture of types $\{ac,an,aa\}$ in the selected control group. Both mixtures
involve always-takers and/or never-takers whenever $\pi _{an}+\pi _{aa}>0$.

By contrast, $\beta _{L}$ depends only on complier types ($ac$ and $cc$):
its $Y_{1}$ component is a trimmed mean of $Y_{1}$ restricted to $\{ac \cup cc\}$%
, and its $Y_{0}$ component is the mean of $Y_{0}$ for always-selected
compliers ($ac$). Unless $\pi _{an}=\pi _{aa}=0$, the distributions of $Y_{0}
$ and $Y_{1}$ entering $\beta _{L}^{Lee}$ necessarily differ from those
entering $\beta _{L}$ because of the additional type-specific components.

Holding the primitive probabilities fixed, one can vary the distribution of $Y_0$ for
$an$ and/or the distribution of $Y_1$ for $aa$ while leaving the distributions for $ac$
and $cc$ unchanged; this changes $\beta^{Lee}_L$ through the additional mixture
components, and in the treated term also through the trimming threshold $q$ and the effective $an$ stratum weight $w(q)$, while leaving $\beta_L$ unchanged.

Therefore, for generic joint distributions of $(Y_0,Y_1)$ across types,
$\beta^{Lee}_L$ and $\beta_L$ are distinct functionals, and they cannot be linked  uniformly
over DGPs by a
single multiplicative function depending only on the type probabilities.
\end{proof}

\subsection{Supplementary Material for Section~\protect\ref{subsec:CF_comparison}%
}

\label{app:CF_dominance}

In this appendix we formalize the comparison between our bounds and those of
Chen and Flores (2015). The strict comparisons are under the maintained Section \ref{sec:bounds} regularity for the trimmed-mean representation.

\subsubsection{Identification of $F_{Y_{1}| ac\cup cc}$}

We first show how the distribution of $Y_{1}$ for the complier mixture ($ac$
and $cc$) can be recovered from observable selected distributions and identified type shares.

\begin{lemma}
\label{lem:F_Y1_ac_cc_identification} Suppose Assumptions~\ref{ass:IV}--\ref%
{ass:monotoneS} hold and $\pi_{ac}>0$. If $\pi_{aa}>0$, then 

{\footnotesize\begin{align*}
F_{Y_{1}| ac\cup cc}(y)=\frac{(\pi _{ac}+\pi _{cc}+\pi _{aa})F_{Y_{1}|
ac\cup cc\cup aa}(y)-\pi _{aa}F_{Y_{1}| aa}(y)}{\pi _{ac}+\pi _{cc}}.
\end{align*}}%

Equivalently, in terms of observable distributions, 
{\footnotesize\begin{align}
F_{Y_{1}| ac\cup cc}(y)=\frac{(\pi _{ac}+\pi _{cc}+\pi _{aa})F_{Y|
DSZ=1}(y)-\pi _{aa}F_{Y| DS(1-Z)=1}(y)}{\pi _{ac}+\pi _{cc}},
\label{eq:F_Y1_ac_cc_iden_appendix}
\end{align}}%
where 

{\footnotesize\begin{align*}
F_{Y| DSZ=1}(y)=F_{Y_{1}| ac\cup cc\cup aa}(y),\qquad F_{Y|
DS(1-Z)=1}(y)=F_{Y_{1}| aa}(y).
\end{align*}}
If $\pi_{aa}=0$, then \eqref{eq:F_Y1_ac_cc_iden_appendix} reduces to $F_{Y_{1}| ac\cup cc}(y)=F_{Y| DSZ=1}(y)$.
\end{lemma}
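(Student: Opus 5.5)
The plan is to decompose the selected treated population in the $Z=1$ arm, and the selected treated population in the $Z=0$ arm, into latent types. Then I would solve the resulting linear system for the complier mixture. The argument parallels the proof of Lemma \ref{lem:Lee_decomposition}: condition on the observed cells and translate them into potential outcomes via Assumption \ref{ass:IV}.

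First, I would show $F_{Y\mid DSZ=1}=F_{Y_1\mid ac\cup cc\cup aa}$. On $\{Z=1\}$ we have $D=D_1$, $S=S_{D_1}$ and $Y=Y_{D_1}$ by exclusion, so $\{D=1,S=1,Z=1\}=\{D_1=1,S_1=1,Z=1\}$. By independence (Assumption \ref{ass:IV}.2), $F_{Y\mid DSZ=1}(y)=P(Y_1\le y\mid D_1=1,S_1=1)$. Under no defiers (Assumption \ref{ass:IV}.3), $\{D_1=1\}$ is the disjoint union of compliers and always-takers. Under the case $P(S_1\ge S_0\mid D_1>D_0)=1$, the compliers with $S_1=1$ are exactly $ac\cup cc$, because $dc$ is ruled out by Assumption \ref{ass:monotoneS}. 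The always-takers with $S_1=1$ are $aa$. Hence the event is $ac\cup cc\cup aa$.

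Next, on $\{Z=0\}$, $\{D=1,S=1\}=\{D_0=1,S_1=1\}$, and $D_0=1$ implies $D_1=1$. So this event is exactly $aa$, giving $F_{Y\mid DS(1-Z)=1}=F_{Y_1\mid aa}$; this step requires $\pi_{aa}>0$ so that the conditioning is well defined. The types $ac$, $cc$, $aa$ are disjoint, so the law of total probability gives
\[
(\pi_{ac}+\pi_{cc}+\pi_{aa})F_{Y_1\mid ac\cup cc\cup aa}(y)=(\pi_{ac}+\pi_{cc})F_{Y_1\mid ac\cup cc}(y)+\pi_{aa}F_{Y_1\mid aa}(y).
\]
Since $\pi_{ac}+\pi_{cc}\ge\pi_{ac}>0$, I can solve this for $F_{Y_1\mid ac\cup cc}$, which yields both displayed formulas. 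The shares themselves are identified: $\pi_{ac}+\pi_{cc}$ by \eqref{eq:PrS1C}, and $\pi_{aa}=E[DS\mid Z=0]$ by the same type argument. When $\pi_{aa}=0$, the $aa$ term drops out and $ac\cup cc\cup aa=ac\cup cc$ up to a null set, so $F_{Y_1\mid ac\cup cc}=F_{Y\mid DSZ=1}$.

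The only delicate step is the type bookkeeping. It must be verified that Assumption \ref{ass:monotoneS} removes $dc$ from $\{D_1=1,S_1=1\}$; in fact $dc$ has $S_1=0$, so it is excluded regardless. It must also be verified that never-takers cannot appear, which holds because they have $D_1=0$. As a consistency check, the resulting formula agrees with the Abadie-type expression $F_1$ in Section \ref{sec:bounds}. This follows because $E[DS\mathbbm{1}(Y\le y)\mid Z=1]-E[DS\mathbbm{1}(Y\le y)\mid Z=0]$ equals $(\pi_{ac}+\pi_{cc}+\pi_{aa})F_{Y\mid DSZ=1}(y)-\pi_{aa}F_{Y\mid DS(1-Z)=1}(y)$.
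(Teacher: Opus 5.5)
Your proof is correct and follows essentially the same route as the paper. Like the paper, you identify the $DSZ=1$ cell with $ac\cup cc\cup aa$ and the $DS(1-Z)=1$ cell with $aa$, write $F_{Y_1\mid ac\cup cc\cup aa}$ as the $\pi$-weighted mixture of $F_{Y_1\mid ac\cup cc}$ and $F_{Y_1\mid aa}$, and solve using $\pi_{ac}+\pi_{cc}\geq\pi_{ac}>0$; you are simply more explicit about the exclusion/independence step, the identification of the shares, and the agreement with the Abadie-type expression for $F_1$.
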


\begin{proof}
Among units with $DSZ=1$, the only types that can appear are $ac$, $cc$, and 
$aa$, all with $S_{1}=1$ and $D_{1}=1$. Thus when $\pi_{aa}>0$,

{\footnotesize\begin{align*}
F_{Y| DSZ=1}(y)=F_{Y_{1}| ac\cup cc\cup aa}(y).
\end{align*}}%
Otherwise, 

{\footnotesize\begin{align*}
F_{Y| DSZ=1}(y)=F_{Y_{1}| ac\cup cc}(y).
\end{align*}}%
Further when $\pi_{aa}>0$, among units with $DS(1-Z)=1$ the only possible type is $aa$, or those with $S_{1}=1$ and $D_{0}=1$, so 

{\footnotesize\begin{align*}
F_{Y| DS(1-Z)=1}(y)=F_{Y_{1}| aa}(y).
\end{align*}}%
Writing $F_{Y_{1}| ac\cup cc\cup aa}$ as a mixture of $F_{Y_{1}|
ac\cup cc}$ and $F_{Y_{1}| aa}$, we have 

{\footnotesize\begin{align*}
F_{Y_{1}| ac\cup cc\cup aa}(y)=\frac{\pi _{ac}+\pi _{cc}}{\pi _{ac}+\pi
_{cc}+\pi _{aa}}F_{Y_{1}| ac\cup cc}(y)+\frac{\pi _{aa}}{\pi _{ac}+\pi
_{cc}+\pi _{aa}}F_{Y_{1}| aa}(y),
\end{align*}}%
and solving for $F_{Y_{1}| ac\cup cc}(y)$ yields %
\eqref{eq:F_Y1_ac_cc_iden_appendix}.
\end{proof}

\subsubsection{Dominance over CF bounds: Preliminaries}

For clarity, observe the following equivalences of conditional distributions:

{\footnotesize\[
\bigl(Y_1 \mid S_1 = 1,\; D_1 > D_0\bigr) \equiv \bigl(Y_1 \mid ac \cup cc\bigr), \qquad
\bigl(Y \mid DSZ = 1\bigr) \equiv \bigl(Y_1 \mid ac \cup cc \cup aa\bigr),
\]}%
{\footnotesize\[
\bigl(Y \mid DS(1-Z) = 1\bigr) \equiv \bigl(Y_1 \mid aa\bigr).\] }%
As in the main text, let $\beta _{L,1}$ denote the first component (pertaining to $Y_{1}$) of our
lower bound (i.e.~excluding $\beta _{0}$), 

{\footnotesize\begin{align*}
\beta _{L,1}\text{=}E\left[ Y_{1}| Y_{1}\leq Q_{1}(p),ac \cup cc\right]
,\qquad p:=\frac{\pi _{ac}}{\pi _{ac}+\pi _{cc}}.
\end{align*}}%
Let $\beta_{L,1}^{mix}$ and $\beta_{L,1}^{adj}$ be the first components of the CF basic and alternative lower bounds obtained by trimming $Y| DSZ=1$ or equivalently
$Y_{1}| ac\cup cc\cup aa$ at fractions $p^{CF}_1$ and $p^{CF}_2$:

{\footnotesize\begin{align*}
\beta_{L,1}^{mix}\text{=}E\left[ Y_{1}| Y_{1}\leq
Q_{1}^{CF}(p^{CF}_1),ac \cup cc \cup aa\right] ,\qquad p^{CF}_1:=\frac{\pi _{ac}}{\pi
_{ac}+\pi _{cc}+\pi _{aa}},
\end{align*}}%

{\footnotesize\begin{align*}
\beta_{L,1}^{adj}\text{=}\left( 1+\frac{\pi _{aa}}{\pi _{ac}}\right)E\left[ Y_{1}| Y_{1}\leq Q_{1}^{CF}(p^{CF}_2
),ac \cup cc \cup aa\right] -\frac{\pi
_{aa}}{\pi _{ac}}\,E\left[ Y_{1}| aa\right] ,\qquad p^{CF}_2 :=\frac{%
\pi _{ac}+\pi _{aa}}{\pi _{ac}+\pi _{cc}+\pi _{aa}}.
\end{align*}}%
The CF lower bound uses 

{\footnotesize\begin{align*}
\beta_{L,1}^{CF}:=\max \{\beta_{L,1}^{mix},\beta_{L,1}^{adj}\}.
\end{align*}}
The formal proof of Proposition \ref{prop:CF_dominance_main} is provided below.

\begin{proof}
We sketch the argument for the lower bounds; the upper bounds are treated
analogously by symmetry.

First of all, if $\pi_{aa}=0$, i.e., $Pr\left(DS\left(1-Z\right)=1\right)=0$, then $F_{Y_1|ac \cup cc \cup aa}\left(y\right)$ reduces to $F_{Y_1|ac \cup cc}\left(y\right)$ and the trimming threshold $p^{CF}_1=p^{CF}_2=p$, so $\beta_{L,1}=\beta_{L,1}^{CF}$ and $\beta_{U,1}=\beta_{U,1}^{CF}$. In the following, we assume $\pi_{aa}>0$. 

$\beta _{L,1}$ is obtained by assuming that $ac
$ individuals are at the bottom $p$:=$\frac{\pi _{ac}}{\pi _{ac}+\pi _{cc}}$
fraction of the conditional distribution $Y_{1}|ac \cup cc$, i.e., $\beta _{L,1}$ assumes%

{\footnotesize\begin{align}
F_{Y_{1}|ac}\left( y\right) =F_{Y_{1}|ac}^{dh}\left( y\right) :=
\begin{cases}
\frac{1}{p}F_{Y_{1}|ac\cup cc}\left( y\right) &\text{ if }y<Q_{1}\left(
p\right)  \\ 
1&\text{ if }y\geq Q_{1}\left( p\right) 
\end{cases}
\label{eq:F_Y1_ac}
\end{align}}%
where $Q_{1}\left( p\right) :=F_{Y_{1}|ac\cup cc}^{-1}\left( p\right) $%
. Following Equation (\ref{eq:F_Y1_ac_cc_iden_appendix}), 

{\footnotesize \begin{align}
F_{Y_{1}|ac\cup cc}\left( y\right)  =\left( 1+\frac{\pi _{aa}}{\pi _{ac}+\pi
_{cc}}\right) F_{Y_{1}|ac\cup cc\cup aa}(y)  
-\frac{\pi _{aa}}{\pi _{ac}+\pi _{cc}}F_{Y_{1}|aa}(y).  \label{eq:F_Y1_ac,cc}
\end{align}}%
Plugging in Equation (\ref{eq:F_Y1_ac,cc}) into (\ref{eq:F_Y1_ac}) shows that $\beta_{L,1}$ assumes%

{\footnotesize\begin{align}
F_{Y_1|ac}\left( y\right)= F_{Y_{1}|ac}^{dh}\left( y\right) =
\begin{cases}
\frac{1}{p^{CF}_1}F_{Y_{1}|ac\cup cc\cup aa}(y)-\frac{\pi _{aa}}{\pi _{ac}}%
F_{Y_{1}|aa}(y) &\text{ if } y<Q_{1}\left( p\right)  \\ 
1 & \text{ if } y\geq Q_{1}\left( p\right) 
\end{cases}
\label{eq:F_Y1_ac_dh}
\end{align}}

In contrast, $\beta_{L,1}^{mix}$ is obtained by assuming that $ac$
individuals are at the bottom $p^{CF}_1$:=$\frac{\pi _{ac}}{\pi _{ac}+\pi _{cc}+\pi
_{aa}}$ fraction of the $Y_{1}|ac\cup cc\cup aa$ distribution (the CF restriction 1), i.e.,

{\footnotesize\begin{align}
F_{Y_{1}|ac}\left( y\right) =F_{Y_{1}|ac}^{mix}\left( y\right) :=%
\begin{cases}
\frac{1}{p^{CF}_1}F_{Y_{1}|ac\cup cc\cup aa}(y) &\text{ if } y<Q_{1}^{cf}\left(
p^{CF}_1\right)  \\ 
1 &\text{ if } y\geq Q_{1}^{cf}\left( p^{CF}_1\right) 
\end{cases}%
,  \label{eq:F_Y1_ac_basic}
\end{align}}%
where $Q_{1}^{cf}\left( p^{CF}_1\right) :=F_{Y_{1}|ac\cup cc \cup aa}^{-1}( p^{CF}_1)$.

Putting Equations (\ref{eq:F_Y1_ac_dh}) and (\ref{eq:F_Y1_ac_basic}) together and noticing $%
Q_{1}^{cf}\left( p^{CF}_1\right) \leq Q_{1}\left( p\right) $, we have 

{\footnotesize\begin{align}
F^{mix}_{Y_1|ac}(y)-F^{dh}_{Y_1|ac}(y)
=
\begin{cases}
\displaystyle
\frac{\pi_{aa}}{\pi_{ac}}\,F_{Y_1|aa}(y),
& y< Q^{cf}_1\!\left(p^{CF}_1\right),\\[1em]
\displaystyle
1-\frac{1}{p}F_{Y_1|ac\cup cc}(y)>0,
& Q^{cf}_1\!\left(p^{CF}_1\right)\le y< Q_1(p),\\[1em]
0,
& y\ge Q_1(p).
\end{cases}\label{eq:Diff_F_Y1_ac}
\end{align}
}
To show $Q_{1}^{cf}\left( p^{CF}_1\right) \leq Q_{1}\left( p\right)$, evaluating Equation (\ref{eq:F_Y1_ac,cc}) at $y=Q_{1}\left( p \right)$ and rearranging yields 

{\footnotesize\begin{align}
F_{Y_{1}|ac\cup cc\cup aa}(Q_{1}\left( p\right) ) &=  p^{CF}_1+\frac{\pi _{aa}}{\pi
_{ac}+\pi _{cc}+\pi _{aa}}F_{Y_{1}|aa}(Q_{1}\left( p \right)) \label{eq:F_Y1_ac_cc_aa_eval} \\ 
&\geq p^{CF}_1=F_{Y_{1}|ac\cup cc\cup aa}(Q_{1}^{cf}\left( p^{CF}_1\right) )\nonumber. 
\end{align}}
$Q_{1}^{cf}\left( p^{CF}_1\right) \leq Q_{1}\left( p\right)$ follows from that $F_{Y_{1}|ac\cup
cc\cup aa}(\cdot)$ is strictly increasing.

Note that under the maintained regularity, Equation~(\ref{eq:F_Y1_ac,cc})
implies the following equivalence: 

{\footnotesize\begin{align*}
 F_{Y_1\mid aa}(Q_1(p))>0 \iff Q_1^{cf}(p_1^{CF})<Q_1(p)\iff F_{Y_1\mid aa}(Q_1^{cf}(p_1^{CF}))>0. 
\end{align*}}
Equation (\ref{eq:Diff_F_Y1_ac}) implies $F_{Y_{1}|ac}^{mix}\left( y\right) \geq
F_{Y_{1}|ac}^{dh}\left( y\right) $ in general, i.e., the
conditional distribution $Y_{1}|ac$ assumed in $\beta _{L,1}$ weakly stochastically
dominates that assumed in $\beta_{L,1}^{mix}$. Furthermore, $F_{Y_{1}|ac}^{mix}\left( y\right) >
F_{Y_{1}|ac}^{dh}\left( y\right) $ iff $\pi_{aa}>0$ and $F_{Y_1|aa}(Q_1(p))>0$. The mean respects stochastic dominance, so 

{\footnotesize\begin{align*}
\beta _{L,1}\geq \beta_{L,1}^{mix} 
\end{align*}}%
in general, as well as  

{\footnotesize\begin{align*}
\beta _{L,1 }> \beta_{L,1}^{mix}, \quad \text{iff} \quad \pi_{aa}>0 \quad\text{and}\quad F_{Y_1|aa}(Q_1(p))>0.  
\end{align*}}

Now consider $\beta^{adj}_{L,1}$, which is based on the CF restriction 2. It assumes
that $ac$ and $aa$ are at the bottom $p^{CF}_2 $:=$\frac{\pi _{ac}+\pi _{aa}}{\pi _{ac}+\pi _{cc}+\pi _{aa}}$ fraction of the conditional distribution $%
Y_{1}|ac\cup cc\cup aa$, i.e.%

{\footnotesize\begin{align}
F_{Y_{1}|ac\cup aa}\left( y\right) =F_{Y_{1}|ac\cup aa}^{adj}\left( y\right) 
:=
\begin{cases}
\frac{1}{p^{CF}_2}F_{Y_{1}|ac\cup cc\cup aa}(y)&\text{ if }y<Q_{1}^{cf}\left(
p^{CF}_2 \right)  \\ 
1&\text{ if \ }y\geq Q_{1}^{cf}\left( p^{CF}_2 \right) 
\end{cases}%
.\label{eq:F_Y1_ac_aa_cf}
\end{align}}%

In contrast, we have shown that $\beta _{L,1}$ assumes%

{\footnotesize\begin{align*}
F_{Y_{1}|ac}\left( y\right) =F_{Y_{1}|ac}^{dh}\left( y\right) :=
\begin{cases}
\frac{1}{p^{CF}_1}F_{Y_{1}|ac\cup cc\cup aa}(y)-\frac{\pi _{aa}}{\pi _{ac}}%
F_{Y_{1}|aa}(y)&\text{ if }y<Q_{1}\left( p\right)  \\ 
1&\text{ if }y\geq Q_{1}\left( p\right) 
\end{cases} %
,
\end{align*}}%
which implies $\beta_{L,1}$ assumes, for $F_{Y_{1}|ac\cup aa}\left( y\right)$,  %

{\footnotesize \begin{align}
F_{Y_{1}|ac\cup aa}\left( y\right)  =F_{Y_{1}|ac \cup aa}^{dh}\left( y\right)
&:=\nonumber \frac{\pi _{ac}}{\pi _{ac}+\pi _{aa}}F_{Y_{1}|ac}^{dh}\left( y\right) +%
\frac{\pi _{aa}}{\pi _{ac}+\pi _{aa}}F_{Y_{1}|aa}\left( y\right)  \\
&= 
\begin{cases}
\frac{1}{p^{CF}_2}F_{Y_{1}|ac\cup cc\cup aa}(y)&\text{\ if }y<Q_{1}\left(
p\right)  \\ 
1-\frac{\pi _{aa}}{\pi _{ac}+\pi _{aa}}\left(1-F_{Y_{1}|aa}\left( y\right)\right) &\text{ if }y\geq Q_{1}\left( p\right) 
\end{cases}%
.\label{eq:F_Y1_ac_aa_dh}
\end{align}}%
\\
Putting together Equations (\ref{eq:F_Y1_ac_aa_cf}) and (\ref{eq:F_Y1_ac_aa_dh}) and noticing $Q_{1}^{cf}\left( p^{CF}_2\right) \geq Q_{1}\left( p\right)$ yield 

{\footnotesize\begin{align}
&F^{alt}_{Y_1|ac\cup aa}(y)-F^{dh}_{Y_1|ac\cup aa}(y) \notag\\
&\quad=
\begin{cases}
0,
& y< Q_1(p),\\[0.5em]
\displaystyle
\frac{\pi_{ac}+\pi_{cc}}{\pi_{ac}+\pi_{aa}}
\left(F_{Y_1|ac\cup cc}(y)-p\right) \ge 0 ,
& Q_1(p)\le y< Q^{cf}_1\!\left(p^{CF}_2\right),\\[0.9em]
\displaystyle
\frac{\pi_{aa}}{\pi_{ac}+\pi_{aa}}
\left(1-F_{Y_1|aa}(y)\right),
& y\ge Q^{cf}_1\!\left(p^{CF}_2\right).
\end{cases}
\label{eq:Diff_F_y1_ac_aa}
\end{align}} where the middle branch uses (\ref{eq:F_Y1_ac,cc}).

To show $Q_{1}^{cf}\left( p^{CF}_2\right) \geq Q_{1}\left( p\right)$, plugging into Equation (\ref{eq:F_Y1_ac_cc_aa_eval}) $p^{CF}_1=p^{CF}_2-\frac{\pi _{aa}}{\pi
_{ac}+\pi _{cc}+\pi _{aa}}$ and rearranging yields 

{\footnotesize\begin{align*}
F_{Y_{1}|ac\cup cc\cup aa}(Q_{1}\left( p\right) ) &=p^{CF}_2-\frac{\pi _{aa}}{\pi
_{ac}+\pi _{cc}+\pi _{aa}}\left(1-F_{Y_{1}|aa}(Q_{1}\left( p \right))\right) \\ 
&\leq p^{CF}_2 \\
&=F_{Y_{1}|ac\cup
cc\cup aa}(Q_{1}^{cf}\left( p^{CF}_2\right) ).
\end{align*}} 
$Q_{1}^{cf}\left( p^{CF}_2\right)\geq 
Q_{1}\left( p\right)$ follows, since $F_{Y_{1}|ac\cup
cc\cup aa}(\cdot)$ is strictly increasing.
Under the maintained regularity, Equation~(\ref{eq:F_Y1_ac,cc})
implies {\footnotesize$F_{Y_1\mid aa}(Q_1^{cf}(p_2^{CF}))<1 \iff Q_1^{cf}(p_2^{CF})>Q_1(p) \iff F_{Y_1\mid aa}(Q_1(p))<1$}. 

Equation (\ref{eq:Diff_F_y1_ac_aa}) implies $F_{Y_{1}|ac\cup aa}^{dh}\left( y\right) \leq F_{Y_{1}|ac\cup aa}^{adj}\left( y\right)$ in general, i.e., the conditional distribution $Y_{1}|ac\cup aa$ assumed in $\beta _{L,1}$ weakly stochastically
dominates that assumed in $\beta_{L,1}^{adj}$. Furthermore, $F_{Y_{1}|ac\cup aa}^{dh}\left( y\right) < F_{Y_{1}|ac\cup aa}^{alt}\left( y\right)$  iff $\pi_{aa}>0$ and $F_{Y_{1}|aa}(Q_{1}\left( p\right))<1$. The mean respects stochastic dominance, so in general

{\footnotesize\begin{align}
\frac{\pi _{ac}}{\pi _{ac}+\pi _{aa}}\beta _{L,1}+\frac{\pi _{aa}}{\pi
_{ac}+\pi _{aa}}E\left[ Y_{1}|aa\right] \geq E\left[ Y|DSZ=1,Y\leq
Q_{1}^{cf}\left( p^{CF}_2 \right) \right]   \label{eq:DH_vs_CF}
\end{align}}%
where the left-hand side is $E\left[ Y_{1}|ac\cup aa\right] $ assuming $%
F_{Y_{1}|ac\cup aa}\left( y\right) =F_{Y_{1}|ac\cup aa}^{dh}\left( y\right) $%
, while the right-hand side is $E\left[ Y_{1}|ac\cup aa\right] $ assuming $%
F_{Y_{1}|ac\cup aa}\left( y\right) =F_{Y_{1}|ac\cup aa}^{adj}\left( y\right) $.

Rearranging Equation (\ref{eq:DH_vs_CF}), we have, in general,

{\footnotesize \begin{align*}
\beta _{L,1} &\geq \left( 1+\frac{\pi _{aa}}{\pi _{ac}}\right) E\left[ Y|DSZ=1,Y\leq Q^{cf}_{1}\left( p^{CF}_2
\right) \right] -\frac{\pi _{aa}%
}{\pi _{ac}}E\left[ Y_{1}|aa\right]  \\
&=: \beta_{L,1}^{adj},
\end{align*}}
and further,

{\footnotesize\begin{align*}
\beta _{L,1 }> \beta_{L,1}^{adj}, \quad \text{iff} \quad \pi_{aa}>0 \quad\text{and}\quad F_{Y_{1}|aa}(Q_{1}\left( p\right))<1.  
\end{align*}}

The above shows (I) $\beta _{L,1}\geq \beta_{L,1}^{mix}$ in general  and  $\beta _{L,1}>\beta_{L,1}^{mix}$ iff $\pi_{aa}>0$ and $F_{Y_1|aa}\left(Q_1\left(p\right)\right)>0 $ and (II) $\beta
_{L,1}\geq \beta_{L,1}^{adj}$ in general and $\beta
_{L,1}> \beta_{L,1}^{adj}$ iff $\pi_{aa}>0$ and  $F_{Y_{1}|aa}(Q_{1}\left( p\right))<1$. Together, they imply 

{\footnotesize \begin{align*}
\beta _{L,1}\geq \beta _{L,1}^{CF}:=\max \{ \beta_{L,1}^{mix}, \beta_{L,1}^{adj}\}\quad \text{in general,}
\end{align*}}
and

{\footnotesize\begin{align*}
\beta _{L,1}> \beta _{L,1}^{CF},\quad\text{iff} \quad \pi_{aa}>0 \quad\text{and}\quad 0<F_{Y_{1}|aa}(Q_{1}\left( p\right))<1.  
\end{align*}}

\end{proof}

The intuition is as follows. By construction, $\beta_{L,1}$ is the sharp worst-case
lower bound for $E[Y_1\mid ac]$ given the identified complier mixture
$F_{Y_1\mid ac\cup cc}$ and the fact that $ac$ makes up the fraction $p$ of that
mixture. The worst case is attained by placing all $ac$ individuals below all
$cc$ individuals in $Y_1\mid ac\cup cc$.

Viewed inside the larger mixture $Y_1\mid ac\cup cc\cup aa$, this configuration
implies only that the $ac$ mass must be contained in the bottom

{\footnotesize\begin{align*}
1-\frac{\pi_{cc}}{\pi_{ac}+\pi_{cc}+\pi_{aa}}=p_2^{CF}
\end{align*}}
fraction of the distribution. Its exact location within that region is otherwise
unrestricted. The CF basic bound corresponds to the most unfavorable  placement within
this class, namely the one in which $ac$ occupies exactly the bottom

{\footnotesize\begin{align*}
p_1^{CF}=\frac{\pi_{ac}}{\pi_{ac}+\pi_{cc}+\pi_{aa}}
\end{align*}}
fraction. In that limiting case, $\beta_{L,1}=\beta_{L,1}^{mix}$. Any less
adverse placement shifts some $ac$ mass upward within the bottom $p_2^{CF}$
fraction, which raises $E[Y_1\mid ac]$ and hence $\beta_{L,1}$, while leaving
$\beta_{L,1}^{mix}$ unchanged. Therefore,

{\footnotesize\begin{align*}
\beta_{L,1}\geq \beta_{L,1}^{mix}.
\end{align*}}
For $\beta_{L,1}^{adj}$, CF Restriction~2 requires that $ac$ and $aa$ together
occupy the bottom $p_2^{CF}$ fraction of $Y_1\mid ac\cup cc\cup aa$, and then
uses the identified mean $E[Y_1\mid aa]$ to back out a lower bound on
$E[Y_1\mid ac]$. Under our sharp worst-case configuration, $ac$ still lies
entirely below $cc$, whereas $aa$ may be located anywhere in the mixture. If all
$aa$ mass also lies in the bottom $p_2^{CF}$ fraction, then
$\beta_{L,1}=\beta_{L,1}^{adj}$. If some $aa$ mass is shifted to higher
quantiles, the lower bound implied by CF Restriction~2 becomes smaller, while
$\beta_{L,1}$ remains unchanged. Hence,

{\footnotesize\begin{align*}
\beta_{L,1}\geq \beta_{L,1}^{adj}.
\end{align*}}
Combining the two comparisons yields

{\footnotesize\begin{align*}
\beta_{L,1}\geq \beta_{L,1}^{CF}:=\max\{\beta_{L,1}^{mix},\beta_{L,1}^{adj}\}.
\end{align*}}
The upper-bound comparison follows by the same argument applied to the upper
tails.

\newpage

\section{Supplementary Material for Section \ref{sec:bounds_covariates}}
\subsection{Moment representations with covariates}

\label{app:covariate_moments}

Under Assumption~\ref{ass:conditional_IV}, the conditional versions of Lemma~%
\ref{lem:IV_identities} imply, 

{\footnotesize\begin{align}
\lambda _{1}(x)& :=P (S_{1}=1,D_{1}>D_{0}| X=x)=E[W(Z,X)\,DS|
X=x],  \label{eq:lambda1x} \\
\lambda _{0}(x)& :=P (S_{0}=1,D_{1}>D_{0}|
X=x)=E[W(Z,X)\,(D-1)S| X=x],  \label{eq:lambda0x}
\end{align}}%
and, for any $y$, 

{\footnotesize\begin{align}
F_{1}\left( y|x\right) & :=F_{Y_{1}| S_{1}=1,D_{1}>D_{0},X=x}(y)=%
\frac{E[W(Z,X)\,DS \cdot \mathbbm{1}\{Y\leq y\}| X=x]}{E[W(Z,X)DS| X=x]},
\label{eq:F1_cond_appendix} \\
F_{0}\left( y|x\right) & :=F_{Y_{0}| S_{0}=1,D_{1}>D_{0},X=x}(y)=%
\frac{E[W(Z,X)\,(D-1)S\cdot \mathbbm{1}\{Y\leq y\}| X=x]}{E[W(Z,X)(D-1)S| X=x]%
}.  \label{eq:F0_cond_appendix}
\end{align}}%
where the inverse probability weight $W(Z,X)$:=$\frac{Z}{e(X)}-\frac{1-Z}{1-e(X)}$ with $e(x)
$:=$P(Z=1|X=x)$. The conditional quantiles $Q_{d}(u,x)$, $d=0,1$, used in %
\eqref{eq:betaL_cond_plus}-\eqref{eq:betaU_cond_minus} are defined as the
generalized inverses of these CDFs.

Using these identities and the same arguments as in Section~\ref%
{subsec:baseline_bounds}, one can verify the numerator moment function for $%
\beta _{L}$. For $x\in \mathcal{X}^{+}$, it is 

{\footnotesize \begin{align*}
\beta _{L}^{+}\left( x\right) \pi _{ac}\left( x\right)  
&=\beta _{L}^{+}\left( x\right) P \left( S_{0}=1,D_{1}>D_{0}|X=x\right) 
\\
&=E\left[ S_{1}Y_{1}\cdot \mathbbm{1}\left\{ Y_1\leq Q_{1}\left( p\left( x\right)
,x\right) \right\} \cdot \mathbbm{1}\left\{ D_{1}>D_{0}\right\} |{\small X=x}\right] 
\\
&-E\left[ S_{0}Y_{0}\cdot \mathbbm{1}\left\{ D_{1}>D_{0}\right\} |{\small X=x}\right] 
\\
&=E\left[ W\left( Z,X\right) \left( DSY\cdot \mathbbm{1}\left\{ Y\leq Q_{1}\left(
p\left( x\right) ,x\right) \right\} +\left( 1-D\right) SY\right) |X=x\right] 
\end{align*}}%
where the first equality follows from Assumption~\ref{ass:conditional_monotoneS}, the second from the definition
of $\beta_L^+(x)$ and the law of iterated expectations, and the last equality follows
from the conditional version of Lemma~\ref{lem:IV_identities}, applied with
$g(y)=y\,\mathbbm{1}\{y\le Q_1(p(x),x)\}$ for the treated term and with $D$ replaced by $(D-1)$
for the untreated term. 

Similarly for $x\in \mathcal{X}^{-}$, the numerator moment function is 

{\footnotesize \begin{align*}
\beta _{L}^{-}\left( x\right) \pi _{ac}\left( x\right)  
&=\beta _{L}^{-}\left( x\right) P \left( S_{1}=1,D_{1}>D_{0}|X=x\right) 
\\
&=E\left[ W\left( Z,X\right) \left( DSY+\left( 1-D\right) SY\cdot \mathbbm{1}\left\{
Y\geq Q_{0}\left( 1-1/p\left( x\right) ,x\right) \right\} \right) |X=x\right]
.
\end{align*}}

Analogously, one can derive the numerator moment function for $\beta _{U}$.
For $x\in \mathcal{X}^{+}$, it is 

{\footnotesize \begin{align*}
\beta _{U}^{+}\left( x\right) \pi _{ac}\left( x\right)  
&=\beta _{U}^{+}\left( x\right) P \left( S_{0}=1,D_{1}>D_{0}|X=x\right) 
\\
&=E\left[ S_{1}Y_{1}\cdot \mathbbm{1}\left\{ Y_1\geq Q_{1}\left( 1-p\left( x\right)
,x\right) \right\} \cdot \mathbbm{1}\left\{ D_{1}>D_{0}\right\} |{\small X=x}\right] 
\\
&-E\left[ S_{0}Y_{0}\cdot \mathbbm{1}\left\{ D_{1}>D_{0}\right\} |{\small X=x}\right] 
\\
&=E\left[ W\left( Z,X\right) \left( DSY\cdot \mathbbm{1}\left\{ Y\geq Q_{1}\left(
1-p\left( x\right) ,x\right) \right\} +\left( 1-D\right) SY\right) |X=x%
\right]. 
\end{align*}}

Similarly, for $x\in \mathcal{X}^{-}$, it is

{\footnotesize \begin{align*}
\beta _{U}^{-}\left( x\right) \pi _{ac}\left( x\right)  
&=\beta _{U}^{-}\left( x\right) P \left( S_{1}=1,D_{1}>D_{0}|X=x\right) 
\\
&=E\left[ W\left( Z,X\right) \left( DSY+\left( 1-D\right) SY\cdot \mathbbm{1}\left\{
Y\leq Q_{0}\left( 1/p\left( x\right) ,x\right) \right\} \right) |X=x\right] .
\end{align*}}

|

\newpage

\section{Supplementary Material for Section \ref{sec:inference}}
\subsection{Derivation of Influence Functions}

\subsubsection{Unconditional Case with Strong Sample Selection Monotonicity}
\subsubsection*{Basic Decomposition}
Under strong sample selection monotonicity and no covariates for both bounds we can decompose\footnote{
Note that $\beta_0$ here is used different from Section \ref{subsec:baseline_bounds}, i.e., it is the sign-adjusted control mean ($+$ instead of $-$ in the definition of $\beta_B$). This is for notational simpilcity only.} 

{\footnotesize \begin{align*}
    \beta_B &= \beta_{B,1} + \beta_0,
\end{align*}}
where 

{\footnotesize \begin{align*}
    \beta_{B,1} &= -\frac{1}{(m_1 - m_0)}[r_1\psi_{B,1} - r_0\psi_{B,0}], \\
    \beta_{0} &= -\frac{1}{(m_1 - m_0)}[\mu_1 - \mu_0],
\end{align*}}
where we denote shorthand 

{\footnotesize \begin{align*}
    m_z &= E[(1-D)S|Z=z], \\
    r_z &= E[DS|Z=z], \\
    \mu_z &= E[(1-D)SY|Z=z], \\
    \psi_{L,z} &= E[Y\mathbbm{1}(Y \leq Q_1(p))|DS=1,Z=z], \\
    \psi_{U,z} &= E[Y\mathbbm{1}(Y > Q_1(1-p))|DS=1,Z=z]. 
\end{align*}}
Now consider a parametric submodel indexed by $t \in (0,1]$. From the definition and the product rule it follows that 

{\footnotesize \begin{align*}
    -[m_1 - m_0]\dt \beta_{B,1,t}\bigg|_{t=0}
    &= \dt r_{1,t}\bigg|_{t=0}\psi_{B,1} - \dt r_{0,t}\bigg|_{t=0}\psi_{B,0} \\
    &\quad + r_1\dt \psi_{B,1,t}\bigg|_{t=0} - r_0 \dt\psi_{B,0,t}\bigg|_{t=0} + \beta_{B,1}\dt[m_{1,t} - m_{0,t}]\bigg|_{t=0}
\end{align*}}
and

{\footnotesize \begin{align*}
    -[m_1 - m_0]\dt \beta_{0,t}\bigg|_{t=0}
    &= \dt[\mu_{1,t}- \mu_{0,t}]\bigg|_{t=0} + \beta_0
\dt [m_{1,t} - m_{0,t}]\bigg|_{t=0}.\end{align*}}
We omit denoting the evaluation of the derivative at $t = 0$ in what follows whenever it does not cause confusion.

\subsubsection*{Quantile Auxiliary Results}
Differentiating the result of Lemma \ref{lem:IV_identities} for $g(Y) = \mathbbm{1}(Y\leq y)$ with respect to $y$ yields density function 

{\footnotesize \begin{align*}
    f_{Y_1|S_1=1,D_1>D_0}(y) &= \frac{r_1f_1(y) - r_0f_0(y)}{r_1 - r_0},
\end{align*}}
where $f_z(y) = f_{Y|DS=1,Z=z}(y)$. Using the definition of the quantile as well as Leibniz' rule for differentiation then yields

{\footnotesize \begin{align*}
    p &= \int^{Q_1(p)}f_{Y_1|S_1=1,D_1>D_0}(y)dy \\
    \Rightarrow -[m_1 - m_0] &= p[r_1 - r_0] \\
    &= r_1 \int^{Q_1(p)}f_1(y)dy - r_0\int^{Q_1(p)}f_0(y)dy \\
    \Rightarrow -\dt[m_{1,t} - m_{0,t}] &= \dt r_{1,t}F_1(Q_1(p)) - \dt r_{0,t} F_0(Q_1(p))\\
    &\quad + [r_1f_1(Q_1(p)) - r_0f_0(Q_1(p))]\dt Q_{1,t}(p_t) \\
    &\quad + r_1 \int^{Q_1(p)}\dt f_{1,t}(y)dy - r_0 \int^{Q_1(p)}\dt f_{0,t}(y)dy \\
    \Rightarrow [r_1f_1(Q_1(p)) &- r_0f_0(Q_1(p))]\dt Q_{1,t}(p_t) \\
    &= -\dt[m_{1,t} - m_{0,t}] - \left[\dt r_{1,t}F_1(Q_1(p)) - \dt r_{0,t} F_0(Q_1(p)) \right] \\
    &\quad - \left[r_1 \int^{Q_1(p)}\dt f_{1,t}(y)dy - r_0 \int^{Q_1(p)}\dt f_{0,t}(y)dy\right].
\end{align*}}
Equivalently 

{\footnotesize \begin{align*}
    p &= \int_{Q_1(1-p)}f_{Y_1|S_1=1,D_1>D_0}(y)dy \\
    \Rightarrow -[m_1 - m_0] &= p[r_1 - r_0] \\
    &= r_1 \int_{Q_1(1-p)}f_1(y)dy - r_0\int_{Q_1(1-p)}f_0(y)dy \\
    \Rightarrow -\dt[m_{1,t} - m_{0,t}] &= \dt r_{1,t}[1-F_1(Q_1(1-p))] - \dt r_{0,t} [1-F_0(Q_1(1-p))]\\
    &\quad - [r_1f_1(Q_1(1-p)) - r_0f_0(Q_1(1-p))]\dt Q_{1,t}(1-p_t) \\
    &\quad + r_1 \int_{Q_1(1-p)}\dt f_{1,t}(y)dy - r_0 \int_{Q_1(1-p)}\dt f_{0,t}(y)dy \\
    \Rightarrow -[r_1f_1(Q_1(p)) &- r_0f_0(Q_1(p))]\dt Q_{1,t}(1-p_t) \\
    &= -\dt[m_{1,t} - m_{0,t}] - \left[\dt r_{1,t}[1-F_1(Q_1(1-p))] - \dt r_{0,t} [1- F_0(Q_1(1-p))] \right] \\
    &\quad - \left[r_1 \int_{Q_1(1-p)}\dt f_{1,t}(y)dy - r_0 \int_{Q_1(1-p)}\dt f_{0,t}(y)dy\right].
\end{align*}}

\subsubsection*{Trimmed Mean Auxiliary Result}
Applying Leibniz' rule to the trimmed means yields 

{\footnotesize \begin{align*}
    \dt \psi_{L,z,t} &= \dt \int^{Q_{1,t}(p_t)}yf_{z,t}(y)dy \\
    &= Q_1(p)f_z(Q_1(p))\dt Q_{1,t}(p_t) + \int^{Q_1(p)}y \dt f_{z,t}(y)dy \\
    \dt \psi_{U,z,t} &= \dt \int_{Q_{1,t}(1-p_t)}yf_{z,t}(y)dy \\
    &= -Q_1(1-p)f_z(Q_1(1-p))\dt Q_{1,t}(1-p_t) + \int_{Q_1(1-p)}y \dt f_{z,t}(y)dy,
\end{align*}}
which implies that

{\footnotesize \begin{align*}
    r_1 \dt \psi_{L,1,t} - r_0 \dt \psi_{L,0,t}
    &= Q_1(p)[r_1f_1(Q_1(p)) - r_0f_0(Q_1(p))]\dt Q_{1,t}(p_t)\\
    &\quad + \int^{Q_1(p)}y\left[r_1 \dt f_{1,t}(y) - r_0 \dt f_{0,t}(y)\right]dy \\
     r_1 \dt \psi_{U,1,t} - r_0 \dt \psi_{U,0,t}
    &= -Q_1(1-p)[r_1f_1(Q_1(1-p)) - r_0f_0(Q_1(1-p))]\dt Q_{1,t}(1-p_t) \\
    &\quad + \int_{Q_1(1-p)}y\left[r_1 \dt f_{1,t}(y) - r_0 \dt f_{0,t}(y)\right]dy. \\
\end{align*}}
\subsubsection*{Scaled $\beta_B$ Pathwise Derivative}
Plugging in both auxiliary results thus yields 

{\footnotesize \begin{align*}
    -&[m_1 - m_0]\dt \beta_{L,1,t}
    = \dt r_{1,t}\psi_{L,1} - \dt r_{0,t}\psi_{L,0} + \beta_{L,1}\dt [m_{1,t} - m_{0,t}] \\
    &\quad - Q_1(p)\bigg( \dt[m_{1,t} - m_{0,t}] + \dt r_{1,t}F_1(Q_1(p)) - \dt r_{0,t}F_0(Q_1(p)) \\
    &\qquad \ \qquad + \int^{Q_1(p)}\left[r_1 \dt f_{1,t}(y) - r_0 \dt f_{0,t}\right]dy\bigg) 
     + \int^{Q_1(p)}y\left[r_1 \dt f_{1,t}(y) - r_0 \dt f_{0,t}\right]dy,
\end{align*}}
as well as 

{\footnotesize \begin{align*}
     -&[m_1 - m_0]\dt \beta_{U,1,t}
    = \dt r_{1,t}\psi_{U,1} - \dt r_{0,t}\psi_{U,0} + \beta_{U,1}\dt [m_{1,t} - m_{0,t}] \\
    &\quad - Q_1(1-p)\bigg( \dt[m_{1,t} - m_{0,t}] + \dt r_{1,t}[1-F_1(Q_1(1-p))] - \dt r_{0,t}[1-F_0(Q_1(1-p))]\\
    &\qquad \ \qquad  + \int_{Q_1(1-p)}\left[r_1 \dt f_{1,t}(y) - r_0 \dt f_{0,t}\right]dy\bigg) 
     + \int_{Q_1(1-p)}y\left[r_1 \dt f_{1,t}(y) - r_0 \dt f_{0,t}\right]dy.
\end{align*}}

\subsubsection*{Influence Function Primitive Components}
 In the following, we provide the influence function of the primitives via the pathwise derivative of the parameter \(\beta_B(P)\) in a fully nonparametric model satisfying Assumptions A.1--A.7 without covariates $X$. We use the previous derivations and standard influence functions for all unrestricted conditional mean primitives, see,  e.g., \cite{kennedy2024semiparametric}. 

{\footnotesize \begin{align*}
    \dt \mu_{z,t} &= \frac{\mathbbm{1}(Z=z)}{P(Z=z)}((1-D)SY - \mu(z)), \\
    \Rightarrow \dt[\mu_{1,t} - \mu_{0,t}] &= W(Z)((1-D)SY - \mu(Z)),
\end{align*}}
and

{\footnotesize \begin{align*}
    \dt m_{z,t} &= \frac{\mathbbm{1}(Z=z)}{P(Z=z)}((1-D)S - m(z)), \\
    \Rightarrow \dt [m_{1,t} - m_{0,t}] &= W(Z)((1-D)S - m(Z)), \\
    \dt r_{z,t} &= \frac{\mathbbm{1}(Z=z)}{P(Z=z)}(DS - r(z)), \\
    \Rightarrow \dt r_{1,t}\psi_{B,1} - \dt r_{0,t}\psi_{B,0} &= W(Z)(DS - r(Z))\psi_{B}(Z), \\
    \dt r_{1,t}F_1(Q_1(p))- \dt r_{0,t}F_0(Q_1(p)) &= W(Z)(DS - r(Z))F(Q_1(p)|1,Z), \\
    \dt r_{1,t}[1-F_1(Q_1(1-p))]- \dt r_{0,t}[1-F_0(Q_1(1-p))] &= W(Z)(DS - r(Z))[1-F(Q_1(1-p)|1,Z)],
\end{align*}}
where $\psi_{B}(Z) = Z\psi_{B,1} + (1-Z)\psi_{B,0}$ and $F(u|Z) = ZF_1(u) + (1-Z)F_0(u)$. Now consider the conditional probability $f_z(y) = f(y|DS=1,Z=z)$. By standard arguments, its influence function is given by 

{\footnotesize \begin{align*}
    \dt f_{z,t}(y) &= \frac{1(Z=z)DS}{P(DS=1|Z=z)P(Z=z)}(\mathbbm{1}(Y = y) - f_z(y)) \\
    \Rightarrow r_z \int^{Q_1(p)}\dt f_{z,t}(y)dy &= \frac{1(Z=z)DS}{P(Z=z)}\left(\mathbbm{1}(Y \leq Q_1(p)) - \int^{Q_1(p)}f_z(y)dy\right) \\&= \frac{1(Z=z)DS}{P(Z=z)}\left(\mathbbm{1}(Y \leq Q_1(p)) - F_z(Q_1(p))\right).
\end{align*}}
Thus we have that 

{\footnotesize \begin{align*}
    \int^{Q_1(p)}\left[r_1\dt f_{1,t}(y) - r_0 \dt f_{0,t}(y)\right]dy &= W(Z)DS\bigg(\mathbbm{1}(Y \leq Q_1(p)) - F(Q_1(p)|1,Z)\bigg) \\
\end{align*}}
and, by similar derivations, 

{\footnotesize \begin{align*}
    \int^{Q_1(p)}y\left[r_1\dt f_{1,t}(y) - r_0 \dt f_{0,t}(y)\right]dy &= W(Z)DS\bigg(Y\mathbbm{1}(Y \leq Q_1(p)) - \psi_L(Z)\bigg). \\
\end{align*}}
The upper bound components are found analogously as 

{\footnotesize \begin{align*}
    \int_{Q_1(1-p)}\left[r_1\dt f_{1,t}(y) - r_0 \dt f_{0,t}(y)\right]dy &= W(Z)DS\bigg(\mathbbm{1}(Y > Q_1(1-p)) - [1-F(Q_1(1-p)|1,Z)]\bigg) \\
\end{align*}}
and  

{\footnotesize \begin{align*}
    \int_{Q_1(1-p)}y\left[r_1\dt f_{1,t}(y) - r_0 \dt f_{0,t}(y)\right]dy &= W(Z)DS\bigg(Y\mathbbm{1}(Y > Q_1(1-p)) - \psi_U(Z)\bigg). \\
\end{align*}}

\subsubsection*{Influence Function for Combined Rescaled $\beta_0$ and $\beta_1$}
Plugging in the previous result into the rescaled pathwise derivative of $\beta_0$ yields

{\footnotesize \begin{align*}
    -[m_1 - m_0]\dt \beta_{0,t}
    &= W(Z)((1-D)SY - \mu(Z)) + \beta_0 W(Z)((1-D)S - m(Z)).
\end{align*}}
For the lower bound we obtain 

{\footnotesize \begin{align*}
    -[m_1 - m_0]\dt \beta_{L,1} 
    &= W(Z)(DS - r(Z))\psi_L(Z) + \beta_{L,1}W(Z)((1-D)S - m(Z)) \\
    &\ - Q_1(p)W(Z)\bigg[((1-D)S - m(Z)) + (DS-r(Z))F(Q_1(p)|1,Z)\\
    &\qquad \ \qquad + DS(\mathbbm{1}(Y \leq Q_1(p)) - F(Q_1(p)|1,Z)))\bigg] \\
    &\ + W(Z)DS(Y\mathbbm{1}(Y \leq Q_1(p)) - \psi_L(Z)),
\end{align*}}
while for the upper bound we have 

{\footnotesize \begin{align*}
    -[m_1 - m_0]\dt \beta_{U,1} 
    &= W(Z)(DS - r(Z))\psi_U(Z) + \beta_{U,1}W(Z)((1-D)S - m(Z)) \\
    &\ - Q_1(1-p)W(Z)\bigg[((1-D)S - m(Z)) + (DS-r(Z))[1-F(Q_1(1-p)|1,Z)]\\
    &\qquad \ \qquad + DS(\mathbbm{1}(Y > Q_1(1-p)) - [1-F(Q_1(1-p)|1,Z)]))\bigg] \\
    &\ + W(Z)DS(Y\mathbbm{1}(Y > Q_1(1-p)) - \psi_U(Z)).
\end{align*}}

\subsubsection{Conditional Case with Strong Sample Selection Monotonicity}
We now add covariates to the problem. Recall that, for any bound $\beta_B$, by definition, {\footnotesize \begin{align*}
    \beta_B &= \frac{\int \beta_B(x)\pi_{ac}(x)dP(x)}{\int \pi_{ac}(x)dP(x)} \\
    &= \frac{1}{\int \pi_{ac}(x)dP(x)}\left[\int \beta_{B,1}(x)\pi_{ac}(x)dP(x) +\int\beta_0(x)\pi_{ac}(x)dP(x)\right] \\
    &\equiv \frac{N_{B,1} + N_0}{\pi_{ac}}.
\end{align*}}

For simplicity and without loss of generality, we now treat \(X\) as taking values in a countable set.  
In general, the tangent space of the model is the closure of the linear span of 
scores that are constant on finitely many measurable subsets of the support of \(X\).  
Coarsening \(X\) by such a finite partition produces a discrete analogue \(X^{disc}\) for which 
the calculation below applies verbatim.  As the partition becomes finer, the functional 
\(\beta_B^{disc}(P)\) converge to \(\beta_B(P)\), and the corresponding pathwise derivatives converge 
as well, with Assumptions A.3--A.5 ensuring the passing limits through the derivative.  
Since scores are dense in the tangent space and the derivative operator is continuous, 
the influence functions obtained under the discrete-\(X\) approximation will also be valid for general \(X\).  In particular, we consider point-mass perturbations in the direction 
\(O = (YS,S,D,Z,X)\) using submodels of the form 
\(dP_t = (1-t)\,dP_0 + t\,\delta_o\), where $P_0 \in \mathcal{P}$ denotes the true law and 
\(\mathcal{P}\) the nonparametric model.  By standard results for conditional means 
(e.g., \cite{bickel1993efficient}, Ch.~3), these perturbations span the 
tangent space.
For any functional parameter, $\IF$ is then influence function operator that maps from the functional to its Riesz representer under the point mass perturbation submodel. We obtain 

{\footnotesize \begin{align*}
    \IF(\beta_B) &= \frac{1}{\pi_{ac}}\bigg[\IF(N_{B,1}) + \IF(N_0) - \beta_B \IF(\pi_{ac})\bigg].
\end{align*}}  

To derive the components first note the following auxiliary results {\footnotesize \begin{align*}
    \IF(\pi_{ac}(x)) &= - \IF(m(1,x) - m(0,x)) \\
    \IF(m(z,x)) &= \frac{\mathbbm{1}(Z=z,X=x)}{P(Z=z|X=x)P(X=x)}((1-D)S - m(z,x)) \\
    \Rightarrow \IF(\pi_{ac}(x))P(X=x) &= -\mathbbm{1}(X=x)W(Z,X)((1-D)S - m(Z,X)),
\end{align*}}
where $W(z,x) = z/P(Z=1|X=x) - (1-z)/(1-P(Z=1|X=x))$. The applicability of the operator to the conditional case follows from dominated convergence. Now note that 

{\footnotesize \begin{align*}
    \IF(\pi_{ac}) &= \IF\left(\sum_x\pi_{ac}(x)P(X=x)\right) \\
    &= \sum_x \IF(\pi_{ac}(x))P(X=x) + \pi_{ac}(X) - \pi_{ac} \\
    &= -\bigg[W(Z,X)((1-D)S - m(Z,X)) + [m(1,X) - m(0,X)]\bigg] - \pi_{ac}.
\end{align*}}
Next note that, for any $B$ and $j=0,1$, we obtain decomposition 

{\footnotesize \begin{align*}
    &\IF({N_j}) = \IF\left({\sum_x\beta_{B,j}(x)\pi_{ac}(x)P(X=x)}\right) \\
    &= \sum_x \IF(\beta_{B,j}(x))\pi_{ac}(x)P(X=x) + \sum_x \beta_{B,j}(x)\IF(\pi_{ac}(x))P(X=x) + \sum_x \beta_{B,j}(x)\pi_{ac}(x)\mathbbm{1}(X=x) - N_j 
\end{align*}}
and equivalently for $\beta_0 = \beta_{B,0}$. Now note that, as $\pi_{ac}(x) = -[m(1,x) - m(0,x)]$, we have that 

{\footnotesize \begin{align*}
\pi_{ac}(x)\IF(\beta_{B,1}(x))P(X=x)
&= \mathbbm{1}(X=x) \times \bigg[ -[m(1,x) - m(0,x)] \times\text{``IF of  $\beta_{B,1}$ conditional on $x$''} \bigg].
\end{align*}}
For example 

{\footnotesize \begin{align*}
    \pi_{ac}(x)&\IF(\beta_{0}(x))P(X=x) \\
    &= \mathbbm{1}(X=x)\bigg[W(Z,X)((1-D)SY - \mu(Z,X)) + \beta_0(X) W(Z,X)((1-D)S - m(Z,X))\bigg]
\end{align*}}
and analogously for $\beta_{B,1}$. Now also note that 

{\footnotesize \begin{align*}
    \beta_{B,j}(x)\IF(\pi_{ac}(x))P(X=x) &= - \mathbbm{1}(X=x)\beta_{B,j}(X)W(Z,X)((1-D)S - m(Z,X)).
\end{align*}}
Lastly, by definition, we can write 

{\footnotesize \begin{align*}
    -[m(1,x) - m(0,x)]\beta_0(x) &= \mu(1,x) - \mu(0,x), \\
    -[m(1,x) - m(0,x)]\beta_{B,1}(x) &= \psi_{B}(1,x)r(1,x) - \psi_{B}(0,x)r(0,x). 
\end{align*}}
Thus, combining expressions yields 

{\footnotesize \begin{align*}
    \IF(N_0)
    &= W(Z,X)((1-D)SY - \mu(Z,X)) + \beta_0(X)W(Z,X)((1-D)S - m(Z,X)) \\
    &\quad - \beta_0(X)W(Z,X)((1-D)S - m(Z,X)) - \beta_0(X)(m(1,X) - m(0,X)) - N_0 \\
    &= W(Z,X)((1-D)SY - \mu(Z,X)) + [\mu(1,X) - \mu(0,X)] - N_0
\end{align*}}
and 

{\footnotesize \begin{align*}
    \IF(N_{L,1})
    &= W(Z,X)(DS - r(Z,X))\psi_L(Z,X) \\
    &\quad - Q_1(p(X),X)W(Z,X)\bigg[((1-D)S - m(Z,X)) \\& \qquad  + (DS - r(Z,X))F(Q_1(p(X),X)|1,Z,X) \\
    &\qquad \ \qquad + DS(\mathbbm{1}(Y \leq Q_1(p(X),X)) - F(Q_1(p(X),X)|1,Z,X)) \bigg] \\
    &\quad + W(Z,X)DS[Y\mathbbm{1}(Y \leq Q_1(p(X),X)) - \psi_L(Z,X)] \\
    &\quad + \psi_L(1,X)r(1,X) - \psi_L(0,X)r(0,X) - N_{L,1}, 
\end{align*}}
as well as 

{\footnotesize \begin{align*}
     \IF(N_{U,1})
    &= W(Z,X)(DS - r(Z,X))\psi_U(Z,X) \\
    &\quad - Q_1(1-p(X),X)W(Z,X)\bigg[((1-D)S - m(Z,X)) \\& \qquad  + (DS - r(Z,X))[1-F(Q_1(1-p(X),X)|1,Z,X)] \\
    &\qquad \ \qquad + DS(\mathbbm{1}(Y > Q_1(1-p(X),X)) - [1-F(Q_1(1-p(X),X)|1,Z,X)]) \bigg] \\
    &\quad + W(Z,X)DS[Y\mathbbm{1}(Y > Q_1(1-p(X),X)) - \psi_U(Z,X)] \\
    &\quad + \psi_U(1,X)r(1,X) - \psi_U(0,X)r(0,X) - N_{U,1}, 
\end{align*}}
or simplified 

{\footnotesize \begin{align*}
    \IF(N_{L,1})
    &= - Q_1(p(X),X)W(Z,X)\bigg[((1-D)S - m(Z,X)) \\& \qquad + (DS - r(Z,X))F(Q_1(p(X),X)|1,Z,X) \\
    &\qquad \ \qquad + DS(\mathbbm{1}(Y \leq Q_1(p(X),X)) - F(Q_1(p(X),X)|1,Z,X)) \bigg] \\
    &\quad + W(Z,X)[DSY\mathbbm{1}(Y \leq Q_1(p(X),X)) - \psi_L(Z,X)r(Z,X)] \\
    &\quad + \psi_L(1,X)r(1,X) - \psi_L(0,X)r(0,X) - N_{L,1}, \\
     \IF(N_{U,1})
    &= - Q_1(1-p(X),X)W(Z,X)\bigg[((1-D)S - m(Z,X)) \\& \qquad  + (DS - r(Z,X))[1-F(Q_1(1-p(X),X)|1,Z,X)] \\
    &\qquad \ \qquad + DS(\mathbbm{1}(Y > Q_1(1-p(X),X)) - [1-F(Q_1(1-p(X),X)|1,Z,X)]) \bigg] \\
    &\quad + W(Z,X)[DSY\mathbbm{1}(Y > Q_1(1-p(X),X)) - \psi_U(Z,X)r(Z,X)] \\
    &\quad + \psi_U(1,X)r(1,X) - \psi_U(0,X)r(0,X) - N_{U,1}. 
\end{align*}}

\subsubsection{Conditional Case with Weak Sample Selection Monotonicity} \label{app:EIF-conditional1}
Recall, that by linearity, we have that the influence function of the bounds can be decomposed as influence functions of the following main elements 

{\footnotesize \begin{align*}
     \IF(\beta_B)
     &= \frac{1}{\pi_{ac}}\left[ \IF(N_0^+)  + \IF(N_0^-) + \IF(N_{B,1}^+) + \IF(N_{B,1}^-) - \beta_B \IF(\pi_{ac})\right].
\end{align*}}
Next note that by Assumption A.5, we have that all the components are pathwise differentiable with the classification error vanishing under the integral \citep{heiler2024treatmentevaluationintensiveextensive}. Using the analogous derivations as for the positive monotonicity case, we obtain 

{\footnotesize \begin{align*}
    \IF(\pi_{ac})
    &= -\mathbbm{1}^+\left[W(Z,X)((1-D)S - m(Z,X)) + m(1,X) - m(0,X)\right] \\
    &\quad + \mathbbm{1}^- \left[W(Z,X)(DS - r(Z,X)) + r(1,X) - r(0,X)\right] - \pi_{ac}
\end{align*}}
and as before, up to a multiplier $\mathbbm{1}^+$-based modification,

{\footnotesize \begin{align*}
    \IF(N_0^+) + N_0^+
    &= \mathbbm{1}^+(\IF(N_0) + N_0), \\
    \IF(N_{B,1}^+) + N_{B,1}^+
    &= \mathbbm{1}^+(\IF(N_{B,1}) + N_{B,1}),\\
\end{align*}}
and fully new components

{\footnotesize \begin{align*}
    \IF(N_0^-)  
    &= \mathbbm{1}^-\left\{W(Z,X)(DSY - \nu(Z,X)) + \nu(1,X)- \nu(0,X)\right\} - N_0^-,\\
    \IF(N_{L,1}^-)
    &=  \mathbbm{1}^-\bigg\{
    - Q_0(1-\dfrac{1}{p(X)},X)W(Z,X)\bigg[(DS-r(Z,X)) \\& \qquad  + ((1-D)S - m(Z,X))\left(1- F(Q_0(1-\dfrac{1}{p(X)},X)|0,Z,X)\right) \\ 
    &\quad \quad \quad + (1-D)S\left(\mathbbm{1}(Y \geq Q_0(1-\dfrac{1}{p(X)},X)) - \left[1- F(Q_0(1-\dfrac{1}{p(X)},X)|0,Z,X)\right]\right)\bigg] \\
    &\quad \quad + W(Z,X)\left((1-D)SY\mathbbm{1}(Y \geq Q_0(1-\dfrac{1}{p(X)},X)) - \psi_L^-(Z,X)m(Z,X)\right) \\
    &\quad \quad + \psi_L^-(1,x)m(1,x) - \psi_L^-(0,x)m(0,x)\bigg\} - N_{L,1}^-,\\
    \IF(N_{U,1}^-)
    &= \mathbbm{1}^-\bigg\{
    - Q_0(\dfrac{1}{p(X)},X)W(Z,X)\bigg[(DS-r(Z,X)) \\& \qquad  +  ((1-D)S - m(Z,X))\left(F(Q_0(\dfrac{1}{p(X)},X)|0,Z,X)\right) \\ 
    &\quad \quad \quad + (1-D)S\left(\mathbbm{1}(Y \leq Q_0(\dfrac{1}{p(X)},X)) - F(Q_0(\dfrac{1}{p(X)},X)|0,Z,X)\right)\bigg] \\
    &\quad \quad + W(Z,X)\left((1-D)SY\mathbbm{1}(Y \leq Q_0(\dfrac{1}{p(X)},X)) - \psi_U^-(Z,X)m(Z,X)\right) \\
    &\quad \quad + \psi_U^-(1,x)m(1,x) - \psi_U^-(0,x)m(0,x)\bigg\} - N_{U,1}^-.
\end{align*}}

\subsection{Bounding the Machine Learning Bias Remainder}
\subsubsection{Preliminaries, Structure of Influence Functions and Remainder}

For the following, at any $x$, let $F_1(\cdot| x)$ denote the true mixture CDF and $\hat F_1(\cdot| x)$ its estimator. 
For the proof, we denote shorthand $Q_1(x) = Q_1(p(x),x)$ and $\hat{Q}(x) = \hat{Q}(\hat{p}(x),x)$ obtained by inversion the mixture CDF. In particular, they are given by  the generalized left-continuous inverses at the threshold $p(x)$ and $\hat{p}(x)$ respectively. 

{\footnotesize \begin{align*}
    Q_1(x) 
= \inf\{y\in\mathbb{R}: F_1(y| x)\ge p(x)\}, \quad \hat Q_1(x) 
= \inf\{y\in\mathbb{R}: \hat F_1(y| x)\ge \hat{p}(x)\}.
\end{align*}}
We use the local neighborhood $\mathcal{N}_x:=[Q_1(x)-\xi,Q_1(x)+\xi]$. 
For generic $\Tilde{\eta}$, let 

{\footnotesize \begin{align*}
\Psi(\Tilde{\eta}):=E\big[\IF(\beta_B,\Tilde{\eta})\big]
\end{align*}}
denote the population moment of the rescaled/stabilized efficient influence function, and let \(\eta\) denote the true nuisance vector.  
The Neyman orthogonality of the EIF implies that for every admissible direction \(h\),

{\footnotesize \begin{align*}
D\Psi[\eta](h)
:=\lim_{t\to0}\frac{\Psi(\eta+t h)-\Psi(\eta)}{t}
=E\big[\partial_{\eta}\IF(\beta_B,\eta)[h]\big]
=0.
\end{align*}}
For directions \(h_1,h_2\) in the nuisance space define second directional derivative

{\footnotesize \begin{align*}
D^2\Psi[\eta](h_1,h_2)
:=\lim_{t\to0}\frac{D\Psi[\eta+ t h_2](h_1)-D\Psi[\eta](h_1)}{t},
\end{align*}}
whenever the limit exists. Under (A.3), (A.4), (A.5), and (A.6), \(D^2\Psi[\eta]\) exists in a neighborhood of \(\eta\). For \(\Delta\eta = \hat\eta - \eta\), define the path \(\eta_t=\eta+t\Delta\eta\), \(t\in[0,1]\).
A second-order expansion with integral remainder yields

{\footnotesize \begin{align*}
\Psi(\eta+\Delta\eta)-\Psi(\eta)
=\int_0^1 (1-t)\, D^2\Psi[\eta_t](\Delta\eta,\Delta\eta)\, dt,
\end{align*}}
since \(D\Psi[\eta]=0\). The second-order term is

{\footnotesize \begin{align*}
R_n(\beta_B) := \Psi(\hat\eta)-\Psi(\eta).
\end{align*}}
The cross-fitted stabilized estimating equation uses the EIF

{\footnotesize \begin{align*}
\IF(\beta_B)=
\frac{\IF(N_0^+)+\IF(N_0^-)+\IF(N_{B,1}^+)+\IF(N_{B,1}^-)-\beta_B\,\IF(\pi_{ac})}{E[\pi_{ac}(X)]}.
\end{align*}}
We now denote shorthand in what follows $\IF{\beta_B} = \IF(\beta_B,\eta)$ and $\hat\IF(\beta_B) = \IF(\beta_B,\hat{\eta})$ being the EIF with estimated nuisances.
By construction, our EIF are Neyman-orthogonal, first-order nuisance errors cancel and we have second order bias remainder

{\footnotesize \begin{align*}
R_n(\beta_B)=E[\hat\IF({\beta_B})]-E[\IF({\beta_B})].
\end{align*}}
By definition of the EIF, this can be decomposed as

{\footnotesize \begin{align*}
R_n(\beta_B)=R_n(N_0^+)+R_n(N_0^-)+R_n(N_{B,1}^+)+R_n(N_{B,1}^-)-\beta_B\,R_n(\pi_{ac}),
\end{align*}}
We now bound all these components. 
Let $O = (SY,Y,S,D,Z,X)$ denote the data.
Using the explicit form of the EIF, note that each term of 
\(D^2\Psi[\eta_t](\Delta\eta,\Delta\eta)\) can be written as a finite sum of expressions of the following form: 

{\footnotesize \begin{align*}
    \text{(A)} &\quad
E\big[\,U_t(O)\,\Delta a(O)\,\Delta b(O)\big], \\
\text{(B)} &\quad
E\big[\,U_t(O)\,\Delta a(O)\,\{G(Q(X)+\Delta Q(X))-G(Q(X))\}\big], \\
\text{(C)} & \quad
E\big[\,U_t(O)\,\Delta\varepsilon(O)\big],
\end{align*}}
where
\begin{itemize}
\item \(U_t(O)\in L^2\) are square integrable or bounded, components of the EIF (e.g., \(W(Z,X)[DS-r(Z,X)]\), \(W(Z,X)[(1-D)S-m(Z,X)]\) and similar).
\item \(\Delta a,\Delta b\) are primitive nuisance errors (one of \(\Delta\mu,\Delta\nu,\Delta m,\Delta r,\Delta e,\Delta F ,\Delta G_L, \Delta G_U\)).
\item \(G\in\{G_L,G_U\}\) are the trimmed expectations. 
\item \(Q_1\) is a true trimming or quantile boundary, and \(\Delta Q_1\) is the associated boundary error.
\item $\Delta\varepsilon(O)$ are of the classification error type arising from the lack of knowledge of the positive or negative monotonicity direction, i.e.,~proportional to $\mathbbm{1}(\lambda_0(x) < \lambda_1(x)) - \mathbbm{1}(\hat{\lambda}_0(x) < \hat{\lambda}_1(x))$ or with reversed sign.
\end{itemize}

By Cauchy--Schwarz and the local Lipschitz property of \(G\) near the boundary guaranteed by Assumption A.3 and A.4, we will show that

{\footnotesize \begin{align*}
|E[U_t\,\Delta a\,\Delta b]|
\le \|U_t\|_2\,\|\Delta a\|_2\,\|\Delta b\|_\infty,
\end{align*}}
and

{\footnotesize \begin{align*}
|E[U_t\,\Delta a\,\{G(Q_1+\Delta Q_1)-G(Q_1)\}]|
\le \|U_t\|_2\,\|\Delta a\|_2\, L_G\,\|\Delta Q_1\|_\infty.
\end{align*}}
Moreover, in the following we present a quantile bound

{\footnotesize \begin{align*}
\|\Delta Q_1\|_\infty \lesssim \|\Delta F\|_{\infty,\mathcal{N}}+\|\Delta m\|_\infty+\|\Delta r\|_\infty,
\end{align*}}
and analogously for the ``minus'' quantile. Therefore every \(\|\Delta Q_1\|_\infty\) can be bounded by a combination of the primitive nuisance errors. Overall, we show that the total remainder is bounded via mixed \(L^2\) and (localized) \(L^\infty\) norms of the primitive nuisance components, and classification error, i.e.,~of the shape 

{\footnotesize \begin{align*}
|R_n(\beta_B)|
=|\Psi(\hat\eta)-\Psi(\eta)|
\lesssim 
\sum_{(i,j)} C_{ij}\,\|\Delta\eta_i\|_{p_i}\,\|\Delta\eta_j\|_{p_j}
\;+\; R_{\mathrm{cls}}
\;+\;o_p(n^{-1/2}),
\end{align*}}
where $C_{ij}$ are finite constants only depending on the joint distribution of observables and  \(R_{\mathrm{cls}}\lesssim(\|\Delta m\|_\infty+\|\Delta r\|_\infty)^{\kappa+1}\) is a margin-dependent classification remainder.

\subsubsection{Remainder Decomposition}

By the stabilized score and orthogonality,

{\footnotesize \begin{align*}
R_n(\beta_B)=R_n(N_0^+)+R_n(N_0^-)+R_n(N_{B,1}^+)+R_n(N_{B,1}^-)-\beta_B\,R_n(\pi_{ac}),
\end{align*}}
where each block remainder is the quadratic (or classification) part arising from plugging $\hat\eta$ into their respective influence function $\IF({\cdot})$. Concretely:
\begin{itemize}
\item $R_n(N_0^\pm)$: products of errors for $\mu$, $\nu$ and $e$ plus classification errors from using estimated $\mathbbm{1}^{\pm}$.
\item $R_n(N_{B,1}^+)$: uses $\psi_{B}(z,x)$, $Q_1$, $F_1$, $m,r,e$ plus classification errors from using estimated $\mathbbm{1}^{+}$.
\item $R_n(N_{B,1}^-)$: uses $\psi_{B}^{-}(z,x)$, $Q_0$, $F_0$, $m,r,e$ plus classification errors from using estimated $\mathbbm{1}^{-}$.
\item $R_n(\pi_{ac})$: quadratic in errors for $m,r$ plus classification errors from using estimated $\mathbbm{1}^{\pm}$.
\end{itemize}
We now provide detailed bounds in terms of the nuisance estimation errors. For any derived nuisances, we reduce them to their primitives as given in Table \ref{tab:nuisance}.

\subsubsection{Bahadur Expansion and Quantile Deviation Bound}

\subsubsection*{Key Identity for the True CDF}
By Assumption A.4, for any $y\in\mathcal{N}_x$,

{\footnotesize\begin{align}
\label{eq:B-key}
\big|F_1(y| x)-F_1(Q_1(x)| x)\big|
\;=\;\Big|\int_{Q_1(x)}^{y} f_1(t| x)\,dt\Big|
\;\ge\; f_{\min}\,|y-Q_1(x)|.
\end{align}}

\subsubsection*{Triangle Inequality and Monotonicity}
By definition of quantiles, $F_1(Q_1(x)| x)=p(x)$ and $\hat F_1(\hat Q_1(x)| x)=\hat{p}(x)$. Hence

{\footnotesize\begin{align*}
F_1(\hat Q_1(x)| x) - F_1(Q_1(x)| x)
&= \big[F_1(\hat Q_1(x)| x)-\hat F_1(\hat Q_1(x)| x)\big] 
+ \big[\hat{p}(x)-p(x)\big] 
\end{align*}}
Taking absolute values and using the local sup-norm on $\mathcal{N}_x$,

{\footnotesize\begin{align}
\label{eq:B-localbound}
\big|F_1(\hat Q_1(x)| x) - F_1(Q_1(x)| x)\big|
\;\le\; \sup_{y\in\mathcal{N}_x}\big|\hat F_1(y| x)-F_1(y| x)\big|
\;+\; |\hat{p}(x)-p(x)|.
\end{align}}

\subsubsection*{Inversion}
Applying \eqref{eq:B-key} with $y=\hat Q_1(x)$ and combining with \eqref{eq:B-localbound} yields

{\footnotesize\begin{align*}
f_{\min}\,|\hat Q_1(x)-Q_1(x)|
\;\le\; \sup_{y\in\mathcal{N}_x}\big|\hat F_1(y| x)-F_1(y| x)\big|
\;+\; |\hat{p}(x)-p(x)|.
\end{align*}}
Thus,

{\footnotesize\begin{align}
\label{eq:B-ineq}
|\hat Q_1(x)-Q_1(x)|
\;\le\; \frac{1}{f_{\min}}\left(
\sup_{y\in\mathcal{N}_x}\big|\hat F_1(y| x)-F_1(y| x)\big|
\;+\; |\hat{p}(x)-p(x)|
\right).
\end{align}}

\subsubsection*{Neighborhood Localization}
Now if

{\footnotesize\begin{align*}
\sup_{y\in\mathcal{N}_x}\big|\hat F_1(y| x)-F_1(y| x)\big|+|\hat{p}(x)-p(x)| \overset{p}{\to}0
\end{align*}}
uniformly in $x$, then for large $n$ bound \eqref{eq:B-ineq} implies $|\hat Q_1(x)-Q_1(x)|\le \xi$ with probability $\to 1$, so $P(\hat Q_1(x)\in\mathcal{N}_x) \to 1$ and \eqref{eq:B-ineq} is valid.

\subsubsection*{Uniform Bound over $x$}
Recall the uniform local norm $\|\Delta F\|_{\infty,\mathcal{N}}:=\sup_x\sup_{y\in\mathcal{N}_x}|\hat F_1(y| x)-F_1(y| x)|$. If

{\footnotesize \begin{align*}
|\hat{p}(x)-p(x)| \;\lesssim \; \big(\|\Delta m\|_\infty+\|\Delta r\|_\infty\big),
\end{align*}}
then taking $\sup_x$ in \eqref{eq:B-ineq} yields

{\footnotesize\begin{align}
\label{eq:B-final}
\ \ \|\Delta Q_1\|_\infty
\;\lesssim\; \frac{1}{f_{\min}}\left(\|\Delta F\|_{\infty,\mathcal{N}} + (\|\Delta m\|_\infty+\|\Delta r\|_\infty)\right).\
\end{align}}
Note that no differentiability of $\hat F_1$ is required but only monotonicity of $\hat F_1$ and the structural lower bound $f_{\min}$ for $F_1$ near $Q_1(x)$.
Applying the same argument to the negative equivalents with $F_0$, $Q_0$, yields

{\footnotesize\begin{align}
\label{eq:B-minus-final}
\ \ \|\Delta Q_0\|_\infty
\;\lesssim\; \frac{1}{f_{\min}}\left(\|\Delta F\|_{\infty,\mathcal{N}} + \|\Delta m\|_\infty+\|\Delta r\|_\infty)\right).\
\end{align}}
as the norms are also uniform over the conditional arguments $z,d$.

\bigskip
\subsubsection{Trimmed Expectations Bound}
We now show the bound for the trimmed expectations for the positive monotonicity part of the lower bound case and then for the negative part. The upper bound follows analogously. We suppress the irrelevant components to ease notation of the trimmed expectation $\psi_L(z,x)=G_L(Q_1(p(X),X)| 1,z,x) = G_L(Q_1(x)|x)$ and its estimator $\hat\psi_L(z,x)=\hat G_L(\hat Q_1(\hat{p(X)},X)|1,z,x) = \hat{G}_L(\hat{Q}(x)|x)$.
Decompose

{\footnotesize\begin{align}
\label{eq:C-lower-decomp}
\Delta\psi_L(z,x)
&= \hat G_L(\hat Q_1(x)| x)- G_{L}(Q_1(x)| x) \nonumber\\
&=\underbrace{\big[\hat G_L(Q_1(x)| x)- G_{L}(Q_1(x)| x)\big]}_{(I)}
+\underbrace{\big[\hat G_L(\hat Q_1(x)| x)- \hat G_L(Q_1(x)| x)\big]}_{(II)}.
\end{align}}
Term $(I)$ is a pointwise error which can be bounded by the local worst-case

{\footnotesize \begin{align*}
|(I)|\le 
 \sup_{z}\sup_{x}\sup_{|y-Q_1(x)|\le\xi}|\hat G_L(y| x)-G_{L}(y| x)| = \|\Delta G\|_{\infty,\mathcal{N}}.
\end{align*}}
For $(II)$, adding and subtracting $\hat Q_1$ at the true threshold and applying the triangle inequality plus Assumption A.4 yields:

{\footnotesize\begin{align}
\label{eq:C-lower-II}
|(II)|
&\le |\hat G_L(\hat Q_1(x)| x)- G_{L}(\hat Q_1(x)| x)|
+ |G_{L}(\hat Q_1(x)| x)- G_{L}(Q_1(x)| x)| \nonumber\\
&\le \|\Delta G\|_{\infty,\mathcal{N}} + L_G\,|\hat Q_1(x)-Q_1(x)|.
\end{align}}
Combining \eqref{eq:C-lower-decomp}--\eqref{eq:C-lower-II} and taking the supremum over $(z,x)$ then yields

{\footnotesize\begin{align}
\label{eq:C-lower-final}
\ \ \|\Delta\psi_L\|_\infty
\;\le\; 2\,\|\Delta G\|_{\infty,\mathcal{N}} + L_G\,\|\Delta Q_1\|_\infty.\
\end{align}}
By the same argument with $G_U$ in place of $G_L$, we also obtain

{\footnotesize\begin{align}
\label{eq:C-upper-final}
\ \ \|\Delta\psi_U\|_\infty
\;\le\; 2\,\|\Delta  G_U\|_{\infty,\mathcal{N}} + L_G\,\|\Delta Q_1\|_\infty.\
\end{align}}
For the negative monotonicity side, we equivalently can derive

{\footnotesize\begin{align}
\ \ \|\Delta\psi_L^{-}\|_\infty
&\le 2\,\|\Delta G_L\|_{\infty,\mathcal{N}} + L_G^{}\,\|\Delta Q_0\|_\infty, \label{eq:C-minus-final} \\
\|\Delta\psi_U^{-}\|_\infty
&\le 2\,\|\Delta  G_U\|_{\infty,\mathcal{N}} + L_G^{}\,\|\Delta Q_0\|_\infty,\ \label{eq:C-minus-final2}
\end{align}}
as all the norms are also supremum over $d$.
Thus, using the quantile deviation bounds \eqref{eq:B-final}--\eqref{eq:B-minus-final} inside 
\eqref{eq:C-lower-final}--\eqref{eq:C-minus-final2} yields full control of 
$\Delta\psi$ solely in terms of primitives
$\Delta F,\Delta G$, and 
$\Delta m,\Delta r$ (through $\Delta p$), together with structural constants $f_{\min}$ and $L_G$.

\subsubsection{Margin-partition Remainder for $\pi_{ac}$}
We now control the second type of component that arises from boundary displacement/classification error in the density. The identical term arises in the classification error in $\mathbbm{1}^{\pm}$ that enters $N^0$ and $N_{B,1}^{\pm}$.
Define the contrast of the arguments and its estimator as

{\footnotesize \begin{align*}
\Gamma(x) := \lambda_0(x)-\lambda_1(x), \qquad \hat{\Gamma}(x) := \hat{\lambda}_0(x)-\hat{\lambda}_1(x).
\end{align*}}
Let the uniform estimation error be

{\footnotesize \begin{align*}
\varepsilon := \|\Delta \Gamma\|_\infty = \sup_x |\hat{\Gamma}(x)-\Gamma(x)|.
\end{align*}}
The true and estimated positive regions are then given by

{\footnotesize \begin{align*}
\mathcal{X}^+ := \{x : \Gamma(x)\le 0\}, \qquad 
\hat{\mathcal{X}}^+ := \{x : \hat{\Gamma}(x)\le 0\}.
\end{align*}}
Also denote their complements as $\mathcal{X}^{+c}$ and $\hat{\mathcal{X}}^{+c}$ respectively. 
The symmetric difference between the sets is then

{\footnotesize \begin{align*}
\hat{\mathcal{X}}^+ \oplus \mathcal{X}^+
=
\Big(\mathcal{X}^+ \cap \hat{\mathcal{X}}^{+c}\Big)
\cup
\Big(\mathcal{X}^{+c} \cap \hat{\mathcal{X}}^{+}\Big).
\end{align*}}

We now show that

{\footnotesize \begin{align*}
\hat{\mathcal{X}}^+ \oplus \mathcal{X}^+ \subseteq \{x : |\Gamma(x)| \le \varepsilon\}.
\end{align*}}

\textbf{Case 1:} Let \(x \in \mathcal{X}^+ \cap \hat{\mathcal{X}}^{+c}\).  
Then \(\Gamma(x)\le 0\) and \(\hat{\Gamma}(x)>0\). Hence

{\footnotesize \begin{align*}
0 < \hat{\Gamma}(x) = \Gamma(x) + (\hat{\Gamma}(x)-\Gamma(x))
\Rightarrow
-\Gamma(x) < \hat{\Gamma}(x)-\Gamma(x).
\end{align*}}
Since \(\Gamma(x)\le 0\), we have \(|\Gamma(x)| = -\Gamma(x)\), and therefore

{\footnotesize \begin{align*}
|\Gamma(x)| \le |\hat{\Gamma}(x)-\Gamma(x)| \le \|\Delta \Gamma\|_\infty = \varepsilon.
\end{align*}}

\textbf{Case 2:} Let \(x \in \mathcal{X}^{+c} \cap \hat{\mathcal{X}}^{+}\).  
Then \(\Gamma(x)>0\) and \(\hat{\Gamma}(x)\le 0\). Hence

{\footnotesize \begin{align*}
0 \ge \hat{\Gamma}(x) = \Gamma(x)+(\hat{\Gamma}(x)-\Gamma(x))
\Rightarrow
\Gamma(x) \le \Gamma(x)-\hat{\Gamma}(x) = |\hat{\Gamma}(x)-\Gamma(x)|.
\end{align*}}
Since \(\Gamma(x)>0\), we have \(|\Gamma(x)|=\Gamma(x)\), and therefore

{\footnotesize \begin{align*}
|\Gamma(x)| \le |\hat{\Gamma}(x)-\Gamma(x)| \le \|\Delta \Gamma\|_\infty = \varepsilon.
\end{align*}}

Therefore we obtain that the symmetric difference set is subset of a small epsilon set for the population difference

{\footnotesize \begin{align*}
\widehat{\mathcal{X}}^+ \oplus \mathcal{X}^+\ \subseteq\ \{x:\ |\lambda_0(x)-\lambda_1(x)|\le \epsilon\}
\end{align*}}
and thus, by margin assumption A.5

{\footnotesize \begin{align*}
P\big(\widehat{\mathcal{X}}^+ \oplus \mathcal{X}^+\big)\le C_M\,\epsilon^\kappa.
\end{align*}}
Overall, since each misclassification multiplies an $L^\infty$ integrand error $O(\epsilon)$, this yields overall classification remainder

{\footnotesize\begin{align}
R_{\mathrm{cls}}\ \lesssim \big(\|\Delta m\|_\infty+\|\Delta r\|_\infty\big)^{\kappa+1}. \label{eq:B-classificationRcls}
\end{align}}

\subsubsection{Bounding EIF Block Remainders and $R_n(\beta_B)$}
We now obtain the remainders for all objects of type $(A)$, $(B)$ and $(C)$. $R_n(N_0^+),R_n(N_0^-)$ contain no quantiles and are of classical ``augmented IPW'' (AIPW) form interacted with different mean functions $\mu$ and $\nu$ respectively and a classification indicator. Thus, their second order remainder is bounded by

{\footnotesize \begin{align*}
|R_n(N_0^+)|+|R_n(N_0^-)|\ \lesssim \ \|\Delta\mu\|_2\,\|\Delta e\|_2\ +\ \|\Delta\nu\|_2\,\|\Delta e\|_2\ +\ R_{\mathrm{cls}}.
\end{align*}}

For the positive side mixture component error $R_n(N_{B,1}^+)$ we have a product form with nuisances $\psi_B = G_B(Q_1),Q_1,F,m,r,e$ and  classification indicator. Thus, we obtain that 

{\footnotesize \begin{align*}
    |&R_n(N_{B,1}^+)| \\
    &\lesssim (||\Delta e ||_2 + ||\Delta m ||_2 + ||\Delta r ||_2) \times \bigg[||\Delta e ||_2 + ||\Delta m ||_2 + ||\Delta r ||_2 + ||\Delta G||_{\infty,\mathcal{N}} + ||\Delta F ||_{\infty,\mathcal{N}} + ||\Delta Q_1 ||_{\infty} \bigg] \\
    &\lesssim (||\Delta e ||_2 + ||\Delta m ||_2 + ||\Delta r ||_2) \times \bigg[||\Delta e ||_{\infty} + ||\Delta m ||_{\infty} + ||\Delta r ||_{\infty} + ||\Delta G ||_{\infty,\mathcal{N}} + ||\Delta F ||_{\infty,\mathcal{N}} \bigg],
\end{align*}}
as $||a||_2 \lesssim ||a||_{\infty}$  as well as the upper bound for $||\Delta Q_1||_{\infty}$ from \eqref{eq:B-final}. Now note that the same bound applies to $|R_n(N_{B,1}^-)|$. Moreover, for the always-selected complier density we have the mixture probability nuisances as well as classification in the remainder:

{\footnotesize \begin{align*}
|R_n(\pi_{ac})|
\ \lesssim \ \big(\|\Delta m\|_2^2+\|\Delta r\|_2^2\big)\ +\ R_{\mathrm{cls}}.
\end{align*}}
The rate for $R_{cls}$ is given in \eqref{eq:B-classificationRcls}. Combining this with the primitive rates for the remaining elements as well as the result from Section B.4 and the fact that $\beta_B$ is finite, then yields total bound the remainder 

{\footnotesize\begin{align}
    |R_n(\beta_B)| 
    &\lesssim
    \|\Delta\mu\|_2\,\|\Delta e\|_2\ + \|\Delta\nu\|_2\,\|\Delta e\|_2\ + (||\Delta m||_{\infty}  + ||\Delta r||_{\infty})^{\kappa + 1}  \label{eq_remainder_beta_all_final1} \\ &\qquad+(||\Delta e ||_2 + ||\Delta m ||_2 + ||\Delta r ||_2) \times  \bigg[||\Delta e ||_{\infty} + ||\Delta m ||_{\infty} + ||\Delta r ||_{\infty} + ||\Delta G||_{\infty,\mathcal{N}} + ||\Delta F ||_{\infty,\mathcal{N}} \bigg], 
    \notag
\end{align}}
with $B$ arbitrary. 



\subsection{Proof of Theorem \ref{thm_asyN1}}
We now denote $E_n[X] = \frac{1}{n}\sum_i^nX_i$ and $G_n[X] = \frac{1}{\sqrt{n}}\sum_i^n(X_i - E[X_i])$. Recall that the bounds are estimated via

 {\footnotesize \begin{align*}
    \hat{\beta}_B = \frac{\hat{N}_0^+ + \hat{N}_0^- + \hat{N}_{B,1}^+ + \hat{N}_{B,1}^-}{\hat\pi_{ac}},
\end{align*}}
where all estimators are obtained by solving their respective influence functions at the empirical solution with cross-fitted nuisances. Thus $\hat{N}_0$ is defined by solving $E_n[\IF(\hat{N_0},\hat{\eta})] = 0$ and equivalently for the remaining parameters. Thus, using standard Slutzky arguments, we obtain the following linearization 

 {\footnotesize \begin{align*}
    \sqrt{n}(\hat{\beta}_B - \beta_B) 
    &= G_n[\hat\IF(\beta_B)]\bigg(1 + O\bigg(\sup_{N \in \{N_0^+,N_0^-,N_1^+,N_1^-\}}|E_n[\hat{\IF}(N)] - E[\IF(N)]|\bigg)\bigg).
\end{align*}}
Further decomposing the leading term yields

 {\footnotesize \begin{align*}
    G_n[\hat\IF(\beta_B)] &= G_n[\IF(\beta_B)] + (G_n[\hat{\IF}(\beta_B)]  - G_n[\IF(\beta_B)]) \\
    &\quad + \sqrt{n}E[\hat\IF(\beta_B)-\IF(\beta_B)].
\end{align*}}

As nuisances are cross-fitted, the empirical process is $o_p(1)$ as long as the nuisances are consistent as implied by Assumption A.7. The big $O(\cdot)$ remainder will be $O_p(n^{-1/2})$ as their $\sqrt{n}$-analogues are already $O_p(1)$ as implied by the results in Appendix B. Thus, overall, we obtain that

{\footnotesize \begin{align*}
    \sqrt{n}(\hat{\beta}_B - \beta_B) &= G_n[\IF(\beta_B)] + o_p(1) \overset{d}{\rightarrow} \mathcal{N}(0,E[\IF(\beta_B)^2]).
\end{align*}}
Here $E[\IF(\beta_B)^2]$ is the semiparametric efficiency bound for $\beta_B$ as we have explicitly used the Riesz representer of the pathwise derivative under the nonparametric model and regularity conditions, see \cite{bickel1993efficient} or \cite{kennedy2024semiparametric}. Moreover, since the leading expression of each bound is asymptotically linear, we also obtain joint convergence of $\sqrt{n}(\hat{\beta} - \beta)$ by Assumption A.2 and the Cramer--Wold device.

\subsection{Proof of Proposition \ref{prop:nesting-of-IF}}
We now show the equivalence to existing results under additional restrictions. We focus on the lower bound under strong sample selection monotonicity, but the analogous derivations apply to the weak sample selection monotonicity case and the upper bound as well via simple sign/partition adjustment as in Section \ref{app:EIF-conditional1}. Statements about equality of random variables are almost surely. Note that, in this case we denote $N_B^+ = N_B$ and have $N_B^- = 0$. 
\subsubsection{(i) $Z = D$: Lee Bounds}
Denote $s(d,x) = P(S=1|D=d,X=x)$ and {\footnotesize\begin{align*}
    \beta_{1,d}(x,u) = E[Y|S=1,D=d,X=x,Y\leq Q_1(u,x)], \\
    \beta_{0,d}(x,u) = E[Y|S=1,D=d,X=x,Y\geq Q_1(u,x)],
\end{align*}}
which yields

{\footnotesize\begin{align*}
 \beta_{1,d}(x,1) = \beta_{0,d}(x,0) =  E[Y|S=1,D=d,X=x].
\end{align*}}
When $Z=D$  (perfect compliance), the nuisances simplify as follows 

{\footnotesize \begin{align*}
    P(Z=1|X) &= P(D=1|X) = e(x), \\
    r(z,x) &= E[DS|Z=z,X=x] = \begin{cases}
        0 &\text{ if } Z = D = 0 \\
        s(1,x) &\text{ if } Z = D = 1 
    \end{cases}, \\
    m(z,x) &= E[S|Z=z,X=x] - E[DS|Z=z,X=x] = \begin{cases}
        s(0,x) &\text{ if } Z = D = 0 \\
        0 &\text{ if } Z = D = 1 
    \end{cases}, \\
    \mu(z,x) &= E[Y(1-D)S|Z=z,X=x] = \begin{cases}
        0 &\text{ if } Z = D = 1 \\
        E[Y|S=1,D=0,X=x]s(0,x) &\text{ if } Z = D = 0 
    \end{cases},\\
    F(y|d,z,x) &= P(Y\leq y|D=d,S=1,Z=z,X=x) \\ &= \begin{cases}
        0 &\text{ if } Z \neq D  \\
        P(Y \leq y|DS=1,X=x) &\text{ if } Z = D 
    \end{cases}.
\end{align*}}
These yield the following simplifications 

{\footnotesize \begin{align*}
    F_{Y_1|S_1 = 1,D_1 > D_0,X}(y|x)
    &= \frac{F(y|1,1,x)r(1,x) - F(y|1,0,x)r(0,x)}{r(1,x) - r(0,x)} \\
    &= F(y|1,1,x) \\
    &= P(Y\leq y|DS=1,Z=1,x) \\
    &=: F(y|1,x), \\
    p(x) 
    &= -\frac{(m(1,X) - m(0,X))}{(r(1,X) - r(0,X))} \\
    &= \frac{s(0,x)}{s(1,x)}, \\
    W(z,x) &= W(d,x) = \frac{d}{P(D=1|X=x)} - \frac{1-d}{1-P(D=1|X=x)}. \\
\end{align*}}
Lastly, for the trimmed mean we have 

{\footnotesize \begin{align*}
    &\psi_L(z,x)
    = E[Y\mathbbm{1}(Y \leq Q_1(p(X),X))|DS=1,Z=z,X=x] \\
    &= \begin{cases}
        0 &\text{ if } Z = D = 0 \\
        E[Y|Y\leq Q_1(p(X),X),DS=1,X=x]P(Y\leq Q_1(p(X),X)|DS=1,X=x) &\text{ if } Z = D = 1 
    \end{cases}\\ & = \begin{cases}
        0 &\text{ if } Z = D = 0 \\
        \beta_{1,1}(x,p(x)) p(x) &\text{ if } Z = D = 1 
    \end{cases}. 
\end{align*}}

We now look at the three components of the influence functions for $\beta_L$. First, we start with $\pi_{ac}$. Noting that $Z = D$ we have that 

{\footnotesize \begin{align*}
    \IF(\pi_{ac}) + \pi_{ac}
    &= -\left[W(Z,X)((1-D)S - m(Z,X)) + m(1,X) - m(0,X)\right] \\
    &= -\left[\left(\frac{D}{e(X)} - \frac{1-D}{1-e(X)}\right)((1-D)S - m(D,X)) + m(1,X) - m(0,X) \right] \\
    &= \frac{(1-D)}{(1-e(X))}(S - s(0,x)) + s(0,x),
\end{align*}}
as $m(1,x) = 0$ and thus $Dm(D,X) = 0$. This is exactly the denominator in \cite{heiler2024treatmentevaluationintensiveextensive}, Table 6.1. We now consider the numerator given by $N_1 + N_0$. 

Now for the $N_0$ component we exploit that $\mu(1,x) = 0$ and thus $D\mu(D,X) = 0$. This yields 

{\footnotesize \begin{align*}
    \IF(N_0) + N_0
    &= W(Z,X)((1-D)SY - \mu(Z,X)) + \mu(1,X) - \mu(0,X) \\
    &= \left(\frac{D}{e(X)} - \frac{1-D}{1-e(X)}\right)((1-D)SY - \mu(D,X)) + \mu(1,X) - \mu(0,X) \\
    &= -\left[\frac{(1-D)}{(1-e(X))}SY - \mu(0,X)\left(1 - \frac{(1-D)}{(1-e(X))}\right)\right] \\
    &= -\left[\frac{(1-D)}{(1-e(X))}SY - s(0,x)\beta_{0,0}(x,0)\left(1 - \frac{(1-D)}{(1-e(X))}\right)\right].
\end{align*}}
Now for $N_1$ we first denote the following auxiliary results: \paragraph{\textbf{(N1.i)}}{\footnotesize \begin{align*}
    W(Z,X)(DS - r(Z,X))\psi_L(Z,X)
    &= \frac{D}{e(X)}(S - s(1,X))\beta_{1,1}(X,p(X))p(X).
\end{align*}}
\paragraph{\textbf{(N1.ii)}}{\footnotesize \begin{align*}
    W(Z,X)&DS(Y\mathbbm{1}(Y\leq Q_1(p(X),X)) - \psi_L(Z,X)) \\
    &= \frac{DS}{e(X)}Y\mathbbm{1}(Y\leq Q_1(p(X),X)) - \frac{DS}{e(X)}\beta_{1,1}(X,p(X))p(X).
\end{align*}} 
\paragraph{\textbf{(N1.iii)}}{\footnotesize \begin{align*}
    \psi_L(1,X)r(1,X) - \psi_L(0,X)r(0,X)
    &= \psi_L(1,X)r(1,X) \\
    &= \beta_1(X,p(X))p(X)s(1,X) \\
    &= \beta_1(X,p(X))s(0,X). 
\end{align*}} 
\paragraph{\textbf{(N1.iv)}}
Note that as when $D=Z=1$ we have $F_1(y|1,x) = P(Y\leq y|DS=1,Z=1,x) =: F(y|1,x) $ and same for the corresponding quantiles. Thus

{\footnotesize \begin{align*}
    -&Q_1(p(X),X)W(Z,X)\bigg[((1-D)S - m(z,X)) \\ &\qquad + (DS - r(Z,X))F(Q_1(p(X),X)|Z,X) + DS(\mathbbm{1}(Y \leq Q_1(p(X),X)) - F(Q_1(p(X),X)|Z,X))\bigg] \\
    &= -Q_1(p(X),X)\bigg(-\left[\frac{(1-D)}{1-e(X)}(S- s(0,X))\right] + \frac{D}{e(X)}(S - s(1,X))F_1(Q_1(p(X),X)|1,X) \\ &\qquad + \frac{DS}{e(X)}(\mathbbm{1}(Y \leq Q_1(p(X),X) - F_1(Q_1(p(X),X)|1,X)) \bigg) \\
    &= -Q_1(p(X),X)\left(\frac{DS}{e(X)}(\mathbbm{1}(Y \leq Q_1(p(X),X) - p(X)) \right) \\
    &\quad +  Q_1(p(X),X)\left(\frac{(1-D}{1-e(X)}(S - s(0,X)) - \frac{D}{e(X)}(S - s(1,X))p(X) \right).
\end{align*}}
Summing up  $(N1.i)$ to  $(N1.iv)$ and adding the results of $N_0$ then yields  

{\footnotesize \begin{align*}
    \IF(N_1) + N_1 +  \IF(N_0) + N_0
    &= \frac{DS}{e(X)} Y 1\{Y \le Q_1(p(X),X)\} 
-\frac{(1-D)S}{1-e(X)}Y \\
&- \frac{DS}{e(X)}Q_1(p(X),X)
\left[1\{Y \le Q_1(p(X),X)\}-p(X) \right] \\
&+Q_1(p(X),X)\left[\frac{1-D}{1-e(X)}(S-s(0,X))- p(X) \frac{D}{e(X)}(S-s(1, X)) \right] \\
&+ s(0,X)\left[\beta_{1,1}(X,p(X))\left(1-\frac{D}{e(X)}\right) - \beta_{0,0}(X,0)\left(1-\frac{1-D}{1-e(X)}\right)\right]
\end{align*}},

which is exactly the corresponding component of the influence function proposed in \cite{heiler2024treatmentevaluationintensiveextensive}, Table 6.1., see also \cite{semenova2025generalized}, Section 6.6.

\subsubsection{(ii) $S = 1$: LATE}
If $S=1$, we have the following nuisance simplifications:    

{\footnotesize \begin{align*}
        r(z,x) &= E[DS|Z=z,X=x] = E[D|Z=z,X=x], \\
        m(z,x) &= E[(1-D)S|Z=z,X=x] = E[(1-D)|Z=z,X=x].
\end{align*}}
Thus we have that $r(z,x) = 1- m(z,x)$ and thus $-(m(1,x) - m(0,x)) = r(1,x) - r(0,x)$ which implies that $p(x) = 1$. Moreover, we have that $F(y|d,z,x) = P(Y\leq y|D=1,Z=z,X=x)$ (without $S$).
\paragraph{Auxiliary Result: CDFs at $Q_1(1,X)$} 
 We now show that when the trimming threshold $p =1$, the conditional cdfs must also be equal to one. We omit $x$ for simplicity. In particular,  \(F(y| z)\) is a continuous CDF  with $
\lim_{y\to-\infty}F(y| z)=0$ and $\lim_{y\to+\infty}F(y| z)=1$.
Define generic

{\footnotesize \begin{align*}
b_z \;=\;\inf\{\,y : F(y| z)=1\}, 
\qquad
F_1(y) \;=\;\frac{r_1\,F(y|1)\;-\;r_0\,F(y|0)}{r_1 - r_0},
\qquad
Q_1 \;=\;\inf\{\,y : F_1(y)=1\}.
\end{align*}}
Then for \(y<\max(b_0,b_1)\), one has \(F_1(y)<1\) and hence \(Q_1\ge\max(b_0,b_1)\).  Conversely, for \(y\ge\max(b_0,b_1)\), \(F(y|0)=F(y|1)=1\) implies \(F_1(y)=1\) and thus \(Q_1\le\max(b_0,b_1)\).  Therefore$
Q_1=\max(b_0,b_1)$. Lastly, since \(Q_1\ge b_z\), continuity gives $
F(Q_1| z)=1 \quad\text{for }z=0,1$.
In the original notation this means that almost surely

{\footnotesize \begin{align*}
F(Q_1(p(X),X)|Z,X) = 1.
\end{align*}}
We now move the components of the influence function. First regarding $\pi_{ac}$ we obtain the following simplification

{\footnotesize \begin{align*}
    \IF(\pi_{ac}) + \pi_{ac}
    &= -\left[W(Z,X)((1-D) - m(Z,X)) + m(1,X) - m(0,X)\right] \\
    &= W(Z,X)(D - r(Z,X)) + r(1,X) - r(0,X) \\
    &= \frac{Z}{P(Z=1|X)}(D - r(1,X)) - \frac{(1-Z)}{1-P(Z=1|X)}(D - r(0,X)) + r(1,X) - r(0,X),
\end{align*}}
which is exactly the classic 
uncentered AIPW EIF of the share of compliers. Now for $N_0$, we obtain 

{\footnotesize \begin{align*}
    \IF(N_0) + N_0 
    &= W(Z,X)((1-D)Y - \mu(Z,X)) + \mu(1,X) - \mu(0,X),
\end{align*}}  
while for $N_1$ we first note that

{\footnotesize \begin{align*}
    ((1-D)S& - m(Z,X)) + (DS - r(Z,X))F(Q_1(p(X),X)|Z,X) \\ &\quad + DS(\mathbbm{1}(Y \leq Q_1(p(X),X)) - F(Q_1(p(X),X)|Z,X)) \\
    &= ((1-D) - m(Z,X)) + (D - r(Z,X) + 0 \\
    &= 1- m(Z,X) - r(Z,X) \\
    &= 0.
\end{align*}}
Thus we have that

{\footnotesize \begin{align*}
    \IF(N_1) + N_1
    &= W(Z,X)(D - r(Z,X))\psi_L(Z,X) + W(Z,X)D[Y - \psi_L(Z,X)] \\ &\quad + \psi_L(1,X)r(1,X) - \psi_L(0,X)r(0,X) \\
    &= W(Z,X)[DY - \psi_L(Z,X)r(Z,X)] + \psi_L(1,X)r(1,X) - \psi_L(0,X)r(0,X).
\end{align*}}
Now note that, by definition 

{\footnotesize \begin{align*}
    \psi(z,x)r(z,x) + \mu(z,x)
    &= E[Y|D=1,Z=z,X=x]E[D|Z=z,X] + E[Y(1-D)|Z=z,X=x] \\
    &= E[YD|Z=z,X=x] + E[Y(1-D)|Z=z,X=x] \\
    &= E[Y|Z=z,X=x].
\end{align*}}
Thus, for the total numerator we obtain

{\footnotesize \begin{align*}
    \IF&(N_1)  + N_1 + \IF(N_0) + N_0 \\
    &= W(Z,X)[DY - \psi_L(Z,X)r(Z,X)] + \psi_L(1,X)r(1,X) - \psi_L(0,X)r(0,X) \\
    &\quad + W(Z,X)[(1-D)Y - \mu(Z,X)] + \mu(1,X) - \mu(0,X) \\
    &= W(Z,X)Y + [\psi_L(1,X)r(1,X) + \mu(1,X)]\left(1 - \frac{Z}{P(Z=1|X)}\right) \\ &\qquad - [\psi_L(0,X)r(0,X) + \mu(0,X)]\left(1 - \frac{(1-Z)}{1-P(Z=1|X)}\right) \\
    &= W(Z,X)Y + E[Y|Z=1,X]\left(1 - \frac{Z}{P(Z=1|X)}\right) - E[Y|Z=0,X]\left(1 - \frac{(1-Z)}{1-P(Z=1|X)}\right) \\
    &= \frac{Z}{P(Z=1|X)}(Y - E[Y|Z=1,X]) - \frac{(1-Z)}{1-P(Z=1|X)}(Y - E[Y|Z=0,X]) \\ &\qquad  + E[Y|Z=1,X] - E[Y|Z=0,X],
\end{align*}}
which is exactly the EIF for the numerator/reduced form component of the LATE. Combining results with the denominator yield the well-known EIF for the LATE \citep{frolich2007nonparametric,chernozhukov2018double,heiler2022efficient}.

\subsubsection{(iii) $Z = D, S = 1$: ATE}
In this case we have all the $S=1$ simplifications as well as $Z=D$ and thus $ZD = D$, $(1-Z)D = 0$, and  

{\footnotesize \begin{align*}
    r(0,x) = m(1,x) = 0, \\
    r(1,x) = m(0,x) = 1.
\end{align*}}
This yields 

{\footnotesize \begin{align*}
    \IF(\pi_{ac}) &= \frac{D}{P(D=1|X)}(D - r(1,X)) - \frac{(1-D)}{1-P(D=1|X)}(D - r(0,X)) + r(1,X) - r(0,X) \\
    &= 1
\end{align*}}
and 

{\footnotesize \begin{align*}
    \IF(N_1) &+ N_1 + \IF(N_0) + N_0 \\
    &= \frac{D}{P(D=1|X)}(Y - E[Y|D=1,X]) - \frac{(1-D)}{1-P(D=1|X)}(Y - E[Y|D=0,X]) \\ &\qquad  + E[Y|D=1,X] - E[Y|D=0,X],
\end{align*}}
which is the usual efficient AIPW influence function for the ATE \citep{hahn1998role}.

\section{Monte Carlo Simulations} \label{sec:simulation}
\subsection{Design}
In this section we report a simple Monte Carlo design to assess coverage and power properties of the suggested confidence intervals in Section \ref{sec:estimation1} in finite samples. The true DGP and its parameterization are in Equation \eqref{eq:MC_design1} and Table \ref{tab:MC_parameters1}. It is a single index treatment response model with normal errors and an additional single index sample selection step similar to the designs in \cite{heckman2005structural}, \cite{heiler2022efficient} and \cite{bartalotti2023identifying}:

{\footnotesize
\begin{align} 
\begin{aligned}
    Z &\sim \mathrm{Binom}(p_z) \\
    X_1,\dots,X_J &\sim \mathrm{Unif}(0,1) \\
    Y_1 &= \mu_y(1,X) + \varepsilon_1 \\
    Y_0 &= \varepsilon_0 \\
    D_z &= \mathbbm{1}(U_d \leq \mu_d(z,x))\\
    S_d &= \mathbbm{1}(U_s \leq \mu_s(d,x))
\end{aligned}
\quad
\begin{aligned}
    \begin{pmatrix}
        \varepsilon_1 \\
        \varepsilon_0 \\
        U_d \\
        U_s
    \end{pmatrix}
    &\sim \mathcal{N}\left(
    \begin{pmatrix}
        0 \\ 0 \\ 0 \\ 0
    \end{pmatrix},
    \sigma^2
    \begin{pmatrix}
        1 & 0 & 0 & \rho_{1s} \\
        0 & 1 & \rho_{0d} & 0 \\
        0 & \rho_{0d} & 1 & 0 \\
        \rho_{1s} & 0 & 0 & 1
    \end{pmatrix}
    \right) \\ & \\ & \\
\end{aligned} \label{eq:MC_design1}
\end{align}
}

\begin{table}[!h]  \footnotesize
    \caption{Monte Carlo: Design Parameters}
    \label{tab:MC_parameters1}
    \centering
    \begin{tabular}{c|c}
    \hline 
      Parameter/Function   &  Value \\ \hline
       $p_z$  &  0.5 \\
       $J$ & 25 \\
       $\sigma^2$ & 1 \\
       $\rho_{1s}$ & 0.9999 \\
       $\rho_{0d}$ & 0.5 \\
       $\mu_y(1,x)$ & $f_x/7.205 + \mathbbm{1}(x_6 > 0.5) + \mathbbm{1}(x_7 > 0.5)$ \\
       $\mu_d(z,x)$ & $f_x/7.205 -1 + 2z$ \\
       $\mu_s(d,x)$ & $f_x - 0.5 + 2d$ \\
       $f_x$ & $\sum_{j=1}^5 x_j - 2.5$ \\ \hline 
    \end{tabular}
\end{table}
Let $\phi(\cdot)$ and $\Phi(\cdot)$ denote the standard normal density and cumulative distribution function respectively. 
The design implies that the conditional SLATE is given 

{\footnotesize\begin{align*}
    \theta_{SLATE}(x) &= E[Y_1 - Y_0|S_0 = 1, D_1 > D_0,X=x] \\
    &= \mu_y(1,x) - \rho_{1,s}\frac{\phi(\mu_s(0,x))}{\Phi(\mu_s(0,x))} + \rho_{0,d}\frac{(\phi(\mu_d(1,x)) - \phi(\mu_d(0,x)))}{(\Phi(\mu_d(1,x)) - \Phi(\mu_d(0,x)))}.
\end{align*}}
Moreover, the conditional share of always-selected compliers is 

{\footnotesize\begin{align*}
    \pi_{ac}(x) &= P(S_0 = 1, D_1 > D_0|X=x) \\
    &= \Phi(\mu_s(0,x))(\Phi(\mu_d(1,x)) - \Phi(\mu_d(0,x))).
\end{align*}}
$\theta_{SLATE}$ can then be obtained as

{\footnotesize
 \begin{align}
     \theta_{SLATE} = \frac{\int \beta(x)\pi_{ac}(x)dP(x)}{\int \pi_{ac}(x)dP(x)}.
 \end{align}}
 
For the Monte Carlo Simulations, our estimation follows the algorithm in Section \ref{sec:estimation1}. We estimate all nuisances with generalized random forests with default parameters, including the functional parameters which are estimated on a large grid. Strong sample selection monotonicity is not imposed, i.e.,~there is risk of missclassification of the response types. Before inversion of the conditional CDFs, we use isotonic regression for post-processing. The design is relatively sparse in the underlying nuisances such that convergence requirements as in Section \ref{sec:largesample1} are credible. For evaluation, we report the average number of times where $\theta_{SLATE}$ is not contained in the confidence interval as a function of deviation from the true null with and without cross-fitting. At zero, this naturally corresponds to the size of the associated test. 

\begin{figure}[!h]
    \caption{Monte Carlo Study: Power Curves}
    \label{fig:mc1_power1}
    \centering
    \includegraphics[width=0.45\linewidth]{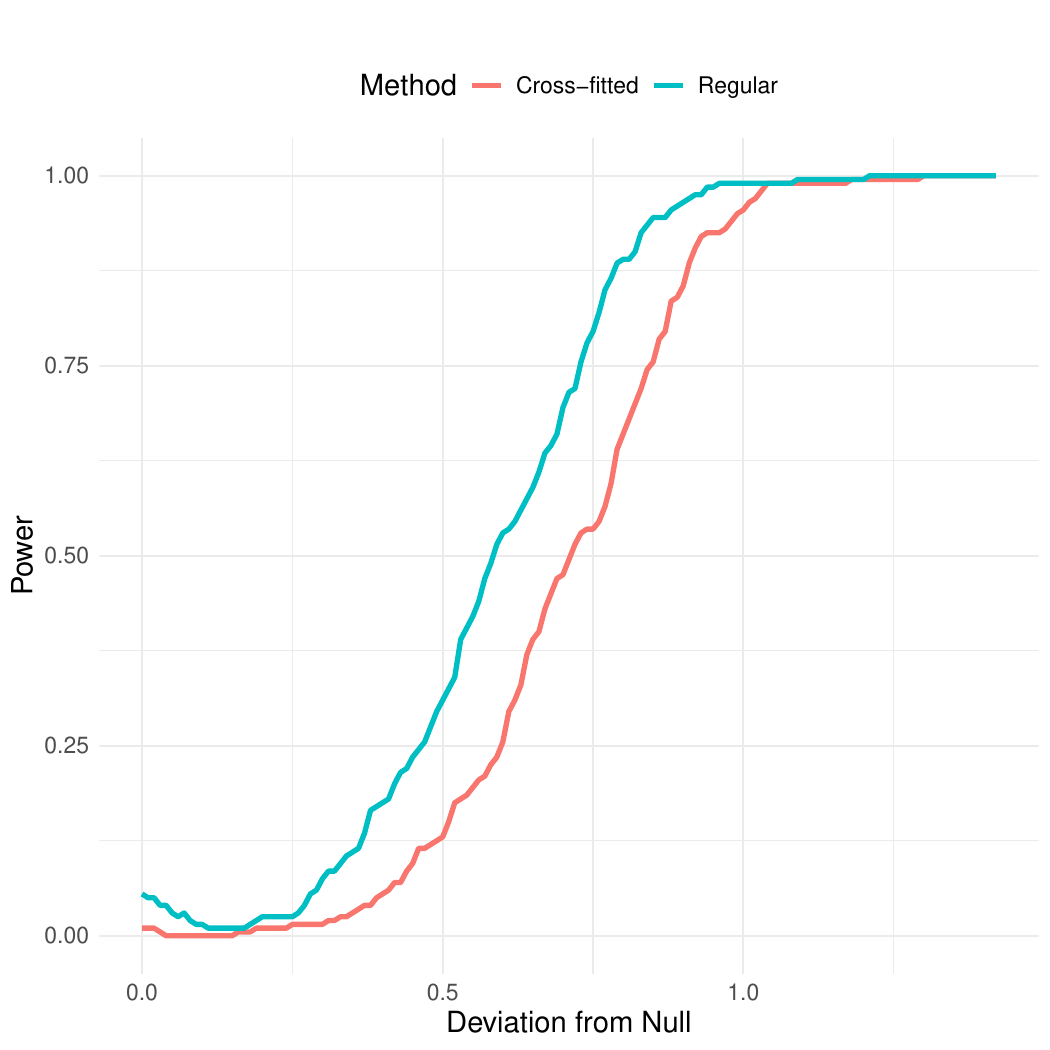}  \includegraphics[width=0.45\linewidth]{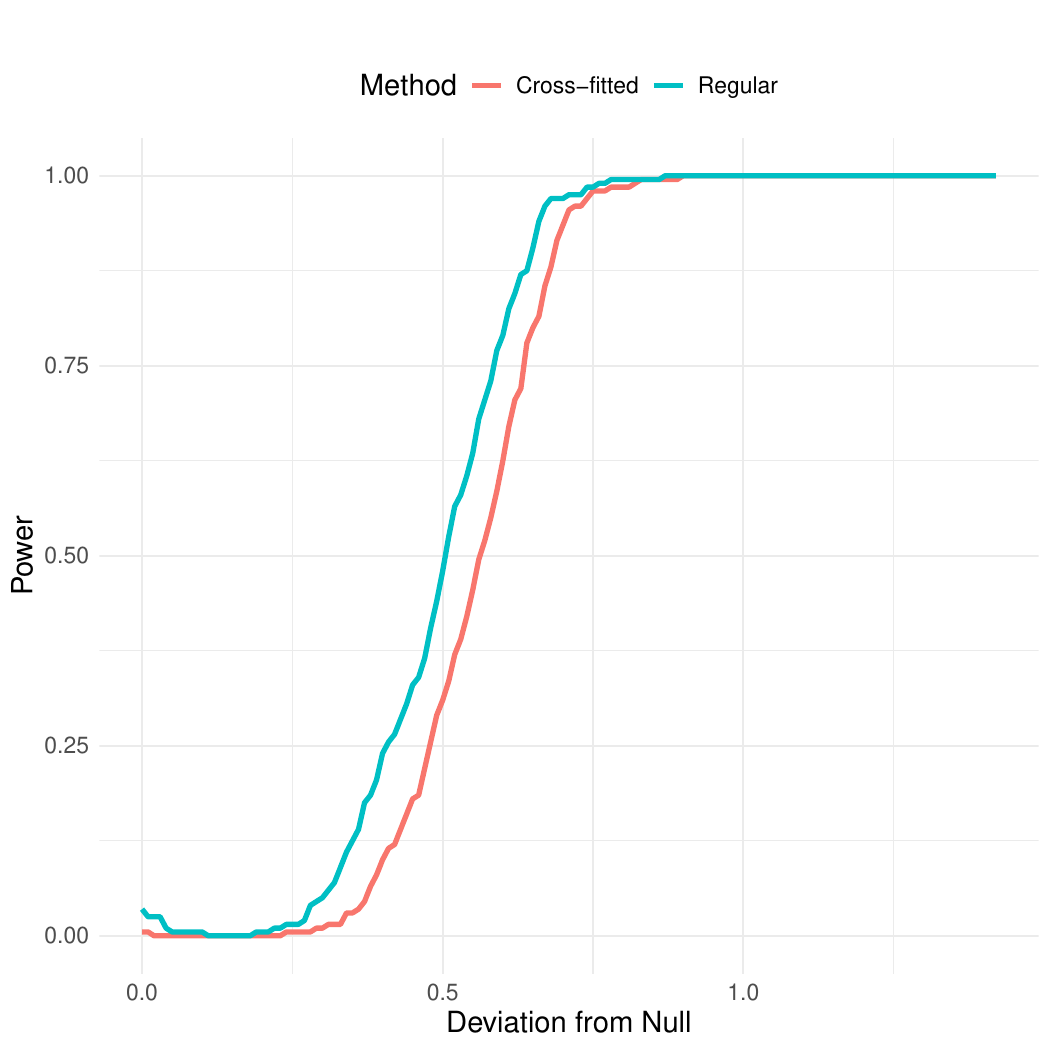}
    \begin{justify}
      Power curves as a function of deviation from the null for Sharp-DML bounds using cross-fitting according to Algorithm \ref{alg:estimation}. Regular is equivalent but without cross-fitting. $n = 2000$ and $n = 5000$ observations and $M = 200$ replications.
    \end{justify}
\end{figure}

Figure \ref{fig:mc1_power1} contains the power curves for both $n=2000$ and $n=5000$. Given the design, this corresponds to an effective sample size of approximately $600$ and $1500$ always-selected compliers respectively. Cross-fitting seems to be required to have the minimal rejection area close to the true null. The confidence intervals have conservative coverage at and close to the null. This is expected as the effect is not at the boundary of the identified set in line with, e.g., \cite{stoye2020simple} or \cite{heiler2024heterogeneous}. For larger deviations power quickly increases. Overall, results suggest that the theory in Section \ref{sec:largesample1} approximates the finite sample distribution well.

\section{Empirical Study II: Oregon Health Insurance Experiment} \label{sec:OHIE}
\subsection{Data and Methods}

In this section, we study the effects of Medicaid coverage on healthcare utilization in the Oregon Health Insurance Experiment (OHIE), following \cite{finkelstein2012oregon}. In 2008, Oregon expanded Medicaid for low-income uninsured adults through a lottery because available funding was insufficient to cover all eligible applicants. Individuals selected by lottery obtained the opportunity to apply for coverage for themselves and household members, but actual enrollment required submitting paperwork and verifying eligibility, so take-up was incomplete.

We use the publicly available OHIE survey data analyzed by \cite{finkelstein2012oregon}. The survey covers individuals from both lottery-selected and non-selected households and includes demographic information collected at sign-up. Our treatment is ever enrolling in Medicaid during the study period, while the instrument is the lottery-based assignment. We consider three outcomes: the number of outpatient visits, the number of prescription drugs, and log annual total medical expenditure. For outpatient visits and prescription drugs, outcomes are only observed for individuals with positive utilization, 
while for log expenditure the outcome is non-missing only when expenditure is positive. We additionally use pre-randomization covariates including gender, age, race and ethnicity, education, survey-wave fixed effects, and household-size fixed effects. The list of covariates along with sample summary statistics are provided in Table \ref{tab:ohie_balance}.

The data are well suited to our framework. Lottery selection provides a credible source of exogenous assignment, while exclusion is plausible since effects on utilization should operate through actual Medicaid coverage rather than lottery selection itself. As in the JC application, noncompliance is substantial because only a minority of lottery winners ultimately enrolled, so assignment and treatment differ materially. Depending on the outcome, analysis samples range from 16,868 to 22,064 observations after restricting to units with non-missing treatment, instrument, outcome, and covariates.

We evaluate always-selected complier effects $\theta_{SLATE}$ for all three outcomes using (i) \cite{chen2015bounds} bounds and (ii) sharp-basic bounds - both assuming strong sample selection monotonicity - as well as (iii) sharp DML bounds with covariates under weak sample selection monotonicity. Implementation of (i) follows \cite{chen2015bounds}. (ii) uses simple sample analogues of \eqref{eq:betaL} and \eqref{eq:betaU}. (iii) is obtained via the procedure in Section \ref{sec:estimation1}, with the same implementation details as in the JC application.

\subsection{Results: Bounds and Shares}

\begin{table}[htbp]
\centering
\caption{Bounds for the Effect of Health Insurance on Healthcare Utilization}
\label{tab:ohie_bounds}
\footnotesize
\begin{tabular}{@{}l ccc@{}}
\toprule
& CF & Sharp-basic & Sharp-DML \\
\midrule
\multicolumn{4}{l}{\textit{Outpatient visits ($N=22,064$)}} \\
\quad Bounds        & $[-1.447,\ 1.520]$    & $[-1.129,\ 0.915]$   & $[-0.821,\ 1.597]$    \\
\quad Standard Errors            & ---                   & $(0.247,\ 0.333)$    & $(0.299,\ 0.374)$     \\
\quad 95\% CI       & $[-1.879,\ 2.157]$    & $[-1.533,\ 1.460]$   & $[-1.311,\ 2.211]$    \\[3pt]
\quad Share of positive sample selection $^{a}$ &\multicolumn{2}{c}{$1 \text{ (assumed)}$}& \multicolumn{1}{l}{$0.957\ (0.004)$} \\
\quad Share of $ac$ $^{b}$ & \multicolumn{2}{c} {$0.147\ (0.007)$} &  \multicolumn{1}{l}{$0.145\ (0.007)$} \\[4pt]
\multicolumn{4}{l}{\textit{Prescription drugs ($N=17,223$)}} \\
\quad Bounds        & $[-0.832,\ 0.125]$    & $[-0.535,\ -0.060]$  & $[-0.295,\ 0.113]$    \\
\quad Standard Errors            & ---                   & $(0.289,\ 0.244)$    & $(0.257,\ 0.113)$     \\
\quad 95\% CI       & $[-1.565,\ 0.574]$    & $[-1.010,\ 0.340]$   & $[-0.717,\ 0.518]$    \\[3pt]
\quad Share of positive sample selection $^{a}$ &\multicolumn{2}{c}{$1 \text{ (assumed)}$}& \multicolumn{1}{l}{$0.626\ (0.005)$} \\
\quad Share of $ac$ $^{b}$  & \multicolumn{2}{c}{ $0.186\ (0.008)$} &  \multicolumn{1}{l}{$0.185\ (0.008)$} \\[4pt]
\multicolumn{4}{l}{\textit{Annual total expenditure ($N=16,868$)}} \\
\quad Bounds        & $[-0.079,\ 0.186]$    & $[-0.019,\ 0.062]$   & $[-0.058,\ 0.244]$    \\
\quad Standard Errors            & ---                   & $(0.114,\ 0.105)$    & $(0.104,\ 0.102)$     \\
\quad 95\% CI       & $[-0.302,\ 0.386]$    & $[-0.206,\ 0.234]$   & $[-0.228,\ 0.410]$    \\[3pt]
\quad Share of positive sample selection $^{a}$ &\multicolumn{2}{c}{$1 \text{ (assumed)}$}& \multicolumn{1}{l}{$0.638\ (0.005)$} \\
\quad Share of $ac$ $^{b}$  & \multicolumn{2}{c} {$0.213\ (0.008)$} & \multicolumn{1}{l}{$0.211\ (0.008)$} \\[4pt]
\bottomrule
\end{tabular}
\par\medskip
\begin{minipage}{0.9\linewidth}
\footnotesize
\textit{Notes:} Outcomes are measured at 12 months after randomization. 
 All calculations use design weights.
\textbf{CF} refers to the \cite{chen2015bounds} bounds under strong sample selection monotonicity without covariates using half-median-unbiased estimates with 95\% confidence intervals following \cite{chernozhukov2013intersection}.
CF reports no standard errors as inference relies on the CLR projection.
\textbf{Sharp-basic} refers to the sharp bounds without covariates under strong sample selection monotonicity.
\textbf{Sharp-DML} refers to the sharp bounds estimated by DML under weak sample selection monotonicity.
Standard errors for the two bounds are in parentheses as
$(\widehat{SE}_{\mathrm{lower}},\,\widehat{SE}_{\mathrm{upper}})$,
and the 95\% CI for $\theta_{SLATE} = E[Y_{1}-Y_{0}\mid ac]$ is calculated using \cite{stoye2020simple}.
\\
$^{a}$ Estimated share of positive sample selection~$= E[\mathbb{I}^{+}(X)]$, the fraction of observations
in the positive sample selection class (Sharp-DML only. CF and Sharp-basic assume this fraction
equals~1).\\
$^{b}$ Estimated share of always-selected compliers $\pi_{ac}$.
\end{minipage}
\end{table}

Table \ref{tab:ohie_bounds} reports results for outpatient visits, prescription drugs, and annual total expenditure. Across outcomes, the share of always-selected compliers is relatively small but non-negligible, ranging from about 15\% to 21\%. Sharp-DML is again the preferred specification because the data do not support strong sample selection monotonicity uniformly across outcomes. The estimated share of positive sample selection under Sharp-DML is 0.957 for outpatient visits, but only 0.626 for prescription drugs and 0.638 for expenditure, indicating that allowing the direction of sample selection to vary with covariates is empirically important, especially for the latter two outcomes.

First, we focus on the comparison between CF and Sharp-basic under strong sample selection monotonicity. The sharp bounds are substantially shorter than CF across all outcomes, with reductions in identified-set length of about 31.1\% for outpatient visits, 50.4\% for prescription drugs, and 69.4\% for total expenditure. The corresponding confidence intervals are also shorter by roughly 25.8\%, 36.9\%, and 36.0\%, respectively. Moreover, for prescription drugs and total expenditure, the Sharp-basic point estimates exclude large portions of the CF range, although their confidence intervals still include zero.

Second, estimates using Sharp-DML are broadly comparable to CF despite relying only on weak sample selection monotonicity. In terms of identified-set length, Sharp-DML reduces the width relative to CF by about 18.5\% for outpatient visits and 57.4\% for prescription drugs, while for total expenditure the width is similar (slightly larger by about 14\%). For confidence intervals, Sharp-DML delivers shorter intervals across all outcomes, with reductions of about 12.7\% for outpatient visits, 42.3\% for prescription drugs, and 7.3\% for total expenditure. Thus, despite the less restrictive assumptions, Sharp-DML maintains or improves precision relative to CF, highlighting the relevance of both sharpness and efficient use of covariates.

Substantively, however, the preferred Sharp-DML confidence intervals include zero for all three outcomes. Hence, the data do not provide statistically precise evidence of intensive margin effects of Medicaid on outpatient visits, prescription-drug use, or annual medical expenditure for always-selected compliers. Taken together with \cite{finkelstein2012oregon}, this suggests that the positive utilization effects found in the original OHIE analysis appear more consistent with a large role for extensive-margin responses, i.e.,~increased probability of any use, over increases in utilization conditional on positive use.




















\subsection{Always-selected Complier Profiling}
\begin{table}[htbp]
\centering
\caption{Oregon Health Insurance Experiment Baseline Covariates: Always-Selected Compliers vs.\ Full Sample (Outcome: Outpatient Visits)}
\label{tab:ohie_profiling_doc}
\footnotesize
\begin{tabular}{@{}l lll@{}}
\toprule
Covariate & $ac$ & Full sample & \multicolumn{1}{c}{Difference} \\
\midrule
Age (in yrs.)                  & $45.47$~$(0.551)$ & $43.76$~$(0.093)$ & \diffcell{1}{711}{~$(0.545)^{***}$}  \\
Female                         & $0.630$~$(0.022)$ & $0.600$~$(0.004)$ & \diffcell{0}{030}{~$(0.022)$}        \\
White, non-Hispanic            & $0.866$~$(0.016)$ & $0.827$~$(0.003)$ & \diffcell{0}{039}{~$(0.016)^{**}$}   \\
Black, non-Hispanic            & $0.025$~$(0.009)$ & $0.033$~$(0.001)$ & \diffcell{-0}{008}{~$(0.009)$}       \\
Hispanic                       & $0.063$~$(0.014)$ & $0.114$~$(0.002)$ & \diffcell{-0}{051}{~$(0.014)^{***}$} \\
American Indian/Alaska Native  & $0.052$~$(0.012)$ & $0.060$~$(0.002)$ & \diffcell{-0}{008}{~$(0.011)$}       \\
Asian                          & $0.026$~$(0.007)$ & $0.033$~$(0.001)$ & \diffcell{-0}{008}{~$(0.006)$}       \\
Other race/ethnicity           & $0.067$~$(0.013)$ & $0.092$~$(0.002)$ & \diffcell{-0}{026}{~$(0.013)^{*}$}   \\

Education: & & & \\
    \quad\quad\quad\quad No high school diploma          & $0.148$~$(0.017)$ & $0.162$~$(0.003)$ & \diffcell{-0}{013}{~$(0.017)$}      \\
    \quad\quad\quad\quad High school diploma or GED      & $0.481$~$(0.023)$ & $0.495$~$(0.004)$ & \diffcell{-0}{014}{~$(0.023)$}      \\
    \quad\quad\quad\quad Some college or vocational      & $0.259$~$(0.020)$ & $0.225$~$(0.003)$ & \diffcell{0}{033}{~$(0.019)^{*}$}   \\
    \quad\quad\quad\quad Four-year college degree or higher & $0.113$~$(0.014)$ & $0.119$~$(0.003)$ & \diffcell{-0}{006}{~$(0.014)$}   \\

Wave of lottery draws: & & & \\
    \quad\quad\quad\quad 1 & $0.148$~$(0.015)$ & $0.115$~$(0.002)$ & \diffcell{0}{034}{~$(0.014)^{**}$}  \\
    \quad\quad\quad\quad 2 & $0.145$~$(0.015)$ & $0.115$~$(0.002)$ & \diffcell{0}{030}{~$(0.015)^{**}$}  \\
    \quad\quad\quad\quad 3 & $0.076$~$(0.015)$ & $0.114$~$(0.002)$ & \diffcell{-0}{039}{~$(0.015)^{***}$}\\
    \quad\quad\quad\quad 4 & $0.142$~$(0.015)$ & $0.140$~$(0.003)$ & \diffcell{0}{003}{~$(0.015)$}       \\
    \quad\quad\quad\quad 5 & $0.134$~$(0.016)$ & $0.142$~$(0.003)$ & \diffcell{-0}{008}{~$(0.016)$}      \\
    \quad\quad\quad\quad 6 & $0.202$~$(0.018)$ & $0.204$~$(0.003)$ & \diffcell{-0}{002}{~$(0.018)$}      \\
    \quad\quad\quad\quad 7 & $0.153$~$(0.017)$ & $0.171$~$(0.003)$ & \diffcell{-0}{018}{~$(0.017)$}      \\

Household members listed: & & & \\
    \quad\quad\quad\quad 1  & $0.799$~$(0.020)$ & $0.693$~$(0.004)$ & \diffcell{0}{106}{~$(0.020)^{***}$} \\
    \quad\quad\quad\quad 2  & $0.200$~$(0.020)$ & $0.305$~$(0.004)$ & \diffcell{-0}{105}{~$(0.020)^{***}$}\\
    \quad\quad\quad\quad 3+ & $0.002$~$(0.002)$ & $0.003$~$(0.000)$ & \diffcell{-0}{001}{~$(0.002)$}      \\

\bottomrule
\end{tabular}
\par\medskip
\begin{minipage}{\linewidth}
\footnotesize
\textit{Notes:} $N=22,064$. The table reports estimated means for the always-selected complier ($ac$) subpopulation and for the full sample, along with their difference ($ac$ minus full sample). Standard errors are in parentheses. All estimates use 12-month survey design weights and are based on the weak-monotonicity DML specification with GRF learners.
Joint $\chi^{2}$ tests for the null that all differences within a category are jointly zero:
Race/ethnicity ($p = 0.008$);
Education ($p = 0.534$);
Wave of lottery draws ($p = 0.035$);
Household members listed ($p < 0.001$).
$^{*}$~$p<0.1$; $^{**}$~$p<0.05$; $^{***}$~$p<0.01$.
\end{minipage}
\end{table}

\begin{table}[htbp]
\centering
\caption{Oregon Health Insurance Experiment Baseline Covariates: Always-Selected Compliers vs.\ Full Sample (Outcome: Prescription Drugs)}
\label{tab:ohie_profiling_rx}
\footnotesize
\begin{tabular}{@{}l lll@{}}
\toprule
Covariate & $ac$ & Full sample & \multicolumn{1}{c}{Difference} \\
\midrule
Age (in yrs.)                  & $45.47$~$(0.551)$ & $43.76$~$(0.093)$ & \diffcell{1}{711}{~$(0.545)^{***}$}  \\
Female                         & $0.630$~$(0.022)$ & $0.600$~$(0.004)$ & \diffcell{0}{030}{~$(0.022)$}        \\
White, non-Hispanic            & $0.866$~$(0.016)$ & $0.827$~$(0.003)$ & \diffcell{0}{039}{~$(0.016)^{**}$}   \\
Black, non-Hispanic            & $0.025$~$(0.009)$ & $0.033$~$(0.001)$ & \diffcell{-0}{008}{~$(0.009)$}       \\
Hispanic                       & $0.063$~$(0.014)$ & $0.114$~$(0.002)$ & \diffcell{-0}{051}{~$(0.014)^{***}$} \\
American Indian/Alaska Native  & $0.052$~$(0.012)$ & $0.060$~$(0.002)$ & \diffcell{-0}{008}{~$(0.011)$}       \\
Asian                          & $0.026$~$(0.007)$ & $0.033$~$(0.001)$ & \diffcell{-0}{008}{~$(0.006)$}       \\
Other race/ethnicity           & $0.067$~$(0.013)$ & $0.092$~$(0.002)$ & \diffcell{-0}{026}{~$(0.013)^{*}$}   \\

Education: & & & \\
    \quad\quad\quad\quad No high school diploma          & $0.148$~$(0.017)$ & $0.162$~$(0.003)$ & \diffcell{-0}{013}{~$(0.017)$}      \\
    \quad\quad\quad\quad High school diploma or GED      & $0.481$~$(0.023)$ & $0.495$~$(0.004)$ & \diffcell{-0}{014}{~$(0.023)$}      \\
    \quad\quad\quad\quad Some college or vocational      & $0.259$~$(0.020)$ & $0.225$~$(0.003)$ & \diffcell{0}{033}{~$(0.019)^{*}$}   \\
    \quad\quad\quad\quad Four-year college degree or higher & $0.113$~$(0.014)$ & $0.119$~$(0.003)$ & \diffcell{-0}{006}{~$(0.014)$}   \\

Wave of lottery draws: & & & \\
    \quad\quad\quad\quad 1 & $0.148$~$(0.015)$ & $0.115$~$(0.002)$ & \diffcell{0}{034}{~$(0.014)^{**}$}  \\
    \quad\quad\quad\quad 2 & $0.145$~$(0.015)$ & $0.115$~$(0.002)$ & \diffcell{0}{030}{~$(0.015)^{**}$}  \\
    \quad\quad\quad\quad 3 & $0.076$~$(0.015)$ & $0.114$~$(0.002)$ & \diffcell{-0}{039}{~$(0.015)^{***}$}\\
    \quad\quad\quad\quad 4 & $0.142$~$(0.015)$ & $0.140$~$(0.003)$ & \diffcell{0}{003}{~$(0.015)$}       \\
    \quad\quad\quad\quad 5 & $0.134$~$(0.016)$ & $0.142$~$(0.003)$ & \diffcell{-0}{008}{~$(0.016)$}      \\
    \quad\quad\quad\quad 6 & $0.202$~$(0.018)$ & $0.204$~$(0.003)$ & \diffcell{-0}{002}{~$(0.018)$}      \\
    \quad\quad\quad\quad 7 & $0.153$~$(0.017)$ & $0.171$~$(0.003)$ & \diffcell{-0}{018}{~$(0.017)$}      \\

Household members listed: & & & \\
    \quad\quad\quad\quad 1  & $0.799$~$(0.020)$ & $0.693$~$(0.004)$ & \diffcell{0}{106}{~$(0.020)^{***}$} \\
    \quad\quad\quad\quad 2  & $0.200$~$(0.020)$ & $0.305$~$(0.004)$ & \diffcell{-0}{105}{~$(0.020)^{***}$}\\
    \quad\quad\quad\quad 3+ & $0.002$~$(0.002)$ & $0.003$~$(0.000)$ & \diffcell{-0}{001}{~$(0.002)$}      \\

\bottomrule
\end{tabular}
\par\medskip
\begin{minipage}{\linewidth}
\footnotesize
\textit{Notes:} $N=17,223$. The table reports estimated means for the always-selected complier ($ac$) subpopulation and for the full sample, along with their difference ($ac$ minus full sample). Standard errors are in parentheses. All estimates use 12-month survey design weights and are based on the weak-monotonicity DML specification with GRF learners.
Joint $\chi^{2}$ tests for the null that all differences within a category are jointly zero:
Race/ethnicity ($p = 0.008$);
Education ($p = 0.534$);
Wave of lottery draws ($p = 0.035$);
Household members listed ($p = 0.000$).
$^{*}$~$p<0.1$; $^{**}$~$p<0.05$; $^{***}$~$p<0.01$.
\end{minipage}
\end{table}

\begin{table}[htbp]
\centering
\caption{Oregon Health Insurance Experiment Baseline Covariates: Always-Selected Compliers vs.\ Full Sample (Outcome: Log Annual Total Expenditure)} 
\label{tab:ohie_profiling_log_cost}
\footnotesize
\begin{tabular}{@{}l lll@{}}
\toprule
Covariate & $ac$ & Full sample & \multicolumn{1}{c}{Difference} \\
\midrule
Age (in yrs.)                  & $44.06$~$(0.498)$ & $43.72$~$(0.094)$ & \diffcell{0}{337}{~$(0.491)$}       \\
Female                         & $0.607$~$(0.020)$ & $0.600$~$(0.004)$ & \diffcell{0}{007}{~$(0.020)$}       \\
White, non-Hispanic            & $0.849$~$(0.016)$ & $0.828$~$(0.003)$ & \diffcell{0}{021}{~$(0.015)$}       \\
Black, non-Hispanic            & $0.028$~$(0.008)$ & $0.033$~$(0.001)$ & \diffcell{-0}{005}{~$(0.008)$}      \\
Hispanic                       & $0.082$~$(0.013)$ & $0.113$~$(0.002)$ & \diffcell{-0}{032}{~$(0.013)^{**}$} \\
American Indian/Alaska Native  & $0.062$~$(0.011)$ & $0.060$~$(0.002)$ & \diffcell{0}{002}{~$(0.010)$}       \\
Asian                          & $0.034$~$(0.006)$ & $0.033$~$(0.001)$ & \diffcell{0}{000}{~$(0.006)$}       \\
Other race/ethnicity           & $0.082$~$(0.012)$ & $0.092$~$(0.002)$ & \diffcell{-0}{010}{~$(0.012)$}      \\

Education: & & & \\
    \quad\quad\quad\quad No high school diploma          & $0.155$~$(0.015)$ & $0.161$~$(0.003)$ & \diffcell{-0}{006}{~$(0.015)$}      \\
    \quad\quad\quad\quad High school diploma or GED      & $0.472$~$(0.020)$ & $0.495$~$(0.004)$ & \diffcell{-0}{022}{~$(0.020)$}      \\
    \quad\quad\quad\quad Some college or vocational      & $0.262$~$(0.017)$ & $0.226$~$(0.003)$ & \diffcell{0}{036}{~$(0.017)^{**}$}  \\
    \quad\quad\quad\quad Four-year college degree or higher & $0.110$~$(0.013)$ & $0.119$~$(0.003)$ & \diffcell{-0}{008}{~$(0.012)$}   \\

Wave of lottery draws: & & & \\
    \quad\quad\quad\quad 1 & $0.138$~$(0.013)$ & $0.114$~$(0.002)$ & \diffcell{0}{024}{~$(0.012)^{*}$}   \\
    \quad\quad\quad\quad 2 & $0.148$~$(0.013)$ & $0.114$~$(0.002)$ & \diffcell{0}{034}{~$(0.013)^{***}$} \\
    \quad\quad\quad\quad 3 & $0.076$~$(0.014)$ & $0.114$~$(0.002)$ & \diffcell{-0}{038}{~$(0.014)^{***}$}\\
    \quad\quad\quad\quad 4 & $0.138$~$(0.014)$ & $0.141$~$(0.003)$ & \diffcell{-0}{002}{~$(0.014)$}      \\
    \quad\quad\quad\quad 5 & $0.127$~$(0.014)$ & $0.142$~$(0.003)$ & \diffcell{-0}{014}{~$(0.014)$}      \\
    \quad\quad\quad\quad 6 & $0.203$~$(0.016)$ & $0.205$~$(0.003)$ & \diffcell{-0}{002}{~$(0.016)$}      \\
    \quad\quad\quad\quad 7 & $0.169$~$(0.016)$ & $0.170$~$(0.003)$ & \diffcell{-0}{002}{~$(0.016)$}      \\

Household members listed: & & & \\
    \quad\quad\quad\quad 1  & $0.794$~$(0.018)$ & $0.692$~$(0.004)$ & \diffcell{0}{102}{~$(0.018)^{***}$} \\
    \quad\quad\quad\quad 2  & $0.206$~$(0.018)$ & $0.305$~$(0.004)$ & \diffcell{-0}{099}{~$(0.018)^{***}$}\\
    \quad\quad\quad\quad 3+ & $0.000$~$(0.002)$ & $0.003$~$(0.000)$ & \diffcell{-0}{002}{~$(0.002)$}      \\
\bottomrule
\end{tabular}
\par\medskip
\begin{minipage}{\linewidth}
\footnotesize
\textit{Notes:} $N=16,868$. The table reports estimated means for the always-selected complier ($ac$) subpopulation and for the full sample, along with their difference ($ac$ minus full sample). Standard errors are in parentheses. All estimates use 12-month survey design weights and are based on the weak-monotonicity DML specification with GRF learners.
Joint $\chi^{2}$ tests for the null that all differences within a category are jointly zero:
Race/ethnicity ($p = 0.241$);
Education ($p = 0.299$);
Wave of lottery draws ($p = 0.019$);
Household members listed ($p < 0.001$).
$^{*}$~$p<0.1$; $^{**}$~$p<0.05$; $^{***}$~$p<0.01$.
\end{minipage}
\end{table}

We now conduct the $ac$ profiling analysis as discussed in Section \ref{sec:profiling}.
Tables \ref{tab:ohie_profiling_doc}--\ref{tab:ohie_profiling_log_cost} compare baseline characteristics of always-selected compliers $ac$ and the full sample across the three utilization outcomes. A consistent pattern emerges. On average, for outpatient visits and prescription drugs, $ac$ are somewhat older (about 1.7 years), more likely to be White, and substantially less likely to be Hispanic, while gender differences are small. Educational attainment is broadly similar, with a slight shift toward some college among $ac$. Differences in lottery timing indicate somewhat greater representation in earlier waves. The most pronounced and robust difference across all outcomes concerns household composition: $ac$ are about 10 percentage points more likely to live alone and correspondingly less likely to be in two-person households. For log total expenditure, these differences are attenuated--age and most demographics are very similar across groups--while household structure and lottery-wave composition remain the main margins of divergence.

Taken together, these patterns indicate mild but systematic selection into the always-selected complier group, primarily along age (for utilization outcomes) and living arrangements. These characteristics are plausibly related to baseline healthcare use conditional on positive utilization. For example, older individuals and single-person households may have more stable or regular interaction with the healthcare system. At the same time, the absence of strong differences in education and other covariates within an already low-income and previously uninsured population suggests that selection is somewhat limited in scope. Given the imprecision of the estimated intensive margin effects, these observable differences do not point to a clear direction for extrapolating $\theta_{SLATE}$ to the full sample. If anything, they suggest modest heterogeneity in baseline utilization intensity, but not a strong or systematic gradient that would allow one to infer whether intensive margin responses should be larger or smaller outside the $ac$ subpopulation.

\section{Supplementary Material for JC and OHIE}\label{app:supp-empirical}
\subsection{Implementation Details} 
We summarize the implementation of the DML procedure used in both applications. All nuisance functions are estimated using generalized random forests \citep{athey2019generalized}, as implemented in the \texttt{grf} package in
\texttt{R}. Propensity scores for the instrument and the joint probabilities of
treatment and selection status are estimated via probability forests, while
conditional mean outcomes are estimated via regression forests. The
conditional distribution function of the outcome, which is required for the
trimming quantiles entering the bounds, is estimated by applying probability
forests pointwise on a fine grid of trimming values with step size 0.005, and
the resulting estimates are post-processed by isotonic regression to enforce
monotonicity. For all forests, the minimum leaf size is set to $\lfloor
n^{0.65}\rfloor $ and each forest comprises 1{,}000 trees. Under weak
monotonicity, where covariate adjustment is performed, nuisance estimates
are obtained via $K=5$-fold cross-fitting so that the predicted nuisances
for each observation are generated by a model trained on the complementary
folds, thereby avoiding overfitting bias in the moment conditions. Under
strong sample selection monotonicity, no covariates are included, so the nuisance parameters
reduce to simple unconditional averages over the whole sample and cross-fitting is not required.

\subsection{Balancing Tables for Original Assignment}
\subsubsection{Job Corps}
Because the proposed bounds in Section~\ref{sec:bounds_covariates} allow the
direction of sample selection to vary with predetermined covariates, we
exploit the 27 baseline characteristics available in the \cite{chen2015bounds} data. 
The same set of covariates is used in \cite{lee2009training}, and also explored
partially in \cite{semenova2025generalized}. These covariates include gender, age at
application, months employed in the previous year, usual weekly hours and
weekly earnings at the most recent job, race and ethnicity indicators,
indicators for children and marital status, parents' education, number of
children, highest grade completed, employment at random assignment and in
the year before random assignment, lagged earnings, household and personal
income categories, and an indicator for whether the person had ever been
arrested. Following \cite{schochet2008does}, we use design weights throughout
because some subpopulations were randomized into the program group with
differing, but known, probabilities for programmatic reasons. Table \ref{tab:jc_balance} reports weighted summary statistics for the groups $Z=0$ and $Z=1$
as well as their mean differences. The covariates are broadly balanced, as
expected under randomization: only age and father's education completed are
marginally significant at the 10\% level, and usual weekly hours at the most
recent job is significant at the 5\% level.

\begin{table}[!htbp]
\centering
\caption{Baseline Covariates: Job Corps}
\label{tab:jc_balance}
\footnotesize
\begin{tabular}{@{}l lll@{}}
\toprule
Covariate & $Z=0$ & $Z=1$ & Diff. (Std.\ Err.) \\
\midrule
Female       & 0.458  & 0.454  & \diffcell{-0}{004}{~$(0.011)$}          \\
Age (in yrs.) at baseline     & 18.35 & 18.44 & \diffcell{0}{087}{~$(0.046)^{*}$}       \\

Black, non-Hispanic        & 0.491  & 0.494  & \diffcell{0}{003}{~$(0.011)$}           \\
Hispanic       & 0.172  & 0.169  & \diffcell{-0}{003}{~$(0.008)$}          \\
Other race/ethnicity & 0.074  & 0.072  & \diffcell{-0}{002}{~$(0.006)$}          \\

Married   & 0.023  & 0.020  & \diffcell{-0}{003}{~$(0.003)$}          \\
Living together   & 0.040  & 0.039  & \diffcell{-0}{002}{~$(0.004)$}          \\
Separated  & 0.021  & 0.024  & \diffcell{0}{003}{~$(0.003)$}          \\
Has Children   & 0.193  & 0.189  & \diffcell{-0}{004}{~$(0.008)$}          \\
Number of children    & 0.268  & 0.270  & \diffcell{0}{002}{~$(0.014)$}           \\
Education      & 10.11 & 10.12 & \diffcell{0}{013}{~$(0.034)$}           \\

Mother's education & 11.47 & 11.49 & \diffcell{0}{024}{~$(0.051)$}           \\
Father's education & 11.50 & 11.41 & \diffcell{-0}{089}{~$(0.048)^{*}$}      \\
Ever arrested  & 0.249  & 0.248  & \diffcell{-0}{001}{~$(0.009)$}          \\

Household income: & & & \\
        \quad\quad\quad\quad$ [\$ 3,000, \$ 6,000)$   & 0.208  & 0.206  & \diffcell{-0}{002}{~$(0.007)$}          \\
        \quad\quad\quad\quad$ [\$ 6,000, \$ 9,000)$  & 0.114  & 0.116  & \diffcell{0}{002}{~$(0.006)$}           \\
        \quad\quad\quad\quad$ [\$ 9,000, \$ 18,000)$   & 0.245  & 0.245  & \diffcell{0}{001}{~$(0.008)$}           \\
        \quad\quad\quad\quad$\geq \$ 18,000$   & 0.181  & 0.179  & \diffcell{-0}{001}{~$(0.007)$}          \\
Personal income: & & & \\
        \quad\quad\quad\quad$ [\$ 3,000, \$ 6,000)$   & 0.131  & 0.128  & \diffcell{-0}{003}{~$(0.007)$}          \\
        \quad\quad\quad\quad$ [\$ 6,000, \$ 9,000)$   & 0.046  & 0.053  & \diffcell{0}{006}{~$(0.004)$}           \\
        \quad\quad\quad\quad$\geq \$ 9,000$   & 0.034  & 0.032  & \diffcell{-0}{002}{~$(0.004)$}          \\

At baseline: & & & \\
          \quad\quad\quad\quad Has job   & 0.192  & 0.198  & \diffcell{0}{007}{~$(0.009)$}           \\
          \quad\quad\quad\quad Months worked, previous year     & 3.530  & 3.603  & \diffcell{0}{074}{~$(0.093)$}           \\
          \quad\quad\quad\quad Had a job, previous year  & 0.627  & 0.635  & \diffcell{0}{008}{~$(0.010)$}           \\
          \quad\quad\quad\quad Earnings, previous year  & 2815 & 2909 & \diffcell{94}{07}{~$(111.7)$}            \\
          \quad\quad\quad\quad Weekly hours, most recent job     & 20.91 & 21.83 & \diffcell{0}{922}{~$(0.454)^{**}$}      \\
          \quad\quad\quad\quad Weekly earnings, most recent job      & 102.9  & 111.1  & \diffcell{8}{183}{~$(5.241)$}           \\
\midrule
$N$          & 3,599  & 5,491  &                             \\
\bottomrule
\end{tabular}
\par\medskip
\begin{minipage}{0.82\linewidth}
\footnotesize
\textit{Notes:} The table reports sample means and mean differences. All statistics use design weights. The p-value for the joint test that all covariate coefficients in a logit regression of $Z$ on the full set of covariates are zero is 0.549.
$^{*}$~$p<0.1$; $^{**}$~$p<0.05$; $^{***}$~$p<0.01$.
\end{minipage}
\end{table}

\subsubsection{Oregon Health Insurance Experiment}
We use the publicly available survey data collected and analyzed by
\cite{finkelstein2012oregon}. The survey was administered by mail in seven
waves during July and August 2009 and includes 29,589 individuals from
lottery-selected households and 28,816 individuals from non-selected
households. The data also contain pre-randomization demographic information
collected at the time of lottery sign-up. Covariates $X$ include gender, age in
years, race and ethnicity indicators, education categories, survey-wave
fixed effects and household-size fixed effects. Table \ref{tab:ohie_balance} reports
weighted summary statistics for the $Z=0$ and $Z=1$ groups, along with their
mean differences, using survey weights. The analysis sample size varies by
outcome and ranges from 16,868 to 22,064 because of outcome-specific item
nonresponse and restrictions to observations with nonmissing treatment,
instrument, outcome, and covariates.

\begin{table}[!htbp]
\centering
\caption{Baseline Covariates: Oregon Health Insurance Experiment}
\label{tab:ohie_balance}
\footnotesize
\begin{tabular}{@{}l lll@{}}
\toprule
Covariate & $Z=0$ & $Z=1$ & Diff.\ (Std.\ Err.) \\
\midrule
Age (in yrs.)                  & 42.57 & 42.69 & \diffcell{0}{087}{~$(0.178)$}           \\
Female                         & 0.596 & 0.583 & \diffcell{-0}{012}{~$(0.007)$}          \\
White, non-Hispanic            & 0.827 & 0.815 & \diffcell{-0}{012}{~$(0.006)^{**}$}     \\
Black, non-Hispanic            & 0.038 & 0.032 & \diffcell{-0}{005}{~$(0.003)^{*}$}      \\
Hispanic                       & 0.123 & 0.129 & \diffcell{0}{007}{~$(0.005)$}           \\
American Indian/Alaska Native  & 0.067 & 0.064 & \diffcell{-0}{003}{~$(0.004)$}          \\
Asian                          & 0.026 & 0.030 & \diffcell{0}{003}{~$(0.002)$}           \\
Other race/ethnicity           & 0.103 & 0.108 & \diffcell{0}{005}{~$(0.005)$}           \\

Education: & & & \\
    \quad\quad\quad\quad No high school diploma          & 0.175 & 0.171 & \diffcell{-0}{005}{~$(0.006)$}          \\
    \quad\quad\quad\quad High school diploma or GED      & 0.492 & 0.499 & \diffcell{0}{007}{~$(0.007)$}           \\
    \quad\quad\quad\quad Some college or vocational      & 0.220 & 0.223 & \diffcell{0}{003}{~$(0.006)$}           \\
    \quad\quad\quad\quad Four-year college degree or higher & 0.113 & 0.107 & \diffcell{-0}{005}{~$(0.005)$}       \\

Wave of lottery draws: & & & \\
    \quad\quad\quad\quad 1 & 0.082 & 0.146 & \diffcell{0}{064}{~$(0.005)^{***}$}     \\
    \quad\quad\quad\quad 2 & 0.081 & 0.152 & \diffcell{0}{070}{~$(0.005)^{***}$}     \\
    \quad\quad\quad\quad 3 & 0.074 & 0.153 & \diffcell{0}{078}{~$(0.005)^{***}$}     \\
    \quad\quad\quad\quad 4 & 0.148 & 0.133 & \diffcell{-0}{015}{~$(0.005)^{***}$}    \\
    \quad\quad\quad\quad 5 & 0.151 & 0.132 & \diffcell{-0}{018}{~$(0.005)^{***}$}    \\
    \quad\quad\quad\quad 6 & 0.241 & 0.166 & \diffcell{-0}{075}{~$(0.006)^{***}$}    \\
    \quad\quad\quad\quad 7 & 0.224 & 0.117 & \diffcell{-0}{105}{~$(0.005)^{***}$}    \\

Household members listed: & & & \\
    \quad\quad\quad\quad 1  & 0.744 & 0.659 & \diffcell{-0}{084}{~$(0.007)^{***}$}    \\
    \quad\quad\quad\quad 2  & 0.255 & 0.336 & \diffcell{0}{080}{~$(0.007)^{***}$}     \\
    \quad\quad\quad\quad 3+ & 0.001 & 0.005 & \diffcell{0}{004}{~$(0.001)^{***}$}     \\

\midrule
$N$                            & 11,203 & 11,107 &                            \\
\bottomrule
\end{tabular}
\par\medskip
\begin{minipage}{0.84\linewidth}
\footnotesize
\textit{Notes:} The table reports sample means and mean differences. All statistics use 12-month survey design weights. Wave of lottery draws refers to the wave number of the lottery draw in which the household's name(s) appeared on the waitlist. Randomization was within each wave. Household members listed is the number of household members entered on the lottery waitlist. The $p$-value for the joint test that the coefficients on Age, Female, Race/Ethnicity, and Education are jointly zero in a logit regression of $Z$ on these covariates, lottery-wave fixed effects, and household-size fixed effects is 0.575.
$^{*}$~$p<0.1$; $^{**}$~$p<0.05$; $^{***}$~$p<0.01$.
\end{minipage}
\end{table}

\end{document}